\DeclareMathOperator*{\argmax}{arg\,max}
\DeclareMathOperator*{\argmin}{arg\,min}
\newcommand{\Q}{\boldsymbol{q}}
\newcommand{\A}{\mathcal{A}}
\newcommand{\U}{\mathcal{U}}
\newcommand{\eps}{\varepsilon}
\renewcommand{\epsilon}{\varepsilon}
\renewcommand{\Re}{\mathbb{R}}
\newcommand{\sphere}{\mathbb{S}}
\newcommand{\qtime}{\mathbb{T}}
\renewcommand{\paragraph}[1]{\smallskip\noindent{\bf {#1. }}}
\newcommand{\dom}{{\texttt{dom}}}
\newcommand{\query}{\Q}
\newcommand{\D}{\mathbf{D}}
\newcommand{\I}{\mathcal{I}}
\newcommand{\net}{\mathcal{C}}
\newcommand{\tree}{\mathcal{T}}
\newcommand{\node}{u}
\newcommand{\prob}{\textsf{OracleCluster}{}}
\newcommand{\DS}{\points}
\newcommand{\Rep}{\mathcal{P}}
\newcommand{\Synop}{\mathcal{S}}
\newcommand{\polylog}{\text{\,}\mathrm{polylog}\,}
\newcommand{\F}{\mathcal{F}}
\newcommand{\M}{\mathcal{M}}
\newcommand{\err}{\textsc{Err}}
\newcommand{\pred}{\textsc{Pred}}
\newcommand{\new}[1]{#1}
\newtheorem{problem}{Problem}
\newcommand{\sets}{\mathcal{P}}
\newcommand{\setsize}{N}
\newcommand{\points}{P}
\newcommand{\pointsize}{n}
\newcommand{\totalsize}{\mathcal{N}}
\newcommand{\rect}{R}
\newcommand{\rec}{\rho}
\newcommand{\interval}{\theta}
\newcommand{\intervalL}{a_{\theta}}
\newcommand{\intervalU}{b_{\theta}}
\renewcommand{\prob}{\varphi}
\newcommand{\epsample}{\mathsf{S}}
\newcommand{\rangetree}{\mathcal{T}}
\newcommand{\out}{|J|}
\newcommand{\outt}{\mathsf{OUT}}
\renewcommand{\vector}{\vec{v}}
\renewcommand{\O}{\tilde{O}}
\newcommand{\threshold}{\tau}
\newcommand{\score}{\omega}
\newcommand{\kernel}{\mathsf{S}}
\newcommand{\bbdtree}{\mathcal{T}}
\newcommand{\conj}{m}
\newcommand{\probTopk}{preference}
\newcommand{\timeSynop}{\Lambda}
\newcommand{\problemDI}{\textsf{Ptile}\xspace}
\newcommand{\problemCDI}{\textsf{CPtile}\xspace}
\newcommand{\problemFDI}{\textsf{FPtile}\xspace}
\newcommand{\problemDIk}{\textsf{Pref}\xspace}
\newcommand{\problemCDIk}{\textsf{CPref}\xspace}
\newcommand{\problemFDIk}{\textsf{FPref}\xspace}
\begin{document}
\title{A Theoretical Framework for Distribution-Aware Dataset Search}

\author{Aryan Esmailpour}
\affiliation{%
  \institution{Department of Computer Science, University of Illinois Chicago}
  \city{Chicago}
  \country{USA}}
\email{aesmai2@uic.edu}

\author{Sainyam Galhotra}
\affiliation{%
  \institution{Department of Computer Science, Cornell University}
  \city{Ithaca}
  \country{USA}}
\email{sg@cs.cornell.edu}

\author{Rahul Raychaudhury}
\affiliation{%
  \institution{Department of Computer Science, Duke University}
  \city{Durham}
  \country{USA}}
\email{rahul.raychaudhury@duke.edu}

\author{Stavros Sintos}
\affiliation{%
  \institution{Department of Computer Science, University of Illinois Chicago}
  \city{Chicago}
  \country{USA}}
\email{stavros@uic.edu}

\begin{abstract}
Effective data discovery is a cornerstone of modern data-driven decision-making. Yet, identifying datasets with specific distributional characteristics, such as percentiles or preferences, remains challenging. While recent proposals have enabled users to search based on percentile predicates, much of the research in data discovery relies on heuristic methods, which often result in biased outcomes. This paper presents the first theoretically backed framework that unifies data discovery under centralized and decentralized settings. 

More specifically, let $\sets=\{\points_1,\ldots, \points_\setsize\}$ be a 
repository of $\setsize$ datasets, such that each $\points_i\subset \Re^d$, where $d$ is a constant. 
We study the percentile-aware indexing (\problemDI) problem and the \probTopk-aware indexing  (\problemDIk) problem under the centralized and the federated setting.
In the centralized setting, we assume direct access to the datasets in $\sets$. In the federated setting we are given a synopsis $\Synop_{\points_i}$ which is a compressed representation of $\points_i$ that captures the structure of $\points_i$, for every $i\in[\setsize]$.
For the \problemDI problem, the goal is to construct a data structure such that given a predicate (query rectangle $\rect$ and an interval $\interval$) report all indexes $J$ such that $j\in J$ if and only if $\frac{|\points_j\cap \rect|}{|\points_j|}\in\interval$.
For the \problemDIk problem, the goal is to construct a data structure such that given a predicate (query vector $\vector$ and an interval $\interval$) report all indexes $J$ such that $j\in J$ if and only if $\score_k(\points_j,\vector)\in \interval$, where $\score_k(\points_j,\vector)$ is the score (inner-product) of the $k$-th largest projection of $\points_j$ on $\vector$.
We first show lower bounds for the \problemDI and \problemDIk problems in the centralized setting, showing that we cannot hope for near-linear data structures with polylogarithmic query time. Then we focus on approximate data structures for both problems in both settings. We show $\O(\setsize)$ space data structures with $\O(\setsize)$ preprocessing time, that can answer \problemDI and \problemDIk queries in $\O(1+\outt)$ time, where $\outt$ is the output size. The data structures return a set of indexes $J$ such that: i) for every $\points_i$ that satisfies the predicate, $i\in J$ and ii) if $j\in J$ then $\points_j$ satisfies the predicate up to an additive error of $\eps+2\delta$, where $\eps$ is an arbitrarily small constant and $\delta$ is the error of the synopses.
\end{abstract}

\maketitle



\section{Introduction}
\label{sec:intro}

In today's data-driven world, efficient and effective data discovery~\cite{castelo2021auctus, castro_fernandez_aurum_2018, galhotra_metam_2023,bharadwaj2021discovering, bogatu2020dataset, esmailoghli2022mate, gong2023ver, koutras_valentine_2021, miller2018making, zhang_dsdd_2021, zhang_finding_2020, gong2024nexus} plays a pivotal role in unlocking the true potential of information collected from diverse sources. With the rise of deep learning, organizations increasingly rely on vast and varied datasets to drive innovation and gain competitive advantage.

However, existing data discovery systems 
primarily rely on keyword-based searches or examples to identify relevant datasets~\cite{castelo2021auctus, noy_google_2019}. 
While these methods are ubiquitous, they fall short in scenarios where users have a specific distributional requirement. For instance, users developing machine learning models often require datasets that ensure balanced representation across various groups to prevent overfitting or biases in their models.  
Furthermore, the problem of selection bias (lack of representative samples) often leads to flawed analyses and unreliable outcomes. These shortcomings are particularly pronounced in industrial settings~\cite{dastin_amazon_2022, mulshine_major_2015, rose_are_2010, townsend_most_2017}, where datasets are frequently repurposed or reused in contexts different from their original design~\cite{bethlehem_selection_2010, culotta_reducing_2014, greenacre_importance_2016, zhu_consistent_2023}.  Therefore, it is important to develop advanced data discovery techniques that allow users to search for datasets that satisfy their distributional needs. We demonstrate this with the following example.

\begin{example} \label{example1}
\new{Consider an economist who wants to understand the various reasons behind high crime in Brooklyn, NY (more generally, in any specific neighborhood in the US). To ensure statistical significance of the analysis, the economist searches for data that contain a substantial fraction of the data points (say at least 10\%) from Brooklyn.
More generally, such support requirement would hold even if they want to analyze crime (or other factors affecting development) across different cities across the United States. Several prior works~\cite{behme2024fainder,gong2024nexus,galhotra_metam_2023,gong2023ver} on data discovery have considered such queries for data analysis on open data repositories which contain roughly 100K datasets. }


\new{Additionally, the economist may want to identify cities which have at least k neighborhoods with high or low quality of life, modeled as a linear function of crime, pollution, healthcare and other factors that they are interested to consider.}


    \new{In summary, the economist often faces two types of distributional requirements for  data search: (i) The fraction of data points within a specific range (geographical region)  must meet or exceed a specified threshold, such as x\%. We refer to this query as a percentile-query.
    (ii) The value (or score) of a specific combination of attributes from at least $k$ data points must exceed a threshold, such as $\threshold$. We formalize this notion as a preference query.}

\end{example}

The prior work on distribution aware search suffer from two main challenges:
First, these techniques cannot handle preference requirements and only support one-sided percentile queries. For example, a query to identify datasets that contain at least $10\%$ points in a rectangular area (that includes Brooklyn) can not be answered with these tools.
Second, all these techniques employ heuristics to identify the best datasets. More generally, prior work on  data discovery employs heuristics (for example in keyword based data discovery techniques~\cite{castelo2021auctus, noy_google_2019}), which do not guarantee the effectiveness of the search algorithm. 
This means that the search result may miss certain datasets, even though it satisfies user's query. Missing any dataset that satisfies user query can percolate biases into the system. 
Missing datasets is generally unacceptable in data marketplaces, where it is extremely crucial to ensure that all datasets that satisfy the query are available to the user. Therefore, it is important to develop a theoretically rigorous framework for data discovery, which guarantees to return all relevant datasets for the user's query.


In this work, we address these limitations and present the first theoretical framework for data discovery which helps users to identify datasets that satisfy distributional requirements.
\new{Next, we describe a new general theoretical framework that is used to formally define various practical problems from distribution-aware dataset search. Later, we show two of the most common subclasses of problems that we focus in this paper, derived from our new framework.}


\subsection{\new{Theoretical Framework}}
\label{subsec:framework}
\new{In this section, we present a new theoretical framework for data discovery.}

\paragraph{Dataset}
A \emph{schema} $\A = (A_1, A_2, \ldots, A_d)$ is an ordered tuple of attributes. A \emph{dataset} $\DS$ with schema $\A$ is a finite set of data entries, where each entry $p\in \DS$ is a $d$-tuple $p = (p_1, p_2, \ldots, p_d)$ such that $p_i \in \dom(A_i)$ for $i = 1, 2, \ldots, d$. Thus, $\DS \subseteq \dom(A_1) \times \dom(A_2) \times \cdots \times \dom(A_d)$. In this paper, we focus on datasets where all attributes are numerical, i.e., $\dom(A_i) = \mathbb{R}$ for all $i$. Hence, $\DS \subset \mathbb{R}^d$. We denote the schema of a dataset $\DS$ as $\textsc{Schema}(\DS)$. In all cases, we assume that $d=O(1)$.

\paragraph{Repository} A \emph{repository} \(\Rep\) is a collection of datasets \(\Rep = \{\DS_1, \DS_2, \ldots, \DS_{\setsize}\}\), all of which share the same schema, i.e., \(\textsc{Schema}(\DS_i) = \textsc{Schema}(\DS_j)\) for all \(i, j\). \footnote{The assumption that all datasets contain only numerical attributes and share the same schema is not strictly required. These assumptions are adopted to simplify the presentation. For our framework to be applicable, it suffices for datasets in $\Rep$ to share at least some numerical attributes. }
For each $i\in[\setsize]$, let $\pointsize_i=|\DS_i|$ be the size of dataset $\DS_i$, and $\totalsize=\sum_{i\in[\setsize]}\pointsize_i$.

\paragraph{Measure Function}  A \emph{measure function} \(\M(\cdot)\), is a function that, when applied to a dataset \(\DS\), assigns a numerical value \(\M(\DS) \in \mathbb{R}\) to quantify a specific property or characteristic of the dataset. We assume that \(\M(\cdot)\) is applied to a dataset \(\DS\) only if \(\M(\DS)\) is well-defined. For example, a commonly used measure function is the percentile measure function $\M_{\rect}$ over a rectangle $\rect$, which is defined as $\M_{\rect}(\DS)=\frac{|\rect\cap \DS|}{|\DS|}$.


A set of measure functions that quantify similar properties is called a \emph{class}, which we typically denote by \(\F\). For example, a the class of percentile measure functions consists of all measure functions $\M_{\rect}$ such as $\rect$ is any axis-parallel rectangle in $\Re^d$.





\paragraph{Predicates} A \emph{predicate} \(\pred(\cdot)\) is a function that, given a dataset \(\DS\), evaluates to \(\pred(\DS) \in \{\textsf{True}, \textsf{False}\}\). For a measure function \(\M\) and an interval \(\theta = [\intervalL, \intervalU]\), where \(\intervalL, \intervalU \in \mathbb{R}\) and \(\intervalL \leq \intervalU\), a \emph{range-predicate} \(\pred_{\M, \theta}(\DS)\) returns \textsf{True} if \(\M(\DS) \in \theta\). If $\theta$ is a one-sided interval (for instance, $\interval=[\intervalL,\infty)$), we call \(\pred_{\M, \theta}(\cdot)\) a \emph{threshold-predicate}. 
Complex predicates can be formed by combining individual predicates using conjunction and, disjunctions.
For example, consider the complex predicate \(\Pi = \pred_{\M,\interval} \land \pred_{\M',\interval'}\), over two measure functions $\M$ and $\M'$ of the same class. The expression \(\Pi(\DS)\) evaluates to \textsf{True} if both \(\pred_{\M,\interval}(\DS)\) and \(\pred_{\M',\interval'}(\DS)\) are \textsf{True}.




\paragraph{Synopsis}
Let $\F$ be a class of measure functions. For a dataset $\DS$, a \emph{synopsis} (or \emph{sketch}) with respect to the class $\F$, denoted by $ \Synop_{\DS}^{\F}$, is a compressed representation of $\DS$ that captures the structure of $\DS$ with regards to the class $\F$. 
For example, the class of percentile measure functions is defined as the collection of functions $\M_\rect$ over every rectangle $\rect$.
Ideally, for any function $\M \in \F$, \(
\M(\DS) \approx \M(\Synop_{\DS}^{\F}),
\)  where $\M(\Synop_{\DS}^{\F})$ is the result of evaluating the measure function $\M$ on $\Synop_{\DS}^{\F}$. \footnote{We assume $\M(\Synop_{\DS}^{\F})$ is well-defined.} 
We ignore the superscript and only write $\Synop_{\DS}$, when the class $\F$ is clear from context.
For a function $\M\in\F$, we denote the error of $\Synop_{\DS}$ on $\M$ as $\err_{\Synop_{\DS}}(\M)$, where the specific definition of  $\err_{\Synop_{\DS}}(\cdot)$ depends on the class $\F$ and the dataset $\DS$. We define $\err_{\Synop_{\DS}}(\F):=\max_{\M\in\F}\err_{\Synop_{\DS}}(\M)$.

\paragraph{Distribution-Aware Indexing} Let \(\Rep = \{\DS_1, \DS_2, \ldots, \DS_{\setsize}\}\) be a repository, where each dataset \(\DS_i \subset \mathbb{R}^d\), and let \(\F\) be a class of measure functions. The general \emph{distribution-aware indexing problem} for $(\Rep,\F)$ aims to design an index \(\I\) that, given a logical expression \(\Pi\) (of constant size) involving predicates of the form \(\pred_{\M, \theta}\), where \(\M \in \F\) and \(\theta\) is an interval in \(\mathbb{R}\), efficiently computes the subset of datasets in \(\Rep\) that satisfy \(\Pi\). Formally, the index \(\I\) should answer a \emph{distribution-aware dataset query} \(\query_{\Pi}\):
\vspace{-1em}
\[
\query_\Pi(\Rep):= \{i \in [\setsize]\mid \Pi(\DS_i) = \textsc{True}\}.
\]

\par There are two natural variants of the problem.
In the \emph{centralized} setting, 
we have direct and full access to all the datasets in \(\Rep\) for building \(\I\).
In the \emph{federated} setting,
we do not have direct access to \(\Rep\). Instead, we have access to a collection of synopses \(\Rep_{\Synop} = \{\Synop_{\DS_1}, \Synop_{\DS_2}, \ldots, \Synop_{\DS_{\setsize}}\}\), where \(\Synop_{\DS_i}\) is a synopsis of \(\DS_i\) with respect to \(\F\). We note that if $\Synop_{\points_i}=\points_i$, for every $i\in[\setsize]$, then this is equivalent to the centralized setting.

\par It is evident that, \(\query_{\Pi}(\Rep)\) can be answered exactly in the centralized setting, where direct access to all datasets in \(\Rep\) is available. In the federated setting, however, the ability to evaluate  \(\Pi(\DS_i)\), for any $i$, critically depends on the quality of the synopsis \(\Synop_{\DS_i}\) provided for each dataset \(\DS_i\). Specifically, the ability to correctly decide any predicate \(\pred_{\M_j, \theta_j}(\DS_i)\) is influenced by the error \(\err_{\Synop_{\DS_i}}(\M)\), which is a measure of deviation between \(\M(\Synop_{\DS_i})\) and  \(\M(\DS_i)\).


\subsection{\new{Problem definition}}
\label{subsec:probldef}
In order to define a problem based on the proposed framework in Section~\ref{subsec:framework}, someone needs to specify the class of measure functions.
In this paper, we propose efficient data structures for two of the most common classes of measure functions in the distribution-aware indexing problem.

\paragraph{Percentile-aware indexing} Let \(\DS \subset \mathbb{R}^d\) be a dataset and let \(\rect\) be an axis-parallel rectangle in \(\mathbb{R}^d\). A \emph{percentile measure function} \(\M_{\rect}\) is defined as:
$\M_\rect(\DS) := \frac{| \rect \cap \DS |}{| \DS |}$.
Interpreting \(\DS\) as a uniform discrete probability distribution with support defined by the points in $\DS$, \(\M_\rect(\DS)\) measures the mass of the distribution in the region \(\rect\). A percentile measure function is relevant when the user is interested in the proportion of tuples within a specific region of a dataset, e.g., the proportion of tuples from Brooklyn.
We denote the class of axis-parallel percentile measure functions in $\Re^d$ by:
\[
\F^d_{\Box} = \{\M_{\rect}(\cdot) \mid \rect \text{ is an axis-aligned rectangle in } \mathbb{R}^d\}.
\]
For the class $\F^d_{\Box}$, the commonly used synopses in practice include probability distributions over \(\mathbb{R}^d\), such as histograms, mixture models, and \(\varepsilon\)-samples
 (a formal definition of an $\eps$-sample is given in the next section).
 Such synopses support random sampling over the compressed representation of their datasets. 
The error in this context is defined as
$\err_{\Synop_{\points_i}}(\M_{\rect}) := |\M_{\rect}(\points_i) - \M_{\rect}(\Synop_{\points_i})|,$
where \(\M_{\rect}(\Synop_{\points_i}) := \Pr_{p \sim \Synop_{\points_i}}[p \in \rect]\).
For simplicity, we assume that for every $i\in[\setsize]$,
$\err_{\Synop_{\points_i}}(\F^d_{\Box})\leq\delta_i\leq \delta$, where $\delta\in[0,1)$ is a known arbitrarily small constant.
For the class of measure functions $\F^d_{\Box}$, the problem derived by our new framework is better known in the literature as \emph{percentile-aware indexing} \textbf{(\problemDI)} \cite{behme2024fainder,asudeh2022towards}.

\begin{problem}[\problemDI]
\label{prob1}
\new{    Let $\Rep$ be a repository over a set of (possibly) unknown datasets $\{\DS_1,\ldots, \DS_N\}$, where $\DS_i\subset \Re^d$ for every $i\in[N]$ and $d=O(1)$.
We are given a family of $N$ synopses $\{\Synop_{\DS_1},\ldots, \Synop_{\DS_N}\}$, with regards to the class of measure functions $\F^d_{\Box}$, such that $\err_{\Synop_{\points_i}}(\F^d_{\Box})\leq \delta$ for every $i\in[\setsize]$, for $\delta\in [0,1)$. The goal is to construct a data structure such that, given a logical expression $\Pi$ over a constant number of range-predicates of the form $\pred_{\M_\rect,\interval}$, where $\M_\rect\in \F^d_{\Box}$, it returns $\query_\Pi(\Rep)$.}
\end{problem}
In the centralized setting ($\delta=0$), we call the problem the \emph{centralized percentile-aware indexing} \textbf{(\problemCDI)}, and in the federated setting ($\delta>0$), \emph{federated percentile-aware indexing} \textbf{(\problemFDI)}.

 \medskip
\paragraph{Preference-aware Indexing} Let \(\DS \subset \mathbb{R}^d\) be a dataset. For a unit vector \(\vector \in \mathbb{R}^d\) and a point \(p \in \DS\), define the score of point $p$ with respect to vector $\vector$ as \(\score(p, \vector) = \langle p, \vector \rangle\), i.e., the inner product of \(p\) with \(\vector\). For an integer \(k > 0\), let \(\score_k(\DS, \vector)\) denote the score of the point in \(\DS\) with the \(k\)-th largest inner-product with respect to \(\vector\). A \emph{top-k preference measure function} is defined as:
$\M_{\vector,k}(\DS) := \score_k(\DS, \vector)$.
A top-\(k\) preference measure function is relevant when the user is interested in the top-\(k\) scores of a dataset with respect to some linear function over the attributes, e.g., cities with $k$ neighborhoods with high/low quality of life, modeled as a linear function.
We denote the class of top-k preference measure functions in $\Re^d$ by 
\vspace{-0.5em}
\[
\vec{\F}^d_k = \{\M_{\vector,k}(\cdot) \mid \vector \text{ is a unit-vector in } \mathbb{R}^d\}.
\vspace{-0.4em}
\]
For the class $\vec{\F}^d_{k}$, then a commonly used synopsis in practice include a kernel~\cite{agarwal2017efficient, kumar2018faster, yu2012processing} or a histogram.
Such synopses have a mechanism to return
$\M_{\vector,k}(\Synop_\points)$, i.e., an estimation of the $k$-th largest score in dataset $\points$ with respect to unit vector $\vector$.
The error in this context is also defined as
$\err_{\Synop_{\DS}}(\M_{\vector,k}) := |\M_{\vector,k}(\DS)-\M_{\vector,k}(\Synop_{\DS})|$.
For simplicity, we assume that for every $i\in[\setsize]$,
$\err_{\Synop_{\points_i}}(\vec{\F}^d_{k})\leq\delta_i\leq \delta$, where $\delta\in[0,1)$ is a known arbitrarily small constant.
For the class of measure functions $\vec{\F}^d_{k}$, the problem derived by our new framework is called \emph{\probTopk-aware indexing} \textbf{(\problemDIk)}.

\begin{problem}[\problemDIk]
\label{prob2}
\new{    Let $\Rep$ be a repository over a set of (possibly) unknown datasets $\{\DS_1,\ldots, \DS_N\}$, where $\DS_i\subset \Re^d$ for every $i\in[N]$ and $d=O(1)$. Let $k$ be a positive integer number.
We are given a family of $N$ synopses $\{\Synop_{\DS_1},\ldots, \Synop_{\DS_N}\}$, with regards to the class of measure functions $\vec{\F}^d_{k}$, such that $\err_{\Synop_{\points_i}}(\vec{\F}^d_{k})\leq \delta$ for every $i\in[\setsize]$, for $\delta\in [0,1)$. The goal is to construct a data structure such that, given a logical expression $\Pi$ over a constant number of threshold-predicates of the form $\pred_{\M_{\vector,k},\interval}$, where $\M_{\vector,k}\in \vec{\F}^d_{k}$, it returns $\query_\Pi(\Rep)$.}
\end{problem}
In the centralized setting ($\delta=0$), we call the problem \emph{centralized \probTopk-aware indexing} \textbf{(\problemCDIk)}, and in the federated setting ($\delta>0$) \emph{federated \probTopk-aware indexing} \textbf{(\problemFDIk)}.

\subsection{Our Results}\label{sec:results}

We use $\O(\cdot)$ notation to suppress exact dependencies on $\log \setsize$ and $\log \totalsize$. Detailed dependencies are fully specified in subsequent sections.
In all cases, we denote the set of reported indexes (corresponding to datasets in \(\sets\)) as \(J\), and sometimes refer to \(\outt = |J|\) as the output size.

\medskip
$\bullet$ (\textbf{Section~\ref{sec:lb}: Lower Bounds})
In Subsection~\ref{sec:lower-CDAI}, we focus on \problemCDI and rely on the conjectured hardness of the \emph{strong set intersection problem}  to show that no data structure with \(\O(\totalsize)\) size and \(\O(1+\outt)\) query time exists in \(\Re^2\), even with a single threshold-predicate (conditional lower bound). 
In Subsection~\ref{sec:lower:CPAI}, using the hardness of the \emph{halfspace reporting problem}, we show that no data structure of \(\O(\totalsize)\) size and \(\O(1+\outt)\) query time exists for the \problemCDIk problem in \(\Re^5\), even with one predicate (unconditional lower bound). 

\medskip

$\bullet$ (\textbf{Section~\ref{sec:Perc}: Approximate Data structures for \problemDI}) 
\new{
We first, design an approximate data structure for the \problemDI problem in \(\Re^d\), supporting a single range-predicate.
It has \(\O(\setsize)\) space, \(\O(\setsize)\) preprocessing time, and \(\O(1+\outt)\) query time.
For a query \(\Pi = \pred_{\M_{\rect}, \interval}\), with \(\rect\) as a query rectangle and \(\interval = [\intervalL, 1]\), the data structure returns \(J\) such that \(\query_\Pi(\sets) \subseteq J\), and if \(j \in J\), then  \(\intervalL - \eps - 2\delta \leq \M_{\rect}(\points_j) \leq \intervalU + \eps + 2\delta\), with high probability, where \(\eps \in (0,1)\) is a small constant.
Next, we extend the data structure to work for general intervals $\interval=[\intervalL, \intervalU]$ (range-predicate), having the same error and the same theoretical guarantees.}
 In Appendix~\ref{subsec:percConj}, we extend the structure further to handle logical expressions (supporting disjunctions and conjunctions) of $\conj$ range-predicates, where \(\conj = O(1)\), while retaining the same error bounds and same asymptotic complexities.
\medskip
$\bullet$ (\textbf{Section~\ref{sec:topK}: Approximate Data structures for \problemDIk})
We design an approximate data structure that works for the \problemDIk problem assuming one threshold-predicate. It has \(\O(\setsize)\) space, \(\O(\setsize)\) preprocessing time, and \(\O(1+\outt)\) query time. For a query \(\Pi = \pred_{\M_{\vector,k}, \interval}\), where \(\vector\) is a query unit vector and $\interval$ is a one-sided query interval, it returns a set \(J\) such that \(\query_\Pi(\sets) \subseteq J\), and if \(j \in J\), then \(\M_{\vector, k}(\points_j) \geq \intervalL - \eps - 2\delta\). In Appendix~\ref{subsec:topkpredm}, we extend this data structure to handle logical expressions of $\conj$ threshold-predicates, where \(\conj = O(1)\), while retaining the same error bounds and same asymptotic complexities.



\medskip 
$\bullet$ (\textbf{Additional Results})
All of our data structures satisfy the following properties: i) They extend to settings where \(\delta_i\)'s are unknown, with no global upper bound \(\delta\), while maintaining the same guarantees.
ii) They are dynamic, supporting \(\O(1)\) update time for the insertion or deletion of a synopsis. iii) They provide \(\O(1)\) delay guarantees, ensuring the time between reporting two consecutive indexes is \(\O(1)\).

\subsection{Related work}

In traditional database systems, there are two settings for query-driven dataset search: data lakes and metadata-based search.
Data lakes usually assume full data access and their proposed solutions do not have theoretical guarantees~\cite{bharadwaj2021discovering, bogatu2020dataset, esmailoghli2022mate, fernandez2018aurum, gong2024nexus, gong2023ver, miller2018making,nargesian_data_2023, nargesian_organizing_2020, ouellette_ronin_2021}.
Existing metadata-based dataset search engines, such as~\cite{brickley2019google, castelo2021auctus}, are great engineering solutions but they do not provide theoretical guarantees or/and are limited to keyword search.


Asudeh and Nargesian~\cite{asudeh2022towards} developed a system vision for distribution-aware data discovery. While their vision targets the data lake setting, it highlights the importance of distributional queries over dataset collections. Similarly, Nargesian et al.~\cite{nargesian2021tailoring} proposed distribution tailoring to meet fairness requirements in the same setting.
These methods require raw data access and they do not provide theoretical error guarantees for our studied problems.

The definition of \problemDI problem was first given in~\cite{behme2024fainder}, which built a federated search technique assuming that histograms represent the synopsis of each dataset $\points_i$.
While their method detects all datasets whose percentile measure function is in $\interval$, its query time is super linear with respect to the number of datasets.
Furthermore, they only studied one-sided query predicates $\interval$ and their method does not extend to range predicates.

Our problems for the centralized setting can be modeled as range color reporting queries. Indeed, if every dataset has a different color, then the \problemCDI problem asks to report all colors in a query rectangle $\rect$ that contain a fraction (at least $\intervalL$ and at most $\intervalU$) of their points in $\rect$. Similarly, the \problemCDIk problem asks to report all colors such that the score of their $k$-th larger point on a vector $\vector$ is at least $\intervalL$. The paper closest to our work is~\cite{afshani2023range}, where the authors study range summary queries on colored points. Given a set of $n$ colored points $P\subset\Re^d$ the goal is to construct a data structure such that given a query rectangle $\rect$, it reports all colors that appear at least $\eps|P\cap \rect|$ times in $\rect$. There are fundamental differences with our problem definition and the proposed data structures. i) Our \problemCDI problem aims to report an index (color) $j$ if $\frac{|\points_j\cap \rect|}{|\points_j|}\in \interval$. In their setting they only handle one sided intervals, and $j$ is reported if $|\points_j\cap \rect|\geq \eps\cdot \sum_{i\in[\setsize]}|\points_i\cap \rect|$. 
It is not clear how the data structures in~\cite{afshani2023range} can be used to solve our problems efficiently.
ii) They design data structures only for $d\leq 3$.
iii) They do not consider the \problemFDI or the \problemFDIk problems.

Furthermore, there are more works on range queries in colored points such as, color reporting~\cite{gupta1995further,chan2020further}, color counting~\cite{kaplan2007counting}, approximate color counting~\cite{rahul2017approximate, nekrich2014efficient}, or more complex queries on colors such as entropy queries~\cite{krishnan2024range, esmailpour2024range}.
However, the techniques of these papers do not apply to our problem.
Lu and Tao~\cite{lu2023indexing} studied the orthogonal range reporting problem with keywords. 
Our studied problems are fundamentally different than the problem in~\cite{lu2023indexing} and it is not clear how their methods can be used for \problemDI and \problemDIk problems under the centralized and federated setting.

\section{Preliminaries}
\label{sec:prelim}
\paragraph{Hyper-rectangles} A hyper-rectangle $\rect$ in $\Re^d$ is usually defined in two ways. Either $\rect$ is defined by its two opposite corners $\rect^-, \rect^+\in \Re^d$, or $\rect$ is defined as the product of $d$ intervals $\rect=[\rect^-_1,\rect^+_1]\times\ldots\times [\rect^-_d,\rect^+_d]$, where $\rect^-_j, \rect^+_j$ are the $j$-th coordinates of the points $\rect^-, \rect^+$, respectively. For $d=2$, $\rect$ is a rectangle, and for $d=1$, $\rect$ is an interval. For simplicity, for every $d\geq 1$ we call $\rect$ a rectangle.
An open rectangle $\rect$ that is only defined by one of its corners is called an \emph{orthant}.

\paragraph{$\eps$-sample}
Let $\mathcal{S}=(\mathbf{X}, \mathcal{R})$ be a range space, where $\mathbf{X}$ is a set of points in $\Re^d$ and $\mathcal{R}$ is a set of ranges in $\Re^d$.
For $\eps\in [0,1]$, a subset $C\subseteq \mathbf{X}$ is an $\eps$-sample for $\mathbf{X}$, if for any range $\rect\in \mathcal{R}$, we have
$\bigg|\frac{|\mathbf{X}\cap R|}{|\mathbf{X}|}-\frac{|C\cap R|}{|C|}\bigg|\leq \eps$.
It is known by the $\eps$-sample theorem~\cite{vapnik1971uniform}, that if $(\mathbf{X}, \mathcal{R})$ is a range space with constant VC dimension, then a random subset $\epsample\subseteq \mathbf{X}$ of cardinality $O(\eps^{-2}(\log(\eps^{-1})+\log(\prob^{-1})))$, is an $\eps$-sample for $\mathbf{X}$ with probability at least $1-\prob$.
If $\mathcal{R}$ is a set of rectangles then the random subset $C$ has cardinality $O(\eps^{-2}\log\prob^{-1})$,~\cite{chazelle2000discrepancy, har2011geometric}. 
\new{The next lemma follows straightforwardly by our definitions and properties of $\eps$-samples. Its proof is shown in Appendix~\ref{appndx:prelim}.}
\begin{lemma}
\label{lem:helper1}
Let $\Synop_X$ be the synopsis of $X$ with respect to the class $\F^d_{\square}$. Assume that $\err_{\Synop_X}(\F_{\square}^d)=\delta$ for a parameter $\delta\in[0,1)$.
If $\epsample$ is a set of $O(\eps^{-2}\log \prob^{-1})$ random samples from $\Synop_X$, then $\epsample$ is an $(\eps+\delta)$-sample for $X$
with probability at least $1-\prob^{-1}$. 
\end{lemma}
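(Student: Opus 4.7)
The plan is to prove Lemma~\ref{lem:helper1} via a two-step triangle inequality argument, bridging $X$ to $\epsample$ through the intermediate distribution $\Synop_X$. The lemma essentially says that taking enough samples from the synopsis produces a valid $\eps$-sample for the original set, with the error of the synopsis paying an additive cost of $\delta$.

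First, I would invoke the classical $\eps$-sample theorem on the distribution $\Synop_X$. Since axis-aligned rectangles in $\Re^d$ form a range space of constant VC dimension (as noted in the Preliminaries, with sample bound $O(\eps^{-2}\log \prob^{-1})$ for rectangles), drawing $\epsample$ as $O(\eps^{-2}\log \prob^{-1})$ i.i.d.\ samples from $\Synop_X$ yields, with probability at least $1-\prob$, a set satisfying
\[
\bigl|\M_\rect(\Synop_X) - \M_\rect(\epsample)\bigr| \leq \eps
\]
for every axis-aligned rectangle $\rect$, where $\M_\rect(\epsample) = |\epsample\cap \rect|/|\epsample|$ and $\M_\rect(\Synop_X) = \Pr_{p\sim \Synop_X}[p\in\rect]$ as defined earlier.

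Second, I would use the hypothesis $\err_{\Synop_X}(\F^d_{\square})=\delta$, which, by definition of $\err_{\Synop_X}(\cdot)$, gives
\[
\bigl|\M_\rect(X) - \M_\rect(\Synop_X)\bigr| \leq \delta
\]
for every $\M_\rect\in \F^d_{\square}$. Combining the two bounds by the triangle inequality yields, for every axis-aligned rectangle $\rect$,
\[
\bigl|\M_\rect(X) - \M_\rect(\epsample)\bigr| \leq \eps + \delta,
\]
which is precisely the definition of $\epsample$ being an $(\eps+\delta)$-sample for $X$ with respect to rectangles. The probability bound of $1-\prob$ is inherited from the first step.

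There is essentially no deep obstacle here; the only subtlety worth stating carefully is that the $\eps$-sample theorem is applied to sampling from the distribution $\Synop_X$ (rather than from a finite point set), which is valid because the theorem holds for arbitrary probability measures over a range space of bounded VC dimension and because $\M_\rect(\Synop_X)$ is defined exactly as this measure of $\rect$. Everything else is a direct triangle inequality.
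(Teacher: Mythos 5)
Your proposal is correct and follows essentially the same route as the paper's proof: bound $|\M_\rect(X)-\M_\rect(\Synop_X)|$ by $\delta$ via the synopsis error, apply the $\eps$-sample theorem to the samples drawn from $\Synop_X$, and combine by the triangle inequality. The only difference is cosmetic (the paper writes the two bounds as chained inequalities rather than invoking the triangle inequality by name), so there is nothing to add.
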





\paragraph{$\eps$-net}
Let $\sphere^{d-1}$ be the unit sphere in $\Re^d$. A centrally symmetric set $\net\subseteq \sphere^{d-1}$ (i.e., if $u\in \net$, then $-u\in \net$) of $O\left(\eps^{-d+1}\right)$ unit vectors is an \emph{$\eps$-net} if for every point $v\in\sphere^{d-1}$, there exists a point $u\in\net$ with angle at most $\mathsf{cos}^{-1}\left(\frac{1}{1+\eps^2}\right)=O\left(\eps\right)$.
An $\eps$-net in $\Re^d$ can be constructed in $O(\eps^{-d+1})$ time~\cite{agarwal2008robust}.

\paragraph{Range tree}
\new{
Let $X$ be a set of finite weighted points in $\Re^d$. For a point $x\in X$, let $w_x\in \Re$ be its weight, and let $W=\{w_x\mid x\in X\}$ be the list of weights.
The goal is to construct a data structure such that given a query rectangle $\rect$, and an interval $I$ the set $X_{\rect,I}=\{x\in X\mid x\in X\cap \rect, w_x\in I\}$ is reported efficiently.
A range tree $\rangetree$ constructed on $X, W$, is a near-linear space tree-based data structure that constructs a hierarchical decomposition of the space such that, for any rectangle $\rect$ and interval $I$, all points in $X_{\rect,I}$ 
are reported in $O(\polylog(|X|+\outt)$, where $\outt$ is the output size $|X_{\rect,I}|$.
It is known that range trees over weighted points have $O(|X|\log^{d}|X|)$ space, $O(|X|\log^{d}|X|)$ preprocessing time, and can answer range reporting queries in $O(\log^{d}(|X|)+\outt)$ time. See~\cite{de1997computational} for details.
It is also known how to design dynamic range trees that handle insertions and deletions of points, efficiently. More specifically, a dynamic range tree over weighted points has $O(|X|\log^{d}|X|)$ space, $O(|X|\log^{d}|X|)$ preprocessing time, $O(\log^{d+1} |X|)$ update time and can answer range reporting queries in $O(\log^{d+1}(|X|)+\outt)$ time. See~\cite{overmars1983design, chan2018dynamic} for details.
Let $\rangetree=\mathsf{DRangeTreeConstruct}(X,W)$ be a method that takes the set of points $X$ along with their weights $W$ and constructs a dynamic range tree, denoted by $\rangetree$. If we construct a static range tree we use $\mathsf{RangeTreeConstruct}(X,W)$. 
Given a query rectangle $\rect$ and interval $I$, let $\rangetree.\mathsf{Report}(\rect,I)$ be the query procedure that returns $X_{\rect,I}$.
We also define the procedure $\rangetree.\mathsf{ReportFirst}(\rect,I)$ that reports one (arbitrary) point from $X_{\rect,I}$, if $X_{\rect,I}\neq \emptyset$, and then terminate the query procedure. On the other hand, if $X_{\rect,I}= \emptyset$ then it returns $\mathsf{NULL}$. 
By slightly modifying the standard query procedures, it is executed in $O(\log^{d+1}(|X|))$ time and $O(\log^{d}(|X|))$ time for dynamic and static range trees, respectively.
}

\paragraph{Delay guarantees}
A data structure for a reporting (or enumeration) problem over a set $X$ of $n$ points, satisfies $f(n)$ delay (for a real function $f$) if the time between the start of the reporting procedure to the first result, the time between two consecutive pair of results, and the time between the last result and the termination of the reporting procedure are all bounded by $f(n)$ time~\cite{bagan2007acyclic}.

\section{Lower Bounds}
\label{sec:lb}
In this section we show some lower bounds for the \problemDI and \problemDIk problems.
We first show a conditional lower bound for the \problemCDI problem using the strong set intersection conjecture. Next, we show a lower bound for the \problemDIk using a known hardness result for halfspace reporting queries. The missing proofs can be found in Appendix~\ref{appndx:lb}.

\subsection{\problemCDI problem} \label{sec:lower-CDAI}
In this section, we show a conditional lower bound suggesting that no efficient data structure exists for the exact \problemCDI problem. More precisely, assuming the strong set intersection conjecture, we show that for the \problemCDI problem even with one threshold-predicate, there is no data structure that uses near linear ($\O(\totalsize)$) space and answers the queries in near constant ($\O(1)$) time even for $d = 2$. The strong set intersection conjecture is a commonly used tool for showing tradeoffs between the space and the query time of the data structures \cite{krishnan2024range, Wang2023, Zhao2023}, introduced in \cite{Goldstein2017}, and can be stated as follows.

\begin{definition}\label{def:setinter}[Set intersection] Given a universe of integers $\U$, and $g$ subsets $\{S_1, S_2, \cdots, S_g\}$, set intersection query asks for reporting $S_i \cap S_j$ for two arbitrary indices $i, j \in [g]$.
\end{definition}

\begin{conjecture}\label{conj:inter}[Strong set intersection conjecture]
    Given $g$ sets $S_1,\ldots, S_g$ with $M = \sum_{i \in [g]}|S_i|$, any index that answers a set intersection query in time $\O(\qtime + \outt)$, needs at least $\tilde{\Omega}(M+\frac{M^2}{\qtime^2})$ space, \new{where $\qtime$ is any user defined positive integer parameter} and $\outt$ is the size of the output of the query.
\end{conjecture}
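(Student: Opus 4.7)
\medskip\noindent\textbf{Proof proposal.} The statement is explicitly labeled a conjecture, so strictly speaking no unconditional proof is expected; my plan is therefore to sketch the evidence that one would assemble to argue the lower bound, in the same spirit as the original papers of Patrascu and of Goldstein--Kopelowitz--Lewenstein--Porat where this tradeoff is stated. The starting observation is that the two extremes of the tradeoff are already tight and easy: with $\qtime=\O(1)$ one can precompute and store every pairwise intersection in $\tilde O(M^2)$ space, and with $\qtime=\O(\sqrt{M})$ one can use Patrascu's heavy/light decomposition to obtain $\tilde O(M)$ space. The aim is to argue that any smooth interpolation between these endpoints is essentially optimal.

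The plan is to derive the space lower bound by conditional reduction from a more fundamental hard problem in fine-grained complexity, rather than attempting an unconditional cell-probe argument. The cleanest candidate is the \emph{Online Matrix-Vector} (OMv) conjecture: given a preprocessed $n\times n$ Boolean matrix $B$, processing a stream of query vectors $v_1,v_2,\ldots$ and returning $Bv_i$ online requires total time $n^{3-o(1)}$. The reduction encodes the rows of $B$ as sets $S_1,\ldots,S_n$ over a universe of size $n$ (each row becomes the set of column indices containing a $1$), and each query vector $v_i$ as a set $T_i$ (the support of $v_i$). Answering $(Bv_i)_j$ reduces to deciding whether $S_j\cap T_i\neq \emptyset$. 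If one had a data structure with $S(\qtime)=o(M+M^2/\qtime^2)$ space and $\tilde O(\qtime+\outt)$ query time for set intersection reporting, one could batch the OMv queries against each row, and by choosing $\qtime$ appropriately along the tradeoff curve derive a subcubic OMv algorithm, contradicting the conjecture. The same template also admits a reduction from 3SUM, which would give an alternative conditional bound; having two independent hardness sources is what makes the conjecture credible.

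Parallel to the fine-grained route, the plan also pursues an unconditional lower bound in a restricted model as corroborating evidence. In the pointer machine model, or the cell-probe model with word size $w=\Theta(\log M)$, one can try to use the standard reachability/set-intersection lower bound machinery: build a collection of sets in which the answers to $\binom{n}{2}$ distinct queries are nearly disjoint in their certificate sets, so that by a counting/encoding argument any data structure of size $S$ and query time $\qtime$ must satisfy $S\cdot \qtime^2 = \tilde\Omega(M^2)$. Patrascu's reductions from Boolean matrix multiplication in the cell-probe model provide the template; the construction would instantiate it so that the relevant universe has size $\Theta(M/\qtime)$ and each set has size $\Theta(\qtime)$.

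The hard part, and the reason the statement remains only a conjecture, is closing the gap between the two approaches: the reductions from OMv/3SUM yield bounds against \emph{combinatorial} algorithms or at the bit-probe level, while a pure cell-probe lower bound for general $\qtime$ running across the entire tradeoff curve is not known. In particular, smooth tradeoff lower bounds of the form $S\cdot \qtime^2 = \tilde\Omega(M^2)$ in the cell-probe model with $w=\Theta(\log M)$ would, by known connections, imply new lower bounds for dynamic connectivity and related dynamic problems, which is precisely the frontier that has resisted progress. My proposal therefore would not attempt to surpass that barrier, but rather to cleanly package the OMv-based reduction as the formal conditional evidence, use the pointer-machine construction as an unconditional sanity check at the endpoints, and cite the resulting pair of bounds as justification for adopting the statement as a working conjecture in the subsequent lower bound for \problemCDI.
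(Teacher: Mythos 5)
You are right that no proof is expected here: the statement is an assumption, and the paper itself gives no argument for it. It simply imports the conjecture from Goldstein et al.~(the paper's reference for the strong set intersection conjecture), remarks that the version stated there is in fact \emph{stronger} (the reporting form with space $\tilde{\Omega}(M^2/\qtime)$ rather than $M^2/\qtime^2$), and notes that the weaker form suffices for the reduction to \problemCDI. So the correct handling, relative to the paper, is a citation plus that one remark --- not an evidence-gathering program.

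Where your sketch would run into trouble if you actually tried to carry it out is the claim that the space--query-time tradeoff could be \emph{derived} by a conditional reduction from OMv or 3SUM. No such derivation is known: OMv and 3SUM are statements about (online) running time for algorithms that may use essentially unrestricted polynomial space, and the known reductions from them yield time lower bounds, not lower bounds of the form ``any structure with query time $\O(\qtime+\outt)$ needs $\tilde{\Omega}(M^2/\qtime^2)$ space.'' This is precisely why Goldstein--Kopelowitz--Lewenstein--Porat formulate the set-disjointness/set-intersection space--time tradeoffs as standalone conjectures rather than as corollaries of OMv/3SUM; choosing $\qtime$ ``appropriately along the tradeoff curve'' in your OMv batching argument does not convert a space bound violation into a subcubic OMv algorithm without an additional (unknown) argument bounding preprocessing and total work. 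Similarly, the unconditional cell-probe route you mention is exactly the barrier you acknowledge: nothing close to $S\cdot\qtime^2=\tilde{\Omega}(M^2)$ is provable there with current techniques. Your endpoint observations (store all pairwise intersections for $\qtime=\O(1)$; heavy/light decomposition for $\qtime\approx\sqrt{M}$ with near-linear space) are fine as sanity checks on the form of the bound, but for the purposes of this paper the statement should simply be cited as a hypothesis, as the authors do.
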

We note that a stronger version of Conjecture~\ref{conj:inter}, is stated in \cite{Goldstein2017}, however, this weaker conjecture suffices for us to show the desired lower bounds. 

To show a reduction from the set intersection problem to our \problemCDI problem, we first prove a conditional hardness result for a more restricted version of the set intersection problem, which is of independent interest and could be used to show lower bounds for other restrictive problems. We call a collection of sets \textit{uniform} if any element belongs to the same number of different sets. More formally, a collection of sets $T = \{T_1, \ldots, T_g\}$ over the universe $\U$ is \textit{uniform} if for any pair of elements $u_1, u_2 \in \U$, we have $|\{T_i \in T : u_1 \in T_i\}| = |\{T_i \in T : u_2 \in T_i\}|$. Furthermore, we define the \textit{uniform set intersection} problem similar to \ref{def:setinter}, but with an additional constraint that the input collection of sets is uniform. The proof of the next lemma can be found in Appendix~\ref{appndx:lb}.

\begin{lemma}\label{lem:rest}
    For the uniform set intersection problem with total input size $M$, there is no data structure of size $\O(M)$ that answers queries in time $\O(1 + \outt)$, unless the strong set intersection conjecture is false, where $\outt$ is the size of the output of the query. 
\end{lemma}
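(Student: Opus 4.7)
The plan is to reduce the general set intersection problem to the uniform set intersection problem while paying only a polylogarithmic factor in total input size, and then invoke Conjecture~\ref{conj:inter} at $\qtime = \O(1)$ (which demands $\tilde{\Omega}(M^2)$ space) to derive a contradiction. Given an arbitrary instance with sets $S_1,\ldots,S_g$ over a universe $\U$ of total size $M$, let $f(u)$ denote the number of sets containing $u$. The naive fix---pad every element to the global maximum frequency $F=\max_u f(u)$ by introducing $F$ virtual sets that absorb the deficits---is not size-preserving, since $F\cdot |\U|$ can be as large as $\Theta(M^2)$. Controlling this blow-up is the main technical obstacle.

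To circumvent it, I would partition the universe into $O(\log M)$ frequency buckets $\U_j=\{u\in\U : 2^{j-1}\leq f(u) < 2^j\}$ for $j=1,\dots,\lceil \log M\rceil$, and build a separate uniform-set-intersection instance per bucket. For bucket $j$, the sets are the restrictions $S_i^{(j)} = S_i\cap \U_j$ (for $i\in[g]$) together with $2^{j-1}$ fresh virtual sets used as padding: each $u\in\U_j$ is inserted into exactly $2^j - f(u)$ of the virtual sets (any assignment works, since $2^j - f(u)\leq 2^{j-1}$ equals the number of virtual sets available). By construction every element of $\U_j$ then appears in exactly $2^j$ sets of the $j$-th instance, so that instance is uniform. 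Using $2^{j-1}|\U_j|\leq M_j:=\sum_{u\in\U_j} f(u)$, the total virtual content is at most $2^j|\U_j| - M_j\leq M_j$, so the $j$-th instance has size $O(M_j)$; summing over $j$ gives $O(M)$ overall.

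Assuming for contradiction that each such uniform instance admits an $\O(M_j)$-space, $\O(1+\outt)$-time data structure, I would answer a general query $S_a\cap S_b$ (with $a,b\in[g]$) by querying the $j$-th structure on the index pair $(a,b)$ for every $j$ and returning the union of the returned subsets. Virtual sets carry fresh indices and therefore never appear as a query input, while the buckets $\{\U_j\}$ partition $\U$; hence the per-bucket outputs are pairwise disjoint and their union is exactly $S_a\cap S_b$. The total space is $\sum_j \O(M_j)=\O(M)$ and the total query time is $O(\log M)\cdot \O(1) + \sum_j \outt_j = \O(1+\outt)$. This would give an $\O(M)$-space, $\O(1+\outt)$-time data structure for the general set intersection problem, contradicting Conjecture~\ref{conj:inter} at $\qtime=\O(1)$; hence the hypothesized uniform data structure cannot exist, proving the lemma.
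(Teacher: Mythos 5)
Your proof is correct, but it takes a genuinely different route from the paper. The paper fixes a non-uniform instance of total size $W$ and splits the universe by frequency into light ($c_u<W^{5/12}$), medium, and heavy ($c_u\geq W^{5/6}$) elements: light intersections are precomputed in a lookup table of size $\O(W^{17/12})$, only the medium elements are padded to uniformity (with singleton dummy sets) and handed to the hypothesized uniform structure, and the at most $W^{1/6}$ heavy elements are brute-forced at query time. This yields a general set-intersection structure with $\O(W^{17/12})$ space and $\O(W^{1/6}+\outt)$ query time, contradicting Conjecture~\ref{conj:inter} at $\qtime=W^{1/6}$, which demands $\tilde{\Omega}(W^{5/3})$ space. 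You instead give a size-preserving (up to an $O(\log M)$ factor) reduction from general to uniform set intersection via dyadic frequency buckets, padding each bucket with at most a constant-factor volume of virtual sets, and then invoke the conjecture at near-constant $\qtime$, where it demands $\tilde{\Omega}(M^2)$ space. Your bucketing correctly handles the blow-up that the paper sidesteps with the lookup-table/brute-force tricks: within bucket $j$ the padding per element is at most $2^{j-1}\leq f(u)$, so the virtual content is dominated by the real content, the per-bucket answers are disjoint and union to $S_a\cap S_b$, and the $O(\log M)$ overhead is absorbed by the $\O(\cdot)$ notation. What your approach buys is a cleaner argument with no exponent bookkeeping and a stronger conclusion — uniform set intersection inherits the conjectured space–time tradeoff for every $\qtime$, not just the single point $(\O(M),\O(1+\outt))$ needed for the lemma; the paper's argument, by contrast, is self-contained at one tradeoff point and never needs more than one auxiliary structure, but proves only that point.
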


In Appendix~\ref{appndx:lb1}, we reduce the uniform set intersection problem to the \problemCDI problem.

\begin{theorem} \label{lowerbound:centralized-percentile}
    Given a repository $\sets=\{\points_1,\ldots, \points_{\setsize}\}$ such that $\totalsize = \sum_{\points_i \in \sets}|\points_i|$ and $\points_i\subset\Re^2$ for every $i\in[\setsize]$, there is no data structure for the \problemCDI problem of size $\O(\totalsize)$ and $\O(1+\outt)$ query time, unless the strong set intersection conjecture is false.
\end{theorem}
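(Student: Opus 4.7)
The plan is to reduce the uniform set intersection problem, whose hardness is established in Lemma~\ref{lem:rest}, to \problemCDI{} in $\Re^2$. Starting from an instance of uniform set intersection with sets $T_1,\dots,T_g$ over a universe $\U$ of total size $M$, where every element of $\U$ lies in exactly $\lambda$ sets (so $M = \lambda|\U|$), I will construct a repository of $\setsize = |\U|$ datasets such that a single \problemCDI{} query with one threshold-predicate answers any pairwise set-intersection query $T_{i^*}\cap T_{j^*}$.

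For the construction, I associate one dataset $\points_u\subset\Re^2$ with each element $u\in\U$, defined as the set of integer grid points
\[
\points_u \;=\; \{(i,j)\in\Re^2 \;:\; T_i\ni u \text{ and } T_j\ni u\}.
\]
Uniformity forces $|\points_u|=\lambda^2$ for every $u$, so the total size of the constructed repository is $\totalsize = \sum_{u\in\U}|\points_u| = \lambda^2|\U| = \lambda M$. The key feature of this encoding is that the joint membership of $u$ in both $T_{i^*}$ and $T_{j^*}$ is witnessed by a \emph{single} lattice point of $\points_u$, namely $(i^*,j^*)$, which can be isolated by an axis-aligned rectangle while every other lattice point involved in the construction is avoided.

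To answer a set-intersection query $(i^*,j^*)$, I pose the \problemCDI{} query $\Pi = \pred_{\M_\rect,\interval}$, where $\rect$ is any axis-aligned rectangle containing the grid point $(i^*,j^*)$ and no other grid point of the construction, and $\interval = [1/\lambda^2,\,1]$ is a one-sided threshold-interval. By construction $\M_\rect(\points_u) = |\points_u\cap\rect|/|\points_u| \in \{0,\,1/\lambda^2\}$, and equals $1/\lambda^2$ exactly when $u\in T_{i^*}\cap T_{j^*}$. Hence $\query_\Pi(\sets) = T_{i^*}\cap T_{j^*}$, and any \problemCDI{} data structure achieving $\O(\totalsize)$ space with $\O(1+\outt)$ query time yields, under the reduction, a uniform-set-intersection data structure with $\O(\lambda M)$ space and $\O(1+\outt)$ query time.

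The main obstacle is to make sure the $\lambda^2$ blowup in the encoding does not swamp the lower bound: the contradiction with Lemma~\ref{lem:rest} requires $\totalsize = \O(M)$, which forces $\lambda = \O(1)$. I plan to handle this by invoking Lemma~\ref{lem:rest} in the regime $\lambda = O(1)$ (e.g.\ $\lambda=2$), which is the natural output of padding-based reductions from general set intersection to uniform set intersection, and by verifying in the proof of Lemma~\ref{lem:rest} that its hardness already holds for such small $\lambda$. Under this regime the construction delivers $\totalsize = \O(M)$, the reduced algorithm contradicts Lemma~\ref{lem:rest}, and the theorem follows.
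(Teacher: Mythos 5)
There is a genuine gap, and it is exactly at the point you flag as the ``main obstacle.'' Your reduction encodes each element $u$ by the $\lambda^2$ grid points $\{(i,j): u\in T_i,\, u\in T_j\}$, so the constructed repository has $\totalsize=\lambda^2|\U|=\lambda M$. To get a contradiction you must therefore restrict to uniform instances with $\lambda=\O(1)$, and you propose to ``verify in the proof of Lemma~\ref{lem:rest} that its hardness already holds for such small $\lambda$.'' It does not, and it cannot. In the paper's proof of Lemma~\ref{lem:rest}, the uniform instance is built so that every (medium) element lies in exactly $W^{5/6}$ sets, i.e.\ $\lambda$ is polynomially large; that is essential, because when $\lambda=O(1)$ uniform set intersection is genuinely easy: the total size of all pairwise intersection lists is $\sum_{u\in\U}\binom{c_u}{2}\leq \lambda^2|\U|=O(\lambda M)=O(M)$, so one can precompute every nonempty $T_i\cap T_j$ in a lookup table of size $\O(M)$ and answer queries in $O(1+\outt)$ time. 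Hence no lower bound can hold in the $\lambda=O(1)$ regime, and in the hard regime ($\lambda\approx W^{5/6}$) your instance has $\totalsize=\lambda M\gg M$, so an $\O(\totalsize)$-space \problemCDI{} structure only yields an $\O(\lambda M)$-space intersection structure, which does not contradict the $\tilde{\Omega}(W^{5/3})$ bound used inside Lemma~\ref{lem:rest}. The quadratic-per-element blowup is thus fatal, not a technicality to be patched.

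The paper's reduction avoids this by keeping the encoding linear in $M$ independently of $\lambda$: each occurrence of an element $u$ in a set $S_i$ contributes only two points to $\points_u$, placed on two parallel lines $y=x+M$ and $y=x-M$, so $\totalsize=2M$. Co-membership of $u$ in $S_i$ and $S_j$ is then not witnessed by a single point for the pair $(i,j)$, but by a count: one can compute in $O(1)$ time a rectangle $\rho_{i,j}$ whose intersection with all created points is exactly $G_i\cup G'_j$, and $u\in S_i\cap S_j$ iff $|\points_u\cap\rho_{i,j}|=2$; uniformity makes every $|\points_u|=t$, so the single threshold interval $[1.5/t,1]$ detects exactly these indexes. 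If you want to salvage your approach, you would need an encoding whose size does not scale with the number of set-pairs containing an element --- which is essentially what the two-line construction provides.
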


\subsection{\problemCDIk problem} \label{sec:lower:CPAI}
In this subsection, we show an unconditional lower bound for the \problemCDIk problem. 
In the halfspace reporting problem, we are given a set $U\subset\Re^d$ of $n$ points and the goal is to construct a data structure such that given a query halfspace $H$, report all points in $|U\cap H|$.
The next lower bound for halfspace reporting queries is known from~\cite{afshani2012improved}. Any data structure that answers $d(d+3)/2$-dimensional halfspace reporting queries in $\bar{\mathcal{Q}}(n)+O(K)$ time, where $K$ is the number of reported points, requires $\Omega\left(\left(\frac{n}{\bar{\mathcal{Q}}(n)}\right)^d2^{-O\left(\sqrt{\log \bar{\mathcal{Q}}(n)}\right)}\right)$ space.
Hence, there is no hope to get an $\O(n)$ space data structure with $\O(1)$ query time for the halfspace reporting problem in $\Re^d$ for $d>4$.
Due to space limit, we show a reduction from the halfspace reporting problem to \problemCDIk problem in Appendix~\ref{appndx:lb2} and we conclude with the next theorem.

\begin{theorem}
\label{thm:lbTopk}
 Given a repository $\sets=\{\points_1,\ldots, \points_{\pointsize}\}$ such that $\totalsize = \sum_{\points_i \in \sets}|\points_i|$ and $\points_i\subset\Re^5$, for every $i\in[\setsize]$, there is no data structure for the \problemCDIk problem of size $\O(\totalsize)$ and $\O(1+\outt)$ query time.
 \end{theorem}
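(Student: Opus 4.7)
The plan is to prove Theorem~\ref{thm:lbTopk} via a direct reduction from halfspace reporting in $\Re^5$ to \problemCDIk in $\Re^5$ with $k=1$, and then invoke the unconditional space/time tradeoff of Afshani, Arge, and Larsen cited above.

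First I would observe that the stated halfspace reporting lower bound, when instantiated at $d=2$, applies in ambient dimension $d(d+3)/2 = 5$: any data structure that answers halfspace reporting queries over $n$ points in $\Re^5$ with query time $\bar{\mathcal{Q}}(n) + O(K)$ requires space $\Omega\bigl((n/\bar{\mathcal{Q}}(n))^2 \, 2^{-O(\sqrt{\log \bar{\mathcal{Q}}(n)})}\bigr)$. In particular, plugging in $\bar{\mathcal{Q}}(n) = \polylog n$ rules out any data structure with $\O(n)$ space and $\O(1+K)$ query time.

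Next I would set up the reduction. Given a halfspace reporting instance $U = \{u_1,\ldots,u_n\} \subset \Re^5$, build a repository $\sets = \{\points_1,\ldots,\points_n\}$ for \problemCDIk with $k = 1$ by letting $\points_i = \{u_i\}$; each dataset is a singleton, so $\totalsize = \sum_i |\points_i| = n$. For a query halfspace $H = \{x \in \Re^5 : \langle x, \vector \rangle \ge \threshold\}$, where $\vector$ is the (unit) normal of $H$, I would pose the single threshold-predicate query $\Pi = \pred_{\M_{\vector,1}, [\threshold,\infty)}$. Because $\score_1(\points_i,\vector) = \langle u_i,\vector\rangle$, the returned set is exactly
\[
\query_\Pi(\sets) = \{\, i \in [n] : \langle u_i,\vector\rangle \ge \threshold \,\} = \{\, i \in [n] : u_i \in H \,\},
\]
which is precisely the answer to the halfspace reporting query. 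Reporting takes $O(1)$ time per index in the reduction since each $\points_i$ is a singleton.

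Finally I would derive the contradiction. If \problemCDIk in $\Re^5$ admitted a data structure of size $\O(\totalsize) = \O(n)$ with query time $\O(1 + \outt)$, then the reduction above would produce a halfspace reporting structure on $n$ points in $\Re^5$ using $\O(n)$ space and $\O(1 + K)$ query time, contradicting the lower bound of Afshani et al. This rules out such a data structure for \problemCDIk in $\Re^5$, establishing Theorem~\ref{thm:lbTopk}. The only nontrivial step is verifying that the instance built by the reduction is a legitimate \problemCDIk input of total size $n$ and that a single threshold-predicate query on $\vector$ with interval $[\threshold,\infty)$ faithfully encodes halfspace membership — both of which follow immediately from the definition of $\score_k$ with $k=1$.
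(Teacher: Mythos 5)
Your proposal is correct and takes essentially the same route as the paper: reduce halfspace reporting in $\Re^5$ (the $d=2$ instantiation of the Afshani et al.\ bound) to \problemCDIk with singleton datasets and $k=1$, querying along the unit normal with a one-sided threshold interval, so that $\O(\totalsize)$ space with $\O(1+\outt)$ query time would contradict the halfspace-reporting lower bound. The only difference is that the paper additionally normalizes the instance (scaling into the unit ball and keeping a second, transformed copy of the structure for halfspaces containing the origin) so that the query threshold stays nonnegative, a technicality your version sidesteps by allowing the interval $[\threshold,\infty)$ directly, which is consistent with the problem's formal definition.
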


\section{Data structures for the \problemDI Problem}
\label{sec:Perc}
In this section we describe our data structures for the \problemDI problem. In the previous section, we showed that we cannot hope for an efficient exact data structure for the \problemCDI problem when $d\geq 2$, even if the logical expression contains only one threshold-predicate. In Appendix~\ref{subsec:percExact}, we show an efficient exact data structure for the \problemCDI in $\Re^1$ assuming one range-predicate in the logical expression $\Pi$. In this section we focus on approximate data structures for the \problemDI for any $d\geq 1$.
\new{In Subsection~\ref{subsec:techOverview}, we show the high level ideas and a technical overview of our new data structures.}
Then, as a warm-up, in Subsection~\ref{subsec:approxpercopenright}, we consider \problemDI queries with one threshold predicate. In Subsection~\ref{subsec:approxPercRangePred}, we extend the approximate data structure for \problemDI queries with a general range-predicate. Finally, in Appendix~\ref{subsec:percConj}, we further extend to any logical expression (of disjunctions and conjunctions) $\Pi$ over a constant number of range predicates.
\new{Recall that each synopsis $\Synop_{\points}$ supports random sampling (with replacement) over $\points$. Let $\Synop_{\points}.\mathsf{Sample}(\kappa)$ be the synopsis' procedure that returns $\kappa$ random samples from $\points$, for any positive integer $\kappa$.}

\subsection{\new{Overview}}
\label{subsec:techOverview}
\new{
Before we start describing the technical details, we show an overview of our approach. For simplicity we assume that the query logical expression $\Pi$ consists of only one predicate $\pred_{\M_\rect,\interval}$.

Current methods cannot solve \problemDI efficiently with theoretical guarantees, even in the centralized setting where every dataset is known.
For example, consider the following baseline. For every dataset $\points_i$ construct a range tree to answer range counting queries.
Given a query predicate $\pred_{\M_\rect,\interval}$, the naive solution goes through each dataset $\points_i$ and using the range tree it computes $\frac{|\rect\cap \points_i|}{|\points_i|}$. If this value belongs in $\interval$ then we report $i$, otherwise we continue with the next dataset.
While this algorithm solves the $\problemCDI$ problem exactly, it has $\Omega(\setsize)$ query time. Another baseline, that works in both centralized and federated settings, is proposed by~\cite{behme2024fainder} that first defined the $\problemDI$ problem. While their data structure returns a solution for the \problemDI problem with bounded error, the query time is $\Omega(\setsize)$ in the worst case.

We design a new data structure with bounded error having query time with polylogarithmic dependency on $\setsize$. 
Our new method consists of two core ideas: First, we construct a small summary for each dataset using only the input synopses.
Second we construct a data structure over the summaries in a higher dimensional space that allows us to answer percentile queries in sublinear time with respect to $\setsize$.

One of the main components of our data structure is the notion of \emph{coreset}.
Coresets are used in computational geometry, theory and databases, as small summaries that maintain key properties of the original data~\cite{agarwal2008robust, indyk2014composable}.
More specifically, for every dataset $i\in[\setsize]$ we get a set $\epsample_i$ of roughly $O(\eps^{-2})$ random samples from $\Synop_{\points_i}$. Assuming that all synopses have error at most $\delta$, from Lemma~\ref{lem:helper1}, we know that $\epsample_i$ is an $(\eps+\delta)$-sample. By its definition, an $\eps'$-sample acts like a coreset for percentile queries.
Notice that every set $\epsample_i$ can be computed in both settings: In the centralized setting we sample directly from the dataset $\points_i$ and $\delta=0$, so $\epsample_i$ is an $\eps$-sample. In the federated setting we sample from the synopsis $\Synop_{\points_i}$ and $\epsample_i$ is an $(\eps+\delta)$-sample.

While we reduce the size of each dataset $\points_i$ from $n_i$ to $O(\eps^{-2})$ our problem is far from being solved. Given a query rectangle $\rect$ and predicate $\theta$, we still need to go through each set $\epsample_i$ to answer the query.
Our main idea is to construct, all possible combinatorially different rectangles with respect to the coreset $\epsample_i$, for each $i\in[\setsize]$. Interestingly, for each $i\in[\setsize]$, this number is small, because each coreset is small. For each precomputed rectangle $\rec$, we store the weight $\frac{|\rec\cap \epsample_i|}{|\epsample_i|}$, i.e., the fraction of points from the coreset that lies in this rectangle. Intuitively, given a query rectangle $\rect$, we should find for every dataset, among all precomputed rectangles, the one which is the “most similar” to $\rect$ in order to estimate the fraction of points it contains in $\rect$. If the interval $\interval$ has the form $[\intervalL,1]$ (threshold-predicate in Section~\ref{subsec:approxpercopenright}) then it turns out that each dataset $i$ that has at least one precomputed rectangle $\rec$ completely inside $\rect$ with weight at least $\intervalL-\eps-\delta$, should be reported. 
If the interval $\interval$ has the form $[\intervalL,\intervalU]$ (range-predicate in Section~\ref{subsec:approxPercRangePred}), then it turns out that each dataset, whose largest precomputed rectangle inside $\rect$ has weight between $\intervalL-\eps-\delta$ and $\intervalU+\eps+\delta$, should be reported.

In order to guarantee sublinear query time, we construct a geometric data structure among all precomputed rectangles over all datasets. For threshold-predicate (Section~\ref{subsec:approxpercopenright}), we search among all precomputed rectangles and find any of them that lies completely inside $\rect$ with weight at least $\intervalL\!-\!\eps\!-\!\delta$ (property $\textbf{P}1$). Similarly, for range-predicate (Section~\ref{subsec:approxPercRangePred}) we search among all precomputed rectangles to find any of them that is maximal inside $\rect$ (no other larger precomputed rectangle generated from the same coreset lies completely inside $\rect$) with weight between $\intervalL\!-\!\eps\!-\!\delta$ and $\intervalU\!+\!\eps\!+\!\delta$ (property $\textbf{P}2$).
Once we find such a rectangle (that was generated from coreset, say, $\epsample_i$), we report $i$, we remove all precomputed rectangles generated from $\epsample_i$ and continue with the same query until we cannot find a rectangle that satisfies property $\textbf{P}1$ (threshold-predicate), or property $\textbf{P}2$ (range-predicate).
In order to efficiently find precomputed rectangles with these properties and to efficiently remove precomputed rectangles, we use a dynamic range tree.

Recall that the range tree is constructed on a set of points, such that given a query rectangle it reports the points inside the query rectangle efficiently. In order to use it for our purpose, in Section~\ref{subsec:approxpercopenright}, we map the precomputed rectangles in $\Re^d$ (along with their weights) to weighted points in $\Re^{2d}$.
The new set of (weighted) points in $\Re^{2d}$ is stored in the range tree. Given a query rectangle $\rect$ in $\Re^d$, we map it to a new rectangle (orthant) $\rect'$ in $\Re^{2d}$, such that a precomputed rectangle $\rec$ in $\Re^d$ lies completely inside $\rect$, if and only if, $\rec$'s mapped point in $\Re^{2d}$ lies inside the orthant $\rect'$.
In Section~\ref{subsec:approxPercRangePred} the geometric data structure is more involved and we need to map the precomputed rectangles to points in $\Re^{4d}$.
We show the details in Section~\ref{subsec:approxPercRangePred}.
}

\begin{figure}[h]
\vspace{-1.8em}
    \centering
    \subfloat[\new{Sets $\epsample_1, \epsample_2$ and the precomputed intervals $\mathcal{R}_1, \mathcal{R}_2$.}]{\includegraphics[width=0.3\textwidth]{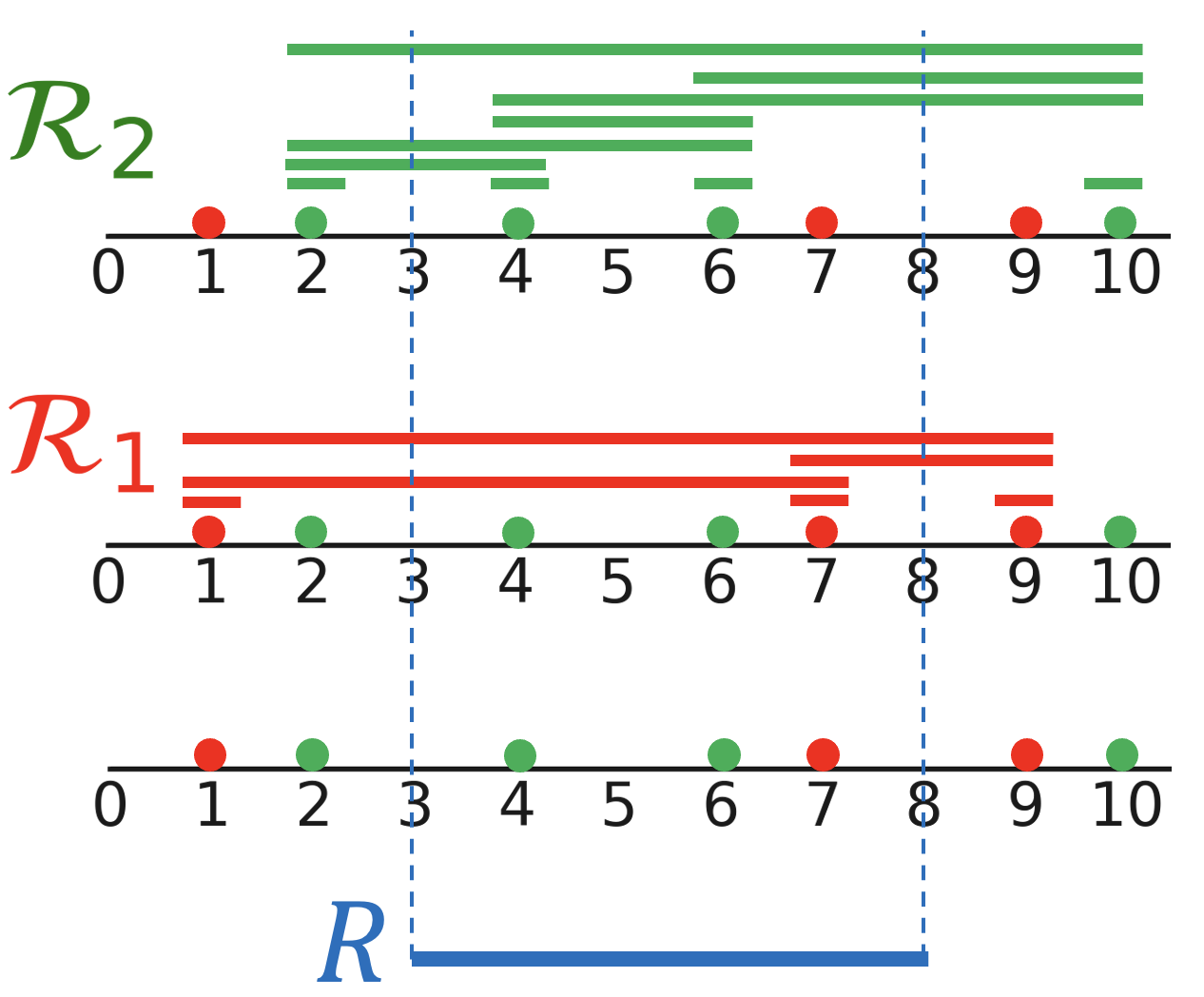} \label{fig:figEx1}}
    \subfloat[\new{Set of weighted points $Q$}]{\includegraphics[width=0.3\textwidth]{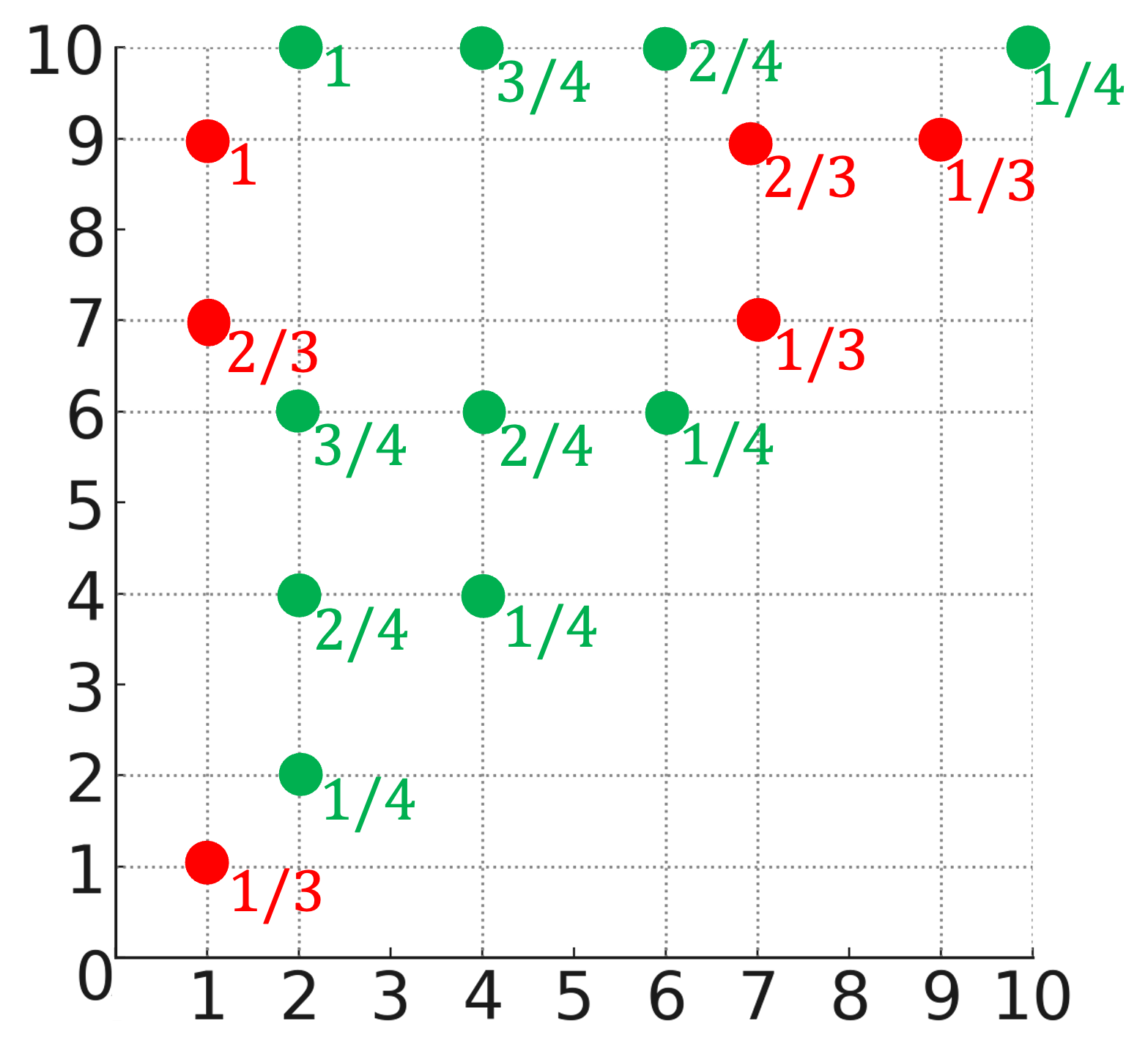} \label{fig:figEx2}}
    \subfloat[\new{Set of weighted points $Q$ along with the orthant $\rect'$.}]{\includegraphics[width=0.3\textwidth]{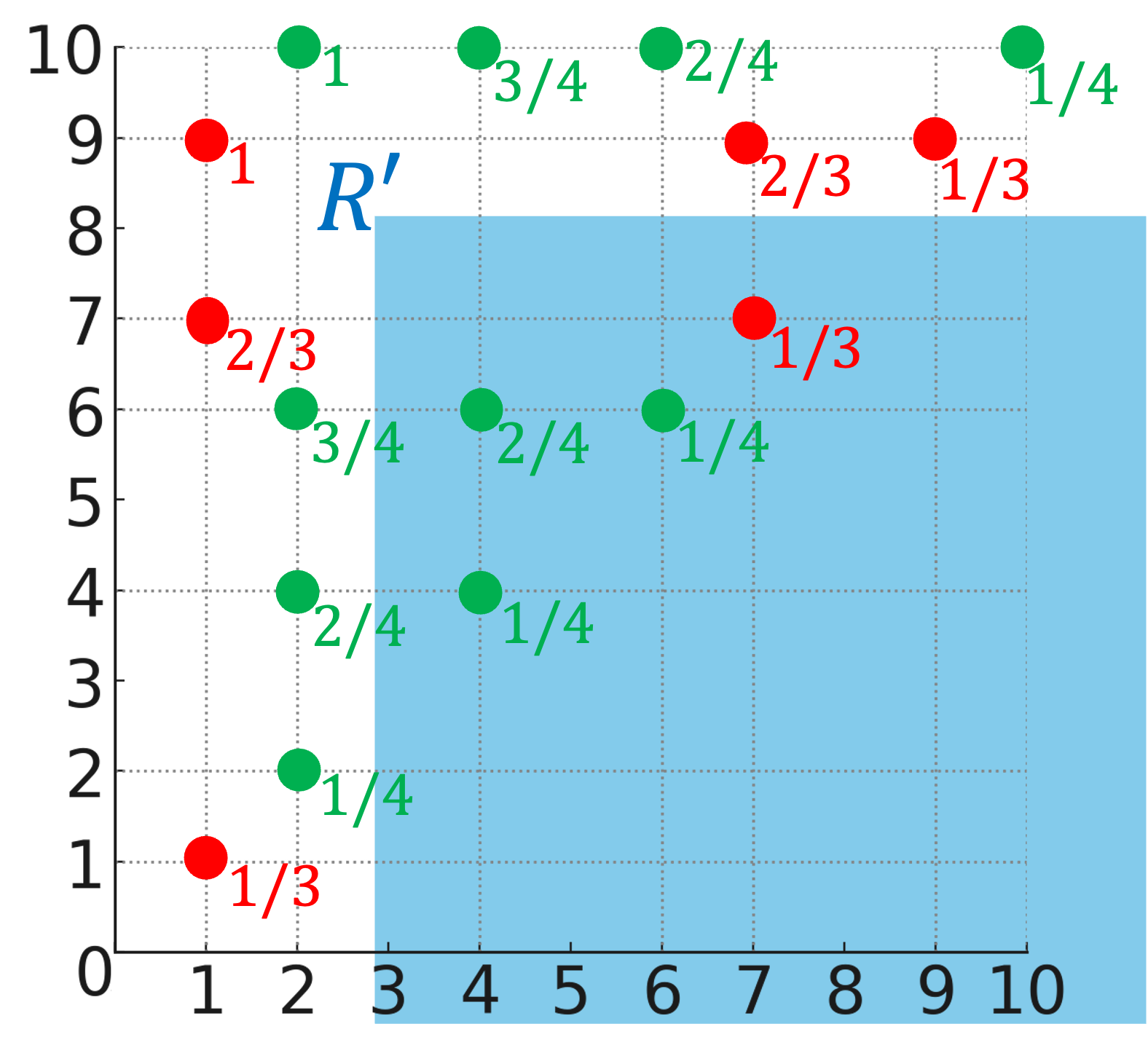} \label{fig:figEx3}}
    \vspace{-0.5em}
    \caption{\new{Toy example.}}
    \label{fig:all}
    \vspace{-1.8em}
\end{figure}

\subsection{Approximate data structure for the \problemDI problem with a threshold-predicate}
\label{subsec:approxpercopenright}
We first design our data structure and prove its guarantees assuming that for every $i\in[\setsize]$, $\err_{\Synop_{\DS_i}}(\F^d_{\square}) \leq\delta_i\leq \delta$, for a known parameter $\delta\in [0,1)$. In the end, we give the guarantees assuming that $\delta_i$'s are unknown.
If $\delta=0$, then this is the \problemCDI problem, otherwise it is the \problemFDI problem.
In this subsection, we focus on threshold-predicates, i.e., the query interval $\interval$ is defined as $[\intervalL,1]$.


\begin{algorithm}[t]
  \caption{\new{$\mathsf{ConstructT}(\Synop_{\points_1}, \ldots, \Synop_{\points_\setsize},\eps,\prob)$}}
  \label{alg:construction}
  \small
  $Q=\emptyset$, $W=\emptyset$\;
  \ForEach{$i\in[\setsize]$}{ 
    $Q_i=\emptyset$, $W_i=\emptyset$\;
    $\epsample_i=\Synop_{\points_i}.\mathsf{Sample}( \Theta(\eps^{-2}\log(\setsize\prob^{-1})))$\;
    $\mathcal{R}_i=\text{ All combinatorially different hyper-rectangles defined by }        
        \epsample_i$\;
    \ForEach{$\rec\in \mathcal{R}_i$}{
        $q_{\rec}=(\rec^-_1,\ldots, \rec^+_d, \rec^+_1,\ldots, \rec^+_d)$, \hspace{0.2em} $q_{\rec}.\mathsf{IndexDataset}=i$, \hspace{0.2em}
        $w_{q_\rec}=\frac{|\rec\cap \epsample_i|}{|\epsample_i|}$\;
        $Q_i=Q_i\cup\{q_{\rec}\}$, $W_i=W_i\cup\{w_{q_\rec}\}$\;
    }
    $Q=Q\cup Q_i$, $W=W\cup W_i$\;
  }
  $\rangetree=\mathsf{DRangeTreeConstruct}(Q,W)$\;
  \Return $\rangetree$\;
\end{algorithm}
\begin{algorithm}[t]
  \caption{\new{$\mathsf{QueryT}(\rangetree, Q, \rect, \interval, \eps,\delta)$}}
  \label{alg:query}
  \small
  $\rect'=[\rect^-_1,\infty)\times \ldots\times [\rect^-_d,\infty)\times (-\infty,\rect^+_1]\times\ldots\times (-\infty,\rect^+_d]$, \hspace{0.2em}
  $I'=[\intervalL-\eps-\delta,1]$\;
  $q_\rec=\rangetree.\mathsf{ReportFirst}
  (\rect', I')$,\hspace{0.2em} $J=\emptyset$\;
  \While{$q_\rec\neq \mathsf{NULL}$}{
    $j=q_\rec.\mathsf{IndexDataset}$, $J=J\cup\{j\}$,\hspace{0.2em}
    $\mathsf{Report}\hspace{0.3em} j$\;
    \lForEach{$q\in Q_j$}{Delete $q$ from $\rangetree$}
    $q_\rec=\rangetree.\mathsf{ReportFirst}
  (\rect', I')$\;
   }
\lForEach{$j\in J$}{Re-insert all points from $Q_j$ in $\rangetree$}
  \Return\;
\end{algorithm}

\paragraph{Data structure}
\new{In Algorithm~\ref{alg:construction}, we show the pseudocode of the construction phase.}
For every $i\in[\setsize]$, we sample a set $\epsample_i$ of $\Theta(\eps^{-2}\log(\setsize\prob^{-1}))$ points from $\Synop_{\points_i}$.
From Lemma~\ref{lem:helper1}, we know that $\epsample_i$ is a $(\eps+\delta)$-sample with probability at least $\prob^{-1}$.
For every $i\in [\setsize]$, we construct the set $\mathcal{R}_i$ that contains all combinatorially different hyper-rectangles defined by $\epsample_i$. Every hyper-rectangle $\rec\in \mathcal{R}_i$ is defined by their two opposite corners $\rec^-,\rec^+\in \Re^d$ such that $\rec^-_h\leq \rec^+_h$, for every $h\in[d]$, where $\rec^-_h, \rec^+_h$ are the $h$-coordinates of $\rec^-$ and $\rec^+$, respectively.
Let $q_\rec=(\rec^-_1,\ldots, \rec^-_d, \rec^+_1,\ldots, \rec^+_d)$ be the point in $\Re^{2d}$ defined by merging the opposite corners of $\rec$.
We also store in $q_\rec.\mathsf{IndexDataset}$ the index $i$ so that we know the index of the dataset/synopsis that $q_\rec$ was constructed for (it will be useful in the query phase).
We set the weight of $q_\rec$ to be $w_{q_\rec}=\M_\rec(\epsample_i)=\frac{|\rec\cap \epsample_i|}{|\epsample_i|}$.
Let $Q_i=\{q_\rec\mid \rec\in \mathcal{R}_i\}$ and $W_i=\{w_{q_\rec}\mid q_\rec\in Q_i\}$. Finally, we define $Q=\bigcup_{i\in [\setsize]}Q_i$ and $W=\bigcup_{i\in [\setsize]}W_i$. We construct a dynamic range tree $\rangetree$ over the (weighted) points $Q$ with weights $W$ calling the function $\mathsf{DRangeTreeConstruct}(Q,W)$.

\paragraph{\textit{Example}}
\new{
As an example, we show the construction of our data structure for $d=1$. We consider two datasets, $\points_1$ and $\points_2$. In Figure~\ref{fig:figEx1} we show the sampled points $\epsample_1=\{1, 7, 9\}$ and $\epsample_2=\{2, 4, 6,10\}$. We construct the sets $\mathcal{R}_1$ and $\mathcal{R}_2$ as shown in Figure~\ref{fig:figEx1} containing all combinatorially different rectangles (intervals for $d=1$) defined by $\epsample_1$ and $\epsample_2$, respectively. For example, $\mathcal{R}_1$ contains the intervals $[1,1], [7,7], [9,9], [1,7], [1,9], [7,9]$ and $\mathcal{R}_2$ contains the intervals $[2,2], [4,4], [6,6], [10,10], [2,4], [2,6], [2,10], [4,6], [4,10], [6,10]$. Then for each $\rec\in \mathcal{R}_1$ (or $\mathcal{R}_2$) we construct the point $q_{\rec}\in \Re^2$ as shown in Figure~\ref{fig:figEx2}. For example, for the interval $\rec=[1,7]$, we construct the point $q_{\rec}=(1,7)$. The weight of point $q_{\rec}=(1,7)$ is $2/3$ because $w_{q_{\rec}}=\frac{|\rec\cap \epsample_1|}{|\epsample_1|}=\frac{2}{3}$. In the end, the dynamic range tree $\rangetree$ is constructed over the weighted points in Figure~\ref{fig:figEx2}.
}

\paragraph{Query procedure}
\new{In Algorithm~\ref{alg:query}, we show the pseudocode of the query phase.
We are given a query rectangle $\rect$ in $\Re^d$ and a one-sided query interval $\interval=[\intervalL,1]$. Let $\rect^-, \rect^+$ be the two opposite corners of $\rect$. We define the orthant $\rect'=[\rect^-_1,\infty)\times \ldots\times [\rect^-_d,\infty)\times (-\infty,\rect^+_1]\times\ldots\times (-\infty,\rect^+_d]$ in $\Re^{2d}$ and the weight query threshold interval $I'=[\intervalL-\eps-\delta,1]$.
We repeat the following until no point is returned:
Using the range tree's $\rangetree$ query procedure $\rangetree.\mathsf{ReportFirst}(\rect',I')$ we compute, if any, a point $q_\rec$ in $Q\cap \rect'$ with weight in $I'$.
Let $j$ be the index of the synopsis $\Synop_{\points_j}$ that $q_\rec$ was constructed for. We report $j$, and we delete from $\rangetree$ all points in $Q_j$. We continue with the next iteration.
If such point $q_\rec$ is not found, we stop the execution.
Let $J$ be the indexes we reported.
In the end of the query-process, we re-insert in $\rangetree$ all points $Q_j$ for every index $j\in J$.}

\paragraph{Example (cont.)}
\new{
We show the query procedure executed on our running example from Figure~\ref{fig:all}. Without loss of generality, assume $
\delta=0$ (centralized setting) and $\eps<0.01$.\footnote{We note that if $\eps<0.01$, then we would need more sampled points from each dataset. However, this assumption simplifies the description of the example and successfully derives the main idea of our approach. Another way to think of this is that there exist more samples very far from coordinates $[1,10]$ that do not play any role in the query $\rect=[3,8]$.}
Let $\rect=[3,8]$ be the query rectangle (interval for $d=1$), and let $\interval=[0.2,1]$. We map $\rect$ to the orthant $\rect'=[3,\infty)\times (-\infty,8]$ in $\Re^2$ as shown in the blue area of Figure~\ref{fig:figEx3}. We also have $I'=[0.2-\eps,1]$. 
Notice that all points in the blue area have weight at least $0.2$.
Assume that the procedure $\rangetree.\mathsf{ReportFirst}(\rect', I')$ first returns the point $(4,4)$. We report the index $2$ (because $(4,4)$ was constructed from $\epsample_2$). Then we remove from $\rangetree$ all green points. In the next iteration, the procedure $\rangetree.\mathsf{ReportFirst} (\rect', I')$ returns the point $(7,7)$ because it lies in the blue orthant and its weight is greater than $0.2$. The index $1$ is reported and all red points are removed from $\rangetree$. In the end all removed points are re-inserted in $\rangetree$.
Both indexes $1$ and $2$ are correctly reported because $\frac{|\rect\cap\points_1|}{|\points_1|}\geq \frac{|\rect\cap\epsample_1|}{|\epsample_1|}-\eps\geq w_{(4,4)}-\eps> 0.2$ and $\frac{|\rect\cap\points_2|}{|\points_2|}\geq \frac{|\rect\cap\epsample_2|}{|\epsample_2|}-\eps\geq w_{(7,7)}-\eps> 0.2$.
}

\paragraph{Correctness}
Let $\Pi=\pred_{\M_{\rect},\interval}$ be the logical expression with one threshold-predicate.
We first show that we report all the ``correct'' indexes $\query_\Pi(\sets)$.
\begin{lemma}
    $\query_\Pi(\sets)\subseteq J$, with probability at least $1-\prob$.
\end{lemma}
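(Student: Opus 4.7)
The plan is to first show that all samples behave like $\eps$-samples simultaneously, then give a deterministic containment argument conditional on this event. By Lemma~\ref{lem:helper1}, applied with failure probability $\prob/\setsize$ and sample size $\Theta(\eps^{-2}\log(\setsize\prob^{-1}))$, each $\epsample_i$ is an $(\eps+\delta)$-sample for $\points_i$ with probability at least $1-\prob/\setsize$. A union bound over $i\in[\setsize]$ gives that, with probability at least $1-\prob$, every $\epsample_i$ is simultaneously an $(\eps+\delta)$-sample for $\points_i$. The rest of the argument conditions on this event.

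Fix any $i\in \query_\Pi(\sets)$, i.e., $\M_\rect(\points_i)\geq \intervalL$. By the $(\eps+\delta)$-sample property,
\[
\M_\rect(\epsample_i)\;\geq\;\M_\rect(\points_i)-(\eps+\delta)\;\geq\;\intervalL-\eps-\delta,
\]
so in particular $\epsample_i\cap \rect\neq\emptyset$ whenever $\intervalL>\eps+\delta$ (the remaining boundary case where $\intervalL\leq\eps+\delta$ is immediate since $I'=[\leq 0,1]$ becomes trivial). Next, I would exhibit a canonical witness rectangle $\rec^\star\in\mathcal{R}_i$: take $\rec^\star$ to be the bounding box of $\epsample_i\cap \rect$, whose $h$-th coordinates are $\rec^{\star -}_h=\min\{p_h:p\in\epsample_i\cap \rect\}$ and $\rec^{\star +}_h=\max\{p_h:p\in\epsample_i\cap \rect\}$. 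Each of these is a coordinate of a point of $\epsample_i$, so $\rec^\star\in\mathcal{R}_i$; moreover $\rec^\star\subseteq \rect$ and $\rec^\star\cap\epsample_i=\rect\cap\epsample_i$, hence $w_{q_{\rec^\star}}=\M_\rect(\epsample_i)\geq \intervalL-\eps-\delta$.

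Now I check that $q_{\rec^\star}$ satisfies the range-tree query. By construction, $q_{\rec^\star}=(\rec^{\star -}_1,\ldots,\rec^{\star -}_d,\rec^{\star +}_1,\ldots,\rec^{\star +}_d)$, and containment $\rec^\star\subseteq \rect$ is exactly equivalent to $\rec^{\star -}_h\geq \rect^-_h$ and $\rec^{\star +}_h\leq \rect^+_h$ for every $h$, which in turn is equivalent to $q_{\rec^\star}\in \rect'$. Combined with $w_{q_{\rec^\star}}\in I'$, we get $q_{\rec^\star}\in Q_i$ is a valid point returned by $\rangetree.\mathsf{ReportFirst}(\rect',I')$.

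Finally, I would argue by a small invariant on the while-loop of Algorithm~\ref{alg:query}. The loop removes points of $Q_j$ from $\rangetree$ \emph{only} after reporting $j$, and it terminates only when $\rangetree.\mathsf{ReportFirst}(\rect',I')=\mathsf{NULL}$, i.e., no remaining point in the tree lies in $\rect'$ with weight in $I'$. So as long as $q_{\rec^\star}\in Q_i$ is still present in $\rangetree$, the loop cannot terminate. Therefore either $i$ is reported in some iteration (after which $Q_i$ is deleted), or the loop would keep finding a satisfying point in $Q_i$; in either case $i\in J$. The expected main obstacle is isolating the witness rectangle $\rec^\star$ cleanly and verifying that the containment-in-orthant lifting $\rec\mapsto q_\rec$ preserves the needed logical equivalence, which is why I have separated it out as a standalone step.
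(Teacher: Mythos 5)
Your argument is correct and essentially the paper's own proof: your bounding-box witness $\rec^\star$ is exactly the paper's choice of the largest rectangle of $\mathcal{R}_i$ contained in $\rect$ (both satisfy $\rec^\star\cap\epsample_i=\rect\cap\epsample_i$), and your union bound plus the orthant-lifting check correspond to the paper's invocation of Lemma~\ref{lem:helper1} and the definition of $\rect'$. The only quibble is your parenthetical for the case $\intervalL\leq\eps+\delta$: a trivial weight interval $I'$ does not help when $\epsample_i\cap\rect=\emptyset$, since then no point of $Q_i$ lies in $\rect'$ at all --- but the paper's proof silently assumes this non-degenerate situation as well, so your treatment matches it in substance.
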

\begin{proof}
From Lemma~\ref{lem:helper1} we have that every $\epsample_i$ is an $(\eps+\delta)$-sample for $\points_i$ with probability at least $1-\prob$. Let $i\in \query_\Pi(\sets)$ be an index such that $\intervalL\leq \M_\rect(\points_i)$. We show that $i\in J$. Let $\rec$ be the largest rectangle in $\mathcal{R}_i$ such that $\rec\subseteq \rect$. 
  By definition, it holds that $\rec\cap \epsample_i = \rect\cap \epsample_i$. If we show that $q_\rec\in \rect'\cap Q$ and $w_{q_\rec}\geq \intervalL-\eps-\delta$, then the result follows. Indeed, by definition of $\epsample_i$, we have $\intervalL-\eps-\delta\leq \frac{|\points_i\cap \rect|}{|\points_i|}-\eps-\delta\leq \frac{|\epsample_i\cap \rect|}{|\epsample_i|}=\frac{|\epsample_i\cap \rec|}{|\epsample_i|}=\M_\rec(\epsample_i)=w_{q_\rec}$. Furthermore, since $\rec\subseteq \rect$, we have $\rect^-_h\leq \rec^-_h$ and $\rect^+_h\geq \rec^+_h$ for every $h\in [d]$. Hence, by the definition of the orthant $\rect'$, it holds that $q_\rec\in \rect'\cap Q$.
\end{proof}

\begin{lemma}
    For every $j\in J$ it holds that $ \M_\rect(\points_j)\geq \intervalL-2\eps-2\delta$, with probability at least $1-\prob$. 
\end{lemma}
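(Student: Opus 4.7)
The plan is to unwind the query procedure and chain two approximations: the combinatorial guarantee from the orthant mapping, and the $(\eps+\delta)$-sample guarantee of each $\epsample_j$.

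First, I would start from the fact that index $j$ gets added to $J$ only when $\rangetree.\mathsf{ReportFirst}(\rect', I')$ returns some point $q_\rec \in Q_j$ with $q_\rec \in \rect'$ and $w_{q_\rec} \in I' = [\intervalL - \eps - \delta, 1]$. By the construction in Algorithm~\ref{alg:construction}, the coordinates of $q_\rec$ are the concatenated opposite corners of a rectangle $\rec \in \mathcal{R}_j$, and the orthant $\rect'$ was defined precisely so that $q_\rec \in \rect'$ is equivalent to $\rect^-_h \leq \rec^-_h$ and $\rec^+_h \leq \rect^+_h$ for every $h \in [d]$; that is, $\rec \subseteq \rect$. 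Therefore $|\rec \cap \epsample_j| \leq |\rect \cap \epsample_j|$, and combined with the weight bound this gives
\[
\frac{|\rect \cap \epsample_j|}{|\epsample_j|} \;\geq\; \frac{|\rec \cap \epsample_j|}{|\epsample_j|} \;=\; w_{q_\rec} \;\geq\; \intervalL - \eps - \delta.
\]

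Second, I would invoke Lemma~\ref{lem:helper1} on each $\epsample_j$. Since $|\epsample_j| = \Theta(\eps^{-2}\log(\setsize \prob^{-1}))$, Lemma~\ref{lem:helper1} (applied with failure probability $\prob/\setsize$ per synopsis) guarantees that $\epsample_j$ is an $(\eps+\delta)$-sample for $\points_j$ with probability at least $1 - \prob/\setsize$. Taking a union bound over all $\setsize$ synopses, with probability at least $1 - \prob$ every $\epsample_i$ is simultaneously an $(\eps+\delta)$-sample for $\points_i$. Conditioned on this event, for the reported index $j$ and the query rectangle $\rect$ (which belongs to the range space associated with $\F^d_{\square}$),
\[
\left|\frac{|\rect \cap \points_j|}{|\points_j|} - \frac{|\rect \cap \epsample_j|}{|\epsample_j|}\right| \;\leq\; \eps + \delta.
\]

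Combining the two displayed inequalities yields
\[
\M_\rect(\points_j) \;=\; \frac{|\rect \cap \points_j|}{|\points_j|} \;\geq\; \frac{|\rect \cap \epsample_j|}{|\epsample_j|} - (\eps + \delta) \;\geq\; \intervalL - 2\eps - 2\delta,
\]
as claimed. The one subtle point to double-check is that the deletion/re-insertion bookkeeping in Algorithm~\ref{alg:query} does not affect correctness: once $j$ is added to $J$, all of $Q_j$ is removed from $\rangetree$ so that $j$ cannot be reported twice, but the argument above only uses the single witness $q_\rec$ that triggered the report, so deletions are irrelevant to the soundness bound. The main (minor) obstacle is really just tracking the probability budget across the $\setsize$ synopses, which is handled cleanly by choosing the sample size $\Theta(\eps^{-2}\log(\setsize \prob^{-1}))$ and applying a union bound as above.
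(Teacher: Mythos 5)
Your proposal is correct and follows essentially the same route as the paper's proof: extract the witness point $q_\rec\in Q_j$ returned by $\rangetree.\mathsf{ReportFirst}$, use the orthant construction to get $\rec\subseteq\rect$ and hence $|\rec\cap\epsample_j|\leq|\rect\cap\epsample_j|$, and then apply the $(\eps+\delta)$-sample guarantee of Lemma~\ref{lem:helper1} to transfer the weight bound to $\M_\rect(\points_j)$. Your explicit per-synopsis union bound over the $\prob/\setsize$ failure probabilities is just a slightly more careful spelling-out of the probability bookkeeping the paper leaves implicit.
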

\begin{proof}
    Let $j\in J$ be an index reported by our query procedure. Let $q_\rec\in Q_j$ be the point found in $\rect'$ at the moment that we reported $j$. By definition, $w_{q_\rec}=\M_\rec(\epsample_i)=\frac{|\rec\cap \epsample_j|}{|\epsample_j|}\geq \intervalL-\eps-\delta$, and  $\rec\subseteq \rect$. Hence, we have $|\rec\cap \epsample_j|\leq |\rect\cap \epsample_j|$. By definition of $\epsample_j$, we have $\M_\rect(\points_j)=\frac{|\points_j\cap \rect|}{|\points_j|}\geq \frac{|\rect\cap \epsample_j|}{|\epsample_j|}-\eps-\delta\geq \frac{|\rec\cap \epsample_j|}{|\epsample_j|}-\eps-\delta\geq \intervalL-2\eps-2\delta$.
\end{proof}


\paragraph{Analysis}
\new{For simplicity, we assume that $\eps$ is a an arbitrarily small constant and $\prob=\frac{1}{\setsize}$. In Appendix~\ref{appndx:approxpercopenright} we show the analysis for arbitrary values of $\eps$ and $\prob$.}
The proof of the next lemma can be found in Appendix~\ref{appndx:approxpercopenright}. We set $\eps\leftarrow \eps/2$ and we get the final result.
\begin{lemma}
\label{lem:space1}
\new{
The data structure has $O(\setsize\cdot\log^{4d}(\setsize))$ space and it can be constructed in $O(\setsize\cdot\log^{4d}(\setsize))$ time.
The query procedure runs in $O(\log^{2d+1}(\setsize)+\out\cdot\log^{4d+1}(\setsize))$ time.} 
\end{lemma}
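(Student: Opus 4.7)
}

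The plan is to track the sizes of the various auxiliary structures produced by Algorithm~\ref{alg:construction} as a function of $\setsize$, and then plug these sizes into the known bounds for dynamic range trees in $\Re^{2d}$ that are stated in the Preliminaries. First, since $\eps$ is an arbitrarily small constant and $\prob=1/\setsize$, the sample size $\Theta(\eps^{-2}\log(\setsize\prob^{-1}))$ is $O(\log \setsize)$, so $|\epsample_i|=O(\log \setsize)$ for every $i\in[\setsize]$.

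Next, I would count combinatorially distinct axis-parallel rectangles on $|\epsample_i|$ points in $\Re^d$: each such rectangle is determined by choosing the coordinate values of its two opposite corners, giving at most $O(|\epsample_i|^{2d})=O(\log^{2d}\setsize)$ rectangles, so $|Q_i|=O(\log^{2d}\setsize)$. Summing over the $\setsize$ datasets gives $|Q|=O(\setsize\log^{2d}\setsize)$, and each point of $Q$ lies in $\Re^{2d}$ with a scalar weight. Applying the dynamic range tree bounds in dimension $2d$ from the Preliminaries, the space and preprocessing on $|Q|$ weighted points are both $O(|Q|\log^{2d}|Q|)=O(\setsize\log^{2d}\setsize\cdot\log^{2d}(\setsize\log^{2d}\setsize))=O(\setsize\log^{4d}\setsize)$, which yields the claimed space and preprocessing bounds.

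For the query time, the same dynamic range tree supports a single $\mathsf{ReportFirst}$ call in $O(\log^{2d+1}|Q|)=O(\log^{2d+1}\setsize)$ time, and each insertion/deletion takes $O(\log^{2d+1}\setsize)$ time. The while-loop of Algorithm~\ref{alg:query} performs exactly $|J|+1$ calls to $\mathsf{ReportFirst}$ (one per reported index, plus the final call that returns $\mathsf{NULL}$); each reported index $j$ also triggers the deletion of the $|Q_j|=O(\log^{2d}\setsize)$ points of $Q_j$ from $\rangetree$, costing $O(\log^{2d}\setsize\cdot\log^{2d+1}\setsize)=O(\log^{4d+1}\setsize)$ per reported index. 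The final loop re-inserts the same points, again at cost $O(|J|\cdot\log^{4d+1}\setsize)$. Adding the initial $\mathsf{ReportFirst}$ call gives a total of $O(\log^{2d+1}\setsize+\out\cdot\log^{4d+1}\setsize)$, matching the claim.

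The argument is essentially bookkeeping: the only subtle step is the bound $|Q_i|=O(\log^{2d}\setsize)$ on combinatorially distinct rectangles, for which I would briefly observe that a rectangle is determined by the $2d$ coordinate values of its two opposite corners and each of those values is realized by one of the $O(\log \setsize)$ sample coordinates. After replacing $\eps$ by $\eps/2$ at the end, the proof is complete.
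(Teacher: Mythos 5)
Your proposal is correct and follows essentially the same bookkeeping as the paper's proof: bound $|\epsample_i|=O(\log\setsize)$, count $O(|\epsample_i|^{2d})$ combinatorially distinct rectangles per dataset, apply the dynamic range tree bounds in dimension $2d$ for space/preprocessing, and charge each reported index with $O(\log^{2d}\setsize)$ deletions (and later re-insertions) at $O(\log^{2d+1}\setsize)$ per update. The only detail the paper handles explicitly that you skip is the time to compute the weights $w_{q_\rec}$, but even a naive count over $\epsample_i$ costs only $O(\setsize\log^{2d+1}\setsize)$ overall, which is dominated by the stated $O(\setsize\log^{4d}\setsize)$ bound.
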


\begin{theorem}
\label{thm:res1}
Let $\{\Synop_{\points_1},\ldots,\Synop_{\points_\setsize}\}$ be the input to the $\problemDI$ problem, such that $\Synop_{\points_i}$ is a synopsis of a dataset $\points_i\subset \Re^d$, where $d\geq 1$ is a constant, with $\err_{\Synop_{\points_i}}(\F^d_{\square})\leq \delta$ for $\delta\in[0,1)$, for every $i\in[\setsize]$. Let $\eps\in (0,1)$ be an arbitrarily small constant parameter.
A data structure of size $O(\setsize\cdot\log^{4d}(\setsize))$ can be constructed in $O(\setsize\cdot\log^{4d}(\setsize))$  time, such that
given a query $\Pi=\pred_{\M_\rect,\interval}$ for a one-sided interval $\interval=[\intervalL,1]$,
it returns a set of indexes $J$ such that $\query_\Pi(\sets)\subseteq J$ and for every $j\in J$, $\M_\rect(\points_j)\geq \intervalL-\eps-2\delta$, with probability at least $1-\frac{1}{\setsize}$.
The query time is $O(\log^{2d+1}(\setsize)+\out\cdot\log^{4d+1}(\setsize))$.

\end{theorem}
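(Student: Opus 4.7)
The plan is to assemble Theorem~\ref{thm:res1} by composing three ingredients already established in this subsection: the two correctness lemmas (inclusion of $\query_\Pi(\sets)$ in $J$ and the additive error bound for each reported $j$), and the complexity statement Lemma~\ref{lem:space1}. The only non-routine step, following the paper's hint, is a rescaling of the parameter $\eps$ by a factor of $2$ together with a union bound that pins down the success probability at $1-1/\setsize$.

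First, I would invoke Algorithm~\ref{alg:construction} with parameters $\eps' := \eps/2$ and $\prob := 1/\setsize$. This makes every sampled set $\epsample_i$ have cardinality $\Theta((\eps')^{-2}\log(\setsize \prob^{-1})) = \Theta(\eps^{-2}\log\setsize)$; since $\eps$ is treated as a constant, each $|\epsample_i| = O(\log\setsize)$, so the size bound on $Q$ and $W$ computed in Lemma~\ref{lem:space1} translates directly to $O(\setsize\log^{4d}\setsize)$ space and preprocessing time, and the query time becomes $O(\log^{2d+1}\setsize + \out\cdot\log^{4d+1}\setsize)$, matching the statement of the theorem.

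Next, I would control the failure probability. By Lemma~\ref{lem:helper1}, each individual $\epsample_i$ fails to be an $(\eps'+\delta)$-sample of $\points_i$ with probability at most $\prob/\setsize = 1/\setsize^2$, because the sample size chosen in the algorithm is calibrated to this finer failure target. A union bound over the $\setsize$ synopses then guarantees that, with probability at least $1 - 1/\setsize$, \emph{every} $\epsample_i$ is simultaneously an $(\eps/2+\delta)$-sample of $\points_i$. Conditioning on this good event, both correctness lemmas apply with $\eps$ replaced by $\eps/2$: the first gives $\query_\Pi(\sets)\subseteq J$, and the second gives $\M_\rect(\points_j) \geq \intervalL - 2(\eps/2) - 2\delta = \intervalL - \eps - 2\delta$ for every $j \in J$.

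Honestly, there is no significant obstacle in this wrap-up: the heavy lifting was done in Subsection~\ref{subsec:techOverview} (the mapping of combinatorially distinct rectangles on each coreset into weighted points in $\Re^{2d}$, so that containment in $\rect$ becomes an orthant query in higher dimension), in Algorithms~\ref{alg:construction} and \ref{alg:query}, and in the prior analysis yielding Lemma~\ref{lem:space1}. The only delicate bookkeeping in the proof itself is ensuring that the constant hidden in the sample size matches the union-bound budget, and that the rescaling $\eps \mapsto \eps/2$ is tracked consistently from the sample size, through the error guarantee of the coreset, to the final additive error $\eps + 2\delta$ stated in the theorem.
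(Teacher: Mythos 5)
Your proposal is correct and follows essentially the same route as the paper: instantiate the construction and query algorithms with $\eps/2$ and $\prob=1/\setsize$, apply the two correctness lemmas of Subsection~\ref{subsec:approxpercopenright} (whose $2\eps$ error becomes $\eps$ after the rescaling), and cite Lemma~\ref{lem:space1} for the space, preprocessing, and query bounds. The only difference is that you spell out the per-synopsis failure probability $1/\setsize^2$ and the union bound over the $\setsize$ coresets, which the paper leaves implicit in its choice of sample size $\Theta(\eps^{-2}\log(\setsize\prob^{-1}))$; this is a faithful (and slightly more explicit) rendering of the same argument.
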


\subsection{Approximate data structure for the \problemDI problem with a range-predicate}
\label{subsec:approxPercRangePred}

\begin{figure}[t]
 \begin{minipage}{0.4\linewidth}
        \centering
        \includegraphics[scale=0.3]{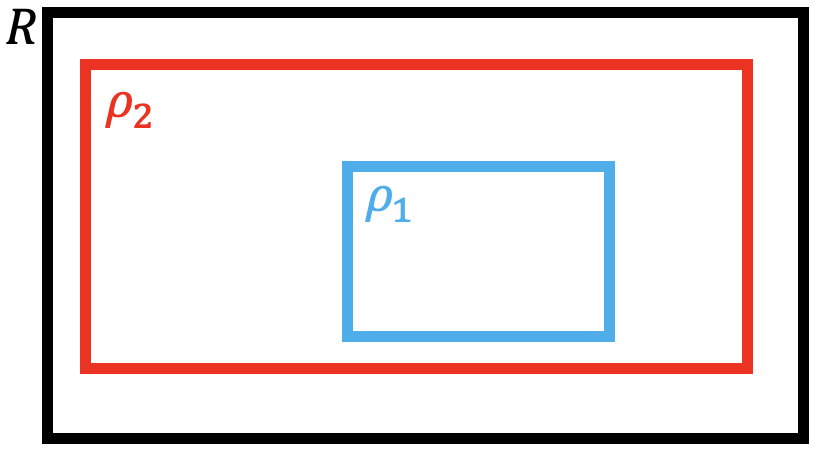}
    \caption{Let $\eps=0.01$,  $\delta=0$, and $\rect$ is the query rectangle. Let $w_{q_{\rec_1}}=\frac{|\epsample_j\cap \rec_1|}{|\epsample|}=0.1$ and $w_{q_{\rec_2}}=\frac{|\epsample_j\cap \rec_2|}{|\epsample|}=0.9$, while $\interval=[0, 0.2]$. If the query procedure first finds $q_{\rec_1}$ it will report the index $j$. However, the actual measure function of $\points_j$ is at least $0.9-0.01\gg 0.2$. Hence $j$ should not be reported.} 
    \label{fig:fig1} 
\end{minipage}
\hspace{1cm}
\begin{minipage}{0.4\linewidth}
        \centering
        \vspace{-2mm}
        \includegraphics[scale=0.3]{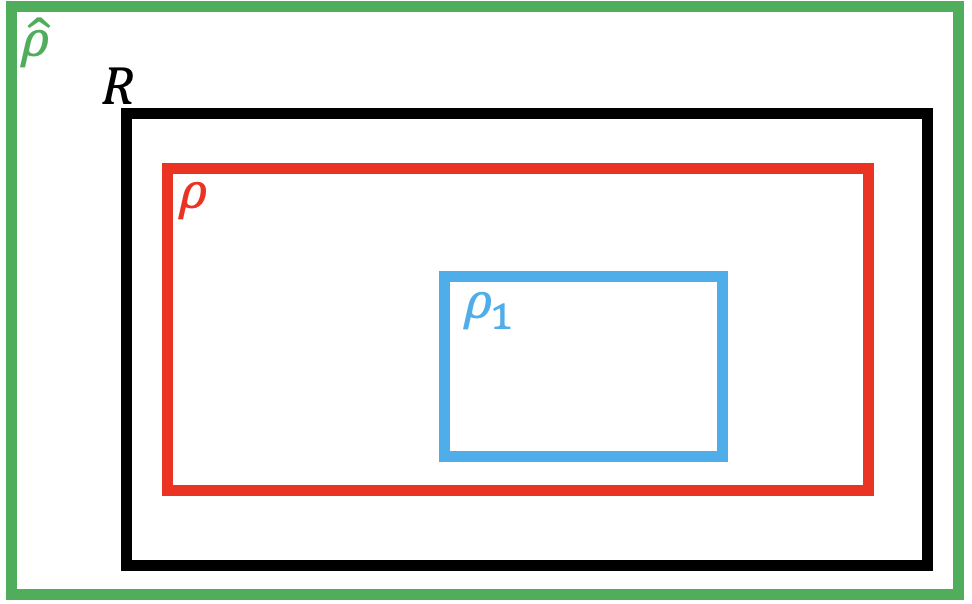}
    \caption{If $\rect$ is the query rectangle and $\rec$ is the largest rectangle in $\rect$ then $\rec\subseteq \rect\subset\!\!\subset\hat{\rec}$. On the other hand, for any other rectangle $\rec_1\in \mathcal{R}_j$
    with $\rec_1\subset\rec$
    there is no point $q_{(\rec_1,\hat{\rec}_1)}\in Q$ such that $\rect\subset\!\!\subset \hat{\rec}_1$ because there exists $\rec$ with $\rec_1\subset \rec\subseteq \rect\subset\!\!\subset\hat{\rec}_1$.} 
    \label{fig:fig2} 
\end{minipage}
\vspace{-1em}
\end{figure}
\new{
Unfortunately, the data structure in the previous subsection, does not extend to range-predicates. 
Previously, any arbitrary precomputed rectangle (with sufficient weight) completely inside $\rect$ could be selected. However this is not valid for range predicates. For example check Figure~\ref{fig:fig1}. 
Similarly, in the example from Figure~\ref{fig:all}, assume that $\rect=[3,8]$, $\interval=[0.2, 0.4]$, and $\eps<0.01$.
Consider the precomputed interval $[4,4]$ in $\mathcal{R}_2$. This interval lies inside $\rect$ and it has weight $1/4\in [0.2-\eps,0.4+\eps]$.
So it is returned by $\rangetree.\mathsf{ReportFirst}(\rect',I')$.
However, $\frac{|\rect\cap \points_2|}{|\points_2|}\geq \frac{|\rect\cap \epsample_2|}{|\epsample_2|}-\eps=w_{(4,6)}-\eps=2/4-\eps> 0.4$, so the dataset with index $2$ should not be reported.

As we highlighted at the beginning of this section, in order to resolve the issue, we should always consider maximal precomputed rectangles in the data structure. A precomputed rectangle $\rec$ (generated from $\epsample_j$) is maximal with respect to $\rect$ if $\rec$ 
lies inside $\rect$ and there is no other precomputed rectangle (generated from the same $\epsample_j$) in $\rect$ that fully contains $\rec$. In the example from Figure~\ref{fig:all}, $\epsample_1$'s maximal precomputed rectangle (interval in $d=1$) in $\rect$ is $[7,7]$, while $\epsample_2$'s maximal precomputed interval in $\rect$ is $[4,6]$. Hence, only the points $(7,7)$ and $(4,6)$ should be considered to answer the query. Indeed, for $\interval=[0.2,0.4]$, the weight of point $(4,6)$ is sufficiently larger than $0.4$, so index $2$ (correctly) is not reported. On the other hand, the index $1$ is reported because the weight of point $(7,7)$ (that corresponds to the maximal interval $[7,7]$) lies in $[0.2-\eps,0.4+\eps]$.

In order to search over maximal rectangles,
we compute and store all pairs of precomputed rectangles $\rec,\hat{\rec}$ (generated from the same coreset) such that $\rec\subseteq \hat{\rec}$ and there is no other precomputed rectangle $\rec'$ such that $\rec\subset \rec'\subset\!\!\subset \hat{\rec}$, where $\rec'\subset\!\!\subset\hat{\rec}$ denotes that $\rec'\subset \hat{\rec}$ and the boundary of $\rec'$ does not intersect the boundary of $\hat{\rec}$.
Furthermore, we assign the weight $\frac{|\rec\cap \epsample_i|}{|\epsample_i|}$ to the pair $\rec,\hat{\rec}$.
We observe that for any query rectangle $\rect$, only its maximal precomputed rectangle $\rec$ will belong to at least one pair $(\rec,\hat{\rec})$ such that $\rec\subseteq \rect\subset\!\!\subset \hat{\rec}$. For example, see Figure~\ref{fig:fig2}.

From the discussion above, our goal is to construct a data structure such that, given a query rectangle $\rect$, and an interval $\interval$, it returns  (if it exists) a pair of precomputed rectangles $\rec,\hat{\rec}$ such that $\rec\subseteq\rect\subset\!\!\subset \hat{\rec}$ with weight between $\intervalL-\eps-\delta$ and $\intervalU+\eps+\delta$. Then, remove all pairs of rectangles that contain $\rec$ and continue until no pair is returned. We use a dynamic range tree to find such pairs. Since a range tree stores a set of points, we map every stored pair of precomputed rectangles (along with its weight) to a weighted point in $\Re^{4d}$. Given a query rectangle $\rect$ in $\Re^d$, we map it to an orthant $\rect'$ in $\Re^{4d}$ such that there exists a stored pair $\rec,\hat{\rec}$ with $\rec\subseteq \rect\subset\!\!\subset \hat{\rec}$, if and only if, pair's mapped point in $\Re^{4d}$ lies inside the orthant $\rect'$.
}

\begin{algorithm}[t]
  \caption{\new{$\mathsf{ConstructR}(\Synop_{\points_1}, \ldots, \Synop_{\points_\setsize},\eps,\prob)$}}
  \label{alg:Gconstruction}
   \small
  $Q=\emptyset$, $W=\emptyset$, $\mathcal{B}$= bounding box\;
  \ForEach{$i\in[\setsize]$}{ 
    $Q_i=\emptyset$, $W_i=\emptyset$\;
    $\epsample_i=\Synop_{\points_i}.\mathsf{Sample}( \Theta(\eps^{-2}\log(\setsize\prob^{-1})))$\;
    $\bar{\epsample}_i=\text{ Projections of } \epsample_i \text{ onto the } 2\cdot d \text{ boundaries of } \mathcal{B}$\;
    $\mathcal{R}_i=\text{ All combinatorially different hyper-rectangles defined by }        
        \epsample_i\cup \bar{\epsample}_i$, \hspace{0.2em}
    $\hat{\mathcal{R}}_i=\emptyset$\;
    \ForEach{pair $(\rec,\hat{\rec})\in \mathcal{R}_i\times \mathcal{R}_i$ with $\rec\subseteq \hat{\rec}$}{
        \lIf{$\nexists \rec'\in \mathcal{R}_i: \rec\subset \rec'\subset\!\!\subset \hat{\rec}$}{
            $\hat{\mathcal{R}}_i=\hat{\mathcal{R}}_i\cup \{(\rec,\hat{\rec})\}$
        }
    }
    
    \ForEach{$(\rec,\hat{\rec})\in \hat{\mathcal{R}}_i$}{
        $q_{(\rec,\hat{\rec})}=(\rec^-_1,\ldots, \rec^-_d,\hat{\rec}^-_1,\ldots, \hat{\rec}^-_d, \rec^+_1,\ldots, \rec^+_d, \hat{\rec}^+_1,\ldots, \hat{\rec}^+_d)$, \hspace{0.2em} $q_{(\rec,\hat{\rec})}.\mathsf{IndexDataset}=i$\;
        $w_{q_{(\rec,\hat{\rec})}}=\frac{|\rec\cap \epsample_i|}{|\epsample_i|}$\;
        $Q_i=Q_i\cup\{q_{(\rec,\hat{\rec})}\}$, $W_i=W_i\cup\{w_{q_{(\rec,\hat{\rec})}}\}$\;
    }
    $Q=Q\cup Q_i$, $W=W\cup W_i$\;
  }
  $\rangetree=\mathsf{DRangeTreeConstruct}(Q,W)$\;
  \Return $\rangetree$\;
\end{algorithm}

\begin{algorithm}[t]
  \caption{\new{$\mathsf{QueryR}(\rangetree, Q, \rect, \interval, \eps,\delta)$}}
  \label{alg:Gquery}
  \small
  $\rect'\!\!=\!\![\rect^-_1,\infty)\!\times\! \ldots\!\times\! [\rect^-_d,\infty)
\!\times\! (-\infty,\rect^-_1)\!\times\!\ldots\times (-\infty,\rect^-_d)\times (-\infty,\rect^+_1]\times\ldots\times(-\infty,\rect^+_d]\times (\rect^+_1,\infty)\times\ldots\times (\rect^+_d,\infty)$\;
$I'=[\intervalL-\eps-\delta,\intervalU+\eps+\delta]$\;
  $q_{(\rec,\hat{\rec})}=\rangetree.\mathsf{ReportFirst}
  (\rect', I')$,\hspace{0.2em}
  $J=\emptyset$\;
  \While{$q_{(\rec,\hat{\rec})}\neq \mathsf{NULL}$}{
    $j=q_{(\rec,\hat{\rec})}.\mathsf{IndexDataset}$, $J=J\cup\{j\}$,\hspace{0.2em}
    $\mathsf{Report}\hspace{0.3em} j$\;
    \lForEach{$q\in Q_j$}{Delete $q$ from $\rangetree$}
    $q_{(\rec,\hat{\rec})}=\rangetree.\mathsf{ReportFirst}
  (\rect', I')$\;
   }
\lForEach{$j\in J$}{Re-insert all points from $Q_j$ in $\rangetree$}
  \Return\;
\end{algorithm}

\paragraph{Data structure}
\new{In Algorithm~\ref{alg:Gconstruction}, we show the pseudocode of the construction phase.}
Without loss of generality assume that all datasets lie in a bounded box $\mathcal{B}$.
As we had in Subsection~\ref{subsec:approxpercopenright}, 
for every $i\in[\setsize]$, we get an $(\eps+\delta)$-sample $\epsample_i\subset\Re^d$ by sampling $\Theta(\eps^{-2}\log(\setsize\prob^{-1}))$ points from $\Synop_{\points_i}$ uniformly at random.
For every sample point $s\in\epsample$ we get its projections on the $2\cdot d$ facets of $\mathcal{B}$, and let $\bar{\epsample}_i$ be the set of projections.\footnote{The bounding box $\mathcal{B}$ and the projections $\bigcup_{i\in[\setsize]}\bar{\epsample}_i$ are needed only in this subsection for technical reasons in the correctness proof.}
For every $i\in [\setsize]$, we construct the set $\mathcal{R}_i$ that contains all combinatorially different hyper-rectangles defined by $\epsample_i\cup\bar{\epsample}_i$. Every hyper-rectangle $\rec\in \mathcal{R}_i$ is defined by their two opposite corners $\rec^-,\rec^+\in \Re^d$ such that $\rec^-_h\leq \rec^+_h$, for every $h\in[d]$, where $\rec^-_h, \rec^+_h$ are the $h$-coordinates of $\rec^-$ and $\rec^+$, respectively.
We construct all pairs of rectangles $\hat{\mathcal{R}}_i=\{(\rec, \hat{\rec})\mid \rec\in \mathcal{R}_i, \hat{\rec}\in \mathcal{R}_i, \rec\subseteq \hat{\rec}, \nexists \rec'\in \mathcal{R}_i: \rec\subset \rec'\subset\!\!\subset \hat{\rec}\}$. For every pair $(\rec,\hat{\rec})\in \hat{\mathcal{R}}$, let 
$q_{(\rec,\hat{\rec})}=(\rec^-_1,\ldots, \rec^-_d,\hat{\rec}^-_1,\ldots, \hat{\rec}^-_d, \rec^+_1,\ldots, \rec^+_d, \hat{\rec}^+_1,\ldots, \hat{\rec}^+_d)$ be the point in $\Re^{4d}$ defined by merging the opposite corners of $\rec$ and $\hat{\rec}$.
We also store in $q_{(\rec,\hat{\rec})}.\mathsf{IndexDataset}$ the index $i$ so that we know the index of the dataset/synopsis that $q_{(\rec,\hat{\rec})}$ was constructed for.
We set the weight of $q_{(\rec,\hat{\rec})}$ to be $w_{q_{(\rec,\hat{\rec})}}=\M_\rec(\epsample_i)=\frac{|\rec\cap \epsample_i|}{|\epsample_i|}$. Notice that the weight $w_{q_{(\rec,\hat{\rec})}}$ is defined only with respect to $\rec$.
Let $Q_i=\{q_{(\rec,\hat{\rec})}\mid (\rec,\hat{\rec})\in \hat{\mathcal{R}}_i\}$ and $W_i=\{w_{q_{(\rec,\hat{\rec})}}\mid q_{(\rec,\hat{\rec})}\in Q_i\}$. Finally, we define $Q=\bigcup_{i\in [\setsize]}Q_i$ and $W=\bigcup_{i\in [\setsize]}W_i$. We construct a dynamic range tree $\rangetree$ over the (weighted) points $Q$ with weights $W$ calling the function $\mathsf{DRangeTreeConstruct}(Q,W)$.

\paragraph{\textit{Example}}
\new{
Back to our example from Figure~\ref{fig:all}.
We construct all pairs in $\hat{\mathcal{R}}_1$ and $\hat{\mathcal{R}}_2$.
For instance, the pair $([7,7], [1,9])$ belongs in $\hat{\mathcal{R}}_1$ because $[7,7]$ is a subset of $[1,9]$ and there is no other interval in $\mathcal{R}_1$ strictly inside $[1,9]$ that contains $[7,7]$ (while $[7,7]\subset [7,9]\subset [1,9]$ it does not hold that $[7,9]\subset\!\!\subset [1,9]$).
Similarly, the pair $([4,6], [2,10])$ belongs in $\hat{\mathcal{R}}_2$. On the other hand the pair $([6,6],[2,10])\notin \hat{\mathcal{R}}_2$ because $[6,6]\subset [4,6]\subset\!\!\subset [2,10]$.
Each pair in $\hat{\mathcal{R}}_1\cup \hat{\mathcal{R}}_2$ is mapped to a weighted point in $\Re^4$, for example the pair $([7,7], [1,9])$ is mapped to the point $(7,1,7,9)\in \Re^4$ with weight $\frac{[7,7]\cap \epsample_1}{|\epsample_1|}=\frac{1}{3}$. The dynamic range tree is constructed over the weighted points in $\Re^4$.
}

\paragraph{Query procedure}
\new{In Algorithm~\ref{alg:Gquery}, we show the pseudocode of the query phase.
We are given a query rectangle $\rect$ in $\Re^d$ and a query interval $\interval=[\intervalL,\intervalU]$.
Without loss of generality, assume that the query rectangle $\rect$ is contained in the bounding box $\mathcal{B}$.
Let $\rect^-, \rect^+$ be the two opposite corners of $\rect$.
Without loss of generality we can assume that there is no point from $Q$ that lies on an axis-align hyperplane that is tangent to $\rect$.
We define the orthant $\rect'=[\rect^-_1,\infty)\times \ldots\times [\rect^-_d,\infty)
\times (-\infty,\rect^-_1)\times\ldots\times (-\infty,\rect^-_d)\times (-\infty,\rect^+_1]\times\ldots\times(-\infty,\rect^+_d]\times (\rect^+_1,\infty)\times\ldots\times (\rect^+_d,\infty)$ in $\Re^{4d}$ and the weight query threshold interval $I'=[\intervalL-\eps-\delta,\intervalU+\eps+\delta]$.
We repeat the following until no index is returned:
Using the range tree's $\rangetree$ query procedure $\rangetree.\mathsf{ReportFirst}(\rect',I')$ we compute, if any, a point $q_{(\rec,\hat{\rec})}$ in $Q\cap \rect'$ with weight in $I'$.
Let $j$ be the index of the synopsis $\Synop_{\points_j}$ that $q_{(\rec,\hat{\rec})}$ was constructed for. We report $j$, and we delete from $\rangetree$ all points $\{q_{(\rec,\bar{\rec})}\mid\rec\in \mathcal{R}_j\}$. We continue with the next iteration.
Let $J$ be the indexes we reported.
In the end of the process, we re-insert in $\rangetree$ all points we removed from $Q_j$ for every index $j\in J$.}

\paragraph{\textit{Example (cont.)}}
\new{
Let $\delta=0$, $\eps<0.01$, $\rect=[3,8]$ and $\interval=[0.2,0.4]$. We define the orthant $\rect'=[3,\infty)\times (-\infty,3)\times (-\infty, 8]\times (8,\infty)$ and the interval $I'=[0.2-\eps,0.4+\eps]$. The procedure $\rangetree.\mathsf{ReportFirst}(\rect', I')$ returns the point $(7,1,7,9)$ because it lies in $\rect'$ and its weight is $1/3\in[0.2-\eps,0.4+\eps]$. Hence, the index $1$ is reported.
On the other hand, the index $2$ is not reported. The only point from $Q_2$ that lies in $\rect'$ is the point $(4,2,6,10)$ that corresponds to the pair $([4,6], [2,10])\in \hat{\mathcal{R}}_2$. However, the weight of this point is $\frac{|[4,6]\cap \epsample_2|}{|\epsample_2|}=\frac{2}{4}=0.5>0.4+\eps$ so it is not returned. Indeed $\frac{|\rect\cap \points_2|}{|\points_2|}\geq \frac{|\rect\cap \epsample_2|}{|\epsample_2|}-\eps=0.5-\eps>0.4$, so index $2$ should not be reported.
}

\paragraph{Correctness}
We first start with a primitive lemma. It shows that whenever $\rangetree$ finds a point $q_{(\rec,\hat{\rec})}$ (with $\rec\in \mathcal{R}_j$), then $\rec$ is actually the largest rectangle in $\mathcal{R}_j$ with $\rec \subseteq \rect$. See Figure~\ref{fig:fig2}.
\begin{lemma}
\label{lem:techlem2}
 Let $j\in J$ be an index reported by the query procedure and let $q_{(\rec,\hat{\rec})}$ be the point that was found when $j$ was reported. There is no rectangle $\rec'\in \mathcal{R}_j$ such that $\rec\subset \rec'\subseteq \rect$.  
\end{lemma}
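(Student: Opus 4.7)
The plan is to derive the claim from an unpacking of the definition of the orthant $\rect'$ and the defining property of the set $\hat{\mathcal{R}}_j$ from which the pair $(\rec,\hat{\rec})$ was drawn. Concretely, I would first read off, coordinate by coordinate, what it means for $q_{(\rec,\hat{\rec})} \in \rect'$. Recall that
\[
q_{(\rec,\hat{\rec})}=(\rec^-_1,\ldots,\rec^-_d,\hat{\rec}^-_1,\ldots,\hat{\rec}^-_d,\rec^+_1,\ldots,\rec^+_d,\hat{\rec}^+_1,\ldots,\hat{\rec}^+_d),
\]
so membership in $\rect'$ forces, for every $h\in[d]$, the four inequalities $\rec^-_h\geq \rect^-_h$, $\hat{\rec}^-_h<\rect^-_h$, $\rec^+_h\leq \rect^+_h$, and $\hat{\rec}^+_h>\rect^+_h$. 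Reading these off geometrically, this is exactly $\rec\subseteq\rect$ and $\rect\subset\!\!\subset\hat{\rec}$ (the boundaries of $\rect$ and $\hat{\rec}$ are disjoint).

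Next, I would argue by contradiction: suppose there exists $\rec'\in\mathcal{R}_j$ with $\rec\subset\rec'\subseteq\rect$. Composing $\rec'\subseteq\rect\subset\!\!\subset\hat{\rec}$ immediately yields $\rec'\subset\!\!\subset\hat{\rec}$, and combining with $\rec\subset\rec'$ gives a rectangle $\rec'\in\mathcal{R}_j$ strictly sandwiched as $\rec\subset\rec'\subset\!\!\subset\hat{\rec}$. But the construction of $\hat{\mathcal{R}}_j$ in Algorithm~\ref{alg:Gconstruction} explicitly adds the pair $(\rec,\hat{\rec})$ only when no such intermediate $\rec'$ exists. Since $q_{(\rec,\hat{\rec})}\in Q_j$ was produced from this pair, the intermediate $\rec'$ cannot exist, a contradiction.

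I expect the proof to be short once the definitions are laid out carefully; the only subtlety to make explicit is that the strict open-interval coordinates in the last $2d$ blocks of $\rect'$ are precisely what encodes the strict containment $\rect\subset\!\!\subset\hat{\rec}$ (as opposed to weak containment), which is needed so that transitivity against $\rec'\subseteq\rect$ yields $\rec'\subset\!\!\subset\hat{\rec}$ and thus triggers the exclusion rule defining $\hat{\mathcal{R}}_j$. No obstacle beyond this bookkeeping is anticipated.
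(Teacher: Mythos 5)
Your proposal is correct and follows essentially the same argument as the paper's proof: both read off from the orthant $\rect'$ that $\rec\subseteq\rect\subset\!\!\subset\hat{\rec}$, then derive $\rec\subset\rec'\subset\!\!\subset\hat{\rec}$ by transitivity and contradict the defining condition of $\hat{\mathcal{R}}_j$ (equivalently, of $Q_j$). The extra coordinate-wise unpacking you include is just a more explicit version of the same step.
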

\begin{proof}
    Assume that there exists $\rec'\in \mathcal{R}_j$ such that $\rec\subset \rec'\subseteq \rect$. By the definition of the query orthant $\rect'$, we have $\rec\subseteq \rect\subset\!\!\subset \hat{\rec}$. Hence, we have $\rec\subset \rec'\subset\!\!\subset \hat{\rec}$. This is a contradiction because, by the definition of $Q_j$, a point $q_{(\rec,\hat{\rec})}$ exists if there is no $\rec''
    \in \mathcal{R}_j$ with $\rec\subset \rec''\subset\!\!\subset \hat{\rec}$.
\end{proof}

\begin{lemma}
\label{lem:HH2}
    If $\rec\in\mathcal{R}_j$ is the maximal rectangle among rectangles in $\mathcal{R}_j$, in the query rectangle $\rect$, then there exists $\hat{\rec}^{\rect}\in \mathcal{R}_j$ such that $q_{(\rec,\hat{\rec}^{\rect})}\in Q_j$, and $\rect\subset\!\!\subset \hat{\rec}^{\rect}$.
\end{lemma}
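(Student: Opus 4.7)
The plan is to exhibit the witness $\hat{\rec}^{\rect}$ explicitly as the coordinate-wise ``tightest'' rectangle of $\mathcal{R}_j$ that strictly encloses $\rect$, and then verify (i) $\rec\subseteq \hat{\rec}^\rect$, (ii) $\rect\subset\!\!\subset \hat{\rec}^\rect$, and (iii) no rectangle of $\mathcal{R}_j$ sits strictly between $\rec$ and $\hat{\rec}^\rect$, which is exactly the condition for $(\rec,\hat{\rec}^\rect)\in \hat{\mathcal{R}}_j$ (hence $q_{(\rec,\hat{\rec}^\rect)}\in Q_j$).

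Concretely, for each dimension $h\in[d]$, I let $a_h$ be the largest $h$-th coordinate of a point in $\epsample_j\cup \bar{\epsample}_j$ that is strictly less than $\rect^-_h$, and $b_h$ the smallest such coordinate that is strictly greater than $\rect^+_h$. Both values exist: the projections $\bar{\epsample}_j$ onto the facets of the bounding box $\mathcal{B}$ contribute the extremal $h$-coordinates $\mathcal{B}^-_h,\mathcal{B}^+_h$ in every dimension, and by assumption $\rect$ lies strictly inside $\mathcal{B}$ (if not, one simply enlarges $\mathcal{B}$; the assumption that no coordinate of any $q\in Q$ coincides with a coordinate of $\rect$ is exactly what makes the strict inequalities meaningful). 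Define $\hat{\rec}^{\rect}$ to be the rectangle with opposite corners $(a_1,\ldots,a_d)$ and $(b_1,\ldots,b_d)$. Because its corners come from $\epsample_j\cup \bar{\epsample}_j$, we have $\hat{\rec}^{\rect}\in \mathcal{R}_j$. By construction $a_h<\rect^-_h$ and $b_h>\rect^+_h$ for every $h$, giving $\rect\subset\!\!\subset \hat{\rec}^{\rect}$; combined with the hypothesis $\rec\subseteq \rect$, property (i) is immediate.

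The main step is (iii). Suppose for contradiction that some $\rec'\in\mathcal{R}_j$ satisfies $\rec\subset \rec'\subset\!\!\subset \hat{\rec}^{\rect}$. Strict containment in $\hat{\rec}^{\rect}$ means $(\rec')^-_h>a_h$ and $(\rec')^+_h<b_h$ for all $h$. But $(\rec')^-_h$ is itself an $h$-coordinate of some point in $\epsample_j\cup \bar{\epsample}_j$, and by the choice of $a_h$ as the \emph{largest} such coordinate below $\rect^-_h$, the inequality $(\rec')^-_h>a_h$ forces $(\rec')^-_h\geq \rect^-_h$. Symmetrically $(\rec')^+_h\leq \rect^+_h$. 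Therefore $\rec'\subseteq \rect$, and since $\rec\subsetneq \rec'\in\mathcal{R}_j$, this contradicts the maximality of $\rec$ among rectangles of $\mathcal{R}_j$ contained in $\rect$. I expect the only delicate point to be this last coordinate-wise reasoning, which crucially relies on the presence of the boundary projections $\bar{\epsample}_j$ in the construction of $\mathcal{R}_j$: without them, $a_h$ or $b_h$ might fail to exist when $\rect$ touches the extremes of the point set, and the argument that $\rec'\subseteq \rect$ would collapse.
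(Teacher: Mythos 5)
Your proof is correct and follows essentially the same route as the paper: your coordinate-wise choice of $a_h,b_h$ is exactly the paper's construction of $\hat{\rec}^{\rect}$ by expanding each facet of $\rect$ until it hits a point of $\epsample_j\cup\bar{\epsample}_j$, with the boundary projections guaranteeing existence and the maximality of $\rec$ ruling out any intermediate rectangle. The only difference is that you spell out the last step (why $\rec\subset\rec'\subset\!\!\subset\hat{\rec}^{\rect}$ would force $\rec'\subseteq\rect$) in explicit coordinate-wise detail, which the paper asserts more briefly.
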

\begin{proof}
    Since there is no point in $Q$ that lies on an axis-align hyperplane that is tangent to $R$ we have that $\rec\subset \rect$. For every facet of $\rect$ we repeat: we expand the facet until we hit a point from $\epsample_j\cup\bar{\epsample}_j$. Notice that we will always hit such a point, either in $\epsample_j$ or on the boundary $\mathcal{B}$ (if we hit the boundary of $\mathcal{B}$, then by the construction of $\bar{\epsample}_j$ there will be at least a point from $\bar{\epsample}_j$ on the intersection of $\mathcal{B}$ and the extended facet).
    Let $\hat{\rec}^{\rect}$ be the new rectangle. By the construction of $\hat{\rec}^{\rect}$ and the fact that $\rec$ is maximal, we have that $\rect\subset\!\!\subset \hat{\rec}^{\rect}$, 
    $\hat{\rec}^{\rect}\in \mathcal{R}_j$, and there is no $\rec'\in\mathcal{R}_j$ such that $\rec\subset\rec'\subset\!\!\subset\hat{\rec}^{\rect}\in \mathcal{R}_j$.
\end{proof}

Let $\Pi=\pred_{\M_{\rect},\interval}$ be the logical expression with one range-predicate.
We show that we report all the ``correct'' indexes $\query_\Pi(\sets)$.

\begin{lemma}
\label{lem:tech1}
    $\query_\Pi(\sets)\subseteq J$, with probability at least $1-\prob$.
\end{lemma}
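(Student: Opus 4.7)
The plan is to reduce to a single fixed index $i \in \query_\Pi(\sets)$ and show that during some iteration of the while-loop in Algorithm~\ref{alg:Gquery} the procedure $\rangetree.\mathsf{ReportFirst}(\rect',I')$ necessarily returns a point whose $\mathsf{IndexDataset}$ value is $i$, so that $i$ is added to $J$. I will first invoke Lemma~\ref{lem:helper1} to establish that, with probability at least $1-\prob$, every $\epsample_i$ is an $(\eps+\delta)$-sample of $\points_i$; the remainder of the argument is deterministic conditional on this event.

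Fix $i\in \query_\Pi(\sets)$, so $\intervalL \le \M_\rect(\points_i)\le \intervalU$. Let $\rec\in\mathcal{R}_i$ be the maximal rectangle (in the $\subseteq$ order) inside $\rect$. By Lemma~\ref{lem:HH2} there exists $\hat{\rec}^{\rect}\in\mathcal{R}_i$ with $\rect\subset\!\!\subset \hat{\rec}^{\rect}$ and $q_{(\rec,\hat{\rec}^{\rect})}\in Q_i$. I would then do a coordinate-wise check: from $\rec\subseteq\rect$ we get $\rect^-_h\le \rec^-_h$ and $\rec^+_h\le \rect^+_h$ for every $h\in[d]$, and from $\rect\subset\!\!\subset \hat{\rec}^{\rect}$ we get $(\hat{\rec}^{\rect})^-_h<\rect^-_h$ and $\rect^+_h<(\hat{\rec}^{\rect})^+_h$. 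Matching these against the definition of the query orthant $\rect'$ in Algorithm~\ref{alg:Gquery} shows $q_{(\rec,\hat{\rec}^{\rect})}\in \rect'$.

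For the weight, the key observation (which I expect to be the main obstacle, because it is where the maximality of $\rec$ together with the projection gadget $\bar{\epsample}_i$ really matters) is that $\epsample_i\cap\rec=\epsample_i\cap\rect$. If some $s\in\epsample_i\cap\rect$ were outside $\rec$, then since $\mathcal{R}_i$ consists of every combinatorially different rectangle determined by $\epsample_i\cup\bar\epsample_i$, one could strictly enlarge $\rec$ towards $s$ and still obtain a rectangle in $\mathcal{R}_i$ inside $\rect$, contradicting maximality. Using this equality together with the $(\eps+\delta)$-sample property yields
\[
\intervalL-\eps-\delta \;\le\; \M_\rect(\points_i)-\eps-\delta \;\le\; \tfrac{|\epsample_i\cap\rect|}{|\epsample_i|} \;=\; \tfrac{|\epsample_i\cap \rec|}{|\epsample_i|} \;=\; w_{q_{(\rec,\hat{\rec}^{\rect})}} \;\le\; \M_\rect(\points_i)+\eps+\delta \;\le\; \intervalU+\eps+\delta,
\]
so $w_{q_{(\rec,\hat{\rec}^{\rect})}}\in I'$.

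Finally I would close the argument by induction on iterations of the while-loop. Whenever some $i\in \query_\Pi(\sets)$ has not yet been added to $J$, no point of $Q_i$ has been deleted from $\rangetree$, so the witness $q_{(\rec,\hat{\rec}^{\rect})}$ above still lies in $\rangetree$ and certifies a non-$\mathsf{NULL}$ return value of $\rangetree.\mathsf{ReportFirst}(\rect',I')$. Hence the loop cannot terminate before every such $i$ is reported, giving $\query_\Pi(\sets)\subseteq J$ as required, conditionally on the good event of probability $\ge 1-\prob$ guaranteed by Lemma~\ref{lem:helper1}.
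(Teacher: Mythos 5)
Your proof is correct and takes essentially the same route as the paper's: it invokes Lemma~\ref{lem:helper1}, fixes $i\in\query_\Pi(\sets)$, takes the maximal precomputed rectangle $\rec\subseteq\rect$, uses Lemma~\ref{lem:HH2} to obtain the witness $q_{(\rec,\hat{\rec}^{\rect})}\in Q_i$, and verifies both the orthant membership in $\rect'$ and the weight bound $\intervalL-\eps-\delta\le w_{q_{(\rec,\hat{\rec}^{\rect})}}\le\intervalU+\eps+\delta$ via the $(\eps+\delta)$-sample property, exactly as in the paper. The only difference is that you make explicit two steps the paper leaves implicit, namely the justification of $\epsample_i\cap\rec=\epsample_i\cap\rect$ via maximality and the observation that the witness point is never deleted before $i$ is reported, so the loop cannot terminate too early.
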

\begin{proof}
  From Lemma~\ref{lem:helper1}, we have that every $\epsample_i$ is an $(\eps+\delta)$-sample for $\points_i$ with probability at least $1-\prob$. Let $i\in \query_\Pi(\sets)$ be an index such that $\intervalL\leq \M_\rect(\points_i)\leq \intervalU$. We show that $i\in J$. Let $\rec$ be the largest/maximal rectangle in $\mathcal{R}_i$ such that $\rec\subseteq \rect$. 
  By definition, it holds that $\rec\cap \epsample_i = \rect\cap \epsample_i$. From Lemma~\ref{lem:HH2}, there exists a point $q_{(\rec,\hat{\rec}^{\rect})}\in Q_i\subseteq Q$.
  Hence, if we show that $q_{(\rec,\hat{\rec}^{\rect})}\in \rect'\cap Q$ and $\intervalL-\eps-\delta\leq w_{q_{(\rec,\hat{\rec}^{\rect})}}\leq \intervalU+\eps+\delta$, then the result follows. Indeed, by definition of $\epsample_i$, we have $\intervalL-\eps-\delta\leq \frac{|\points_i\cap \rect|}{|\points_i|}-\eps-\delta\leq \frac{|\epsample_i\cap \rect|}{|\epsample_i|}=\frac{|\epsample_i\cap \rec|}{|\epsample_i|}=\M_\rec(\epsample_i)=w_{q_{(\rec,\hat{\rec}^{\rect})}}$ and
  $\intervalU+\eps+\delta\geq \frac{|\points_i\cap \rect|}{|\points_i|}+\eps+\delta\geq \frac{|\epsample_i\cap \rect|}{|\epsample_i|}=\frac{|\epsample_i\cap \rec|}{|\epsample_i|}=\M_\rec(\epsample_i)=w_{q_{(\rec,\hat{\rec}^{\rect})}}$. Furthermore, from Lemma~\ref{lem:HH2},  $\rec\subseteq \rect$ and $\rect\subset\!\!\subset\hat{\rec}^{\rect}$ so
  we have $\rect^-_h\leq \rec^-_h$, $\rect^-_h> \hat{\rec}^{\rect-}_h$, $\rect^+_h\geq \rec^+_h$, and $\rect^+_h< \hat{\rec}^{\rect+}_h$,
  for every $h\in [d]$. Hence, by the definition of the orthant $\rect'$, it holds that $q_{(\rec,\hat{\rec}^{\rect})}\in \rect'\cap Q$.
\end{proof}

\begin{lemma}
\label{lem:tech2}
    For every $j\in J$ it holds that $ \intervalL-2\eps-2\delta\leq \M_\rect(\points_j)\leq \intervalU+2\eps+2\delta$, with probability at least $1-\prob$. 
\end{lemma}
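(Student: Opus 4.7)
The plan is to chain two estimates: (i) the weight stored at the returned point is close to the true percentile of $\rect$ with respect to $\epsample_j$, and (ii) the sample $\epsample_j$ is an $(\eps+\delta)$-sample for $\points_j$, so $\M_\rect(\epsample_j)$ is in turn close to $\M_\rect(\points_j)$. Concatenating the two gives the claimed additive error of $2\eps + 2\delta$ on each side of the interval.

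First I would fix $j\in J$ and unpack what the query reported. By construction, the point $q_{(\rec,\hat\rec)}$ found when $j$ was reported satisfies $q_{(\rec,\hat\rec)}\in \rect'\cap Q_j$ and $w_{q_{(\rec,\hat\rec)}}\in [\intervalL-\eps-\delta,\intervalU+\eps+\delta]$. The geometric meaning of $\rect'$ (as defined just before the query procedure) forces $\rec\subseteq \rect \subset\!\!\subset \hat\rec$, so $\rec\subseteq \rect$. Combined with Lemma~\ref{lem:techlem2}, $\rec$ is in fact the maximal rectangle of $\mathcal{R}_j$ contained in $\rect$.

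Next I would establish the main geometric identity $|\rec\cap \epsample_j| = |\rect\cap \epsample_j|$. This is the crux of the argument. Suppose for contradiction that some point $p\in (\rect\cap \epsample_j)\setminus \rec$. Then one can extend a facet of $\rec$ outward (to the coordinate of $p$, or to the next point of $\epsample_j\cup \bar\epsample_j$ encountered on the way) to obtain a rectangle $\rec'\in \mathcal{R}_j$ with $\rec\subsetneq \rec' \subseteq \rect$, contradicting the maximality of $\rec$ provided by Lemma~\ref{lem:techlem2}. The bounding box $\mathcal{B}$ and the projections $\bar\epsample_j$ exist precisely so that such an extension always lands on a coordinate used to define $\mathcal{R}_j$. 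Once this identity is proved, we immediately have
\[
w_{q_{(\rec,\hat\rec)}} \;=\; \frac{|\rec\cap\epsample_j|}{|\epsample_j|} \;=\; \frac{|\rect\cap\epsample_j|}{|\epsample_j|} \;=\; \M_\rect(\epsample_j),
\]
so $\M_\rect(\epsample_j)\in [\intervalL-\eps-\delta,\intervalU+\eps+\delta]$.

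Finally I would invoke Lemma~\ref{lem:helper1}, under which $\epsample_j$ is an $(\eps+\delta)$-sample for $\points_j$ with probability at least $1-\prob/\setsize$ each, and at least $1-\prob$ simultaneously over all $j$ by a union bound (matching the sample size $\Theta(\eps^{-2}\log(\setsize\prob^{-1}))$ chosen in Algorithm~\ref{alg:Gconstruction}). This yields $|\M_\rect(\points_j)-\M_\rect(\epsample_j)|\leq \eps+\delta$, which combined with the bound on $\M_\rect(\epsample_j)$ gives $\intervalL-2\eps-2\delta \leq \M_\rect(\points_j) \leq \intervalU+2\eps+2\delta$. The main obstacle is the maximality argument for $|\rec\cap \epsample_j|=|\rect\cap \epsample_j|$, since it relies on the boundary projections $\bar\epsample_j$ being available as coordinates to define witness rectangles in $\mathcal{R}_j$; this is why the bounding box $\mathcal{B}$ was introduced in the construction.
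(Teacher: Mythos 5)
Your proposal is correct and follows essentially the same route as the paper: both deduce from Lemma~\ref{lem:techlem2} that $\rec$ is the maximal precomputed rectangle inside $\rect$, hence $|\rec\cap\epsample_j|=|\rect\cap\epsample_j|$, and then chain the weight bound $w_{q_{(\rec,\hat{\rec})}}\in[\intervalL-\eps-\delta,\intervalU+\eps+\delta]$ with the $(\eps+\delta)$-sample property of $\epsample_j$. Your facet-extension argument for the identity $|\rec\cap\epsample_j|=|\rect\cap\epsample_j|$ simply spells out a step the paper asserts directly from maximality, so there is no substantive difference.
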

\begin{proof}
    Let $j\in J$ be an index reported by our query procedure. Let $q_{(\rec,\hat{\rec})}\in Q_j$ be the point found in $\rect'$ at the moment we reported $j$. By definition, $w_{q_{(\rec,\hat{\rec})}}=\M_\rec(\epsample_j)=\frac{|\rec\cap \epsample_j|}{|\epsample_j|}\geq \intervalL-\eps-\delta$, and $w_{q_{(\rec,\hat{\rec})}}=\frac{|\rec\cap \epsample_j|}{|\epsample_j|}\leq \intervalU+\eps+\delta$, and from Lemma~\ref{lem:techlem2}, we know that $\rec$ is the largest rectangle inside $\rect$, so $|\rec\cap \epsample_j|= |\rect\cap \epsample_j|$. By definition of $\epsample_j$, we have $\M_\rect(\points_j)=\frac{|\points_j\cap \rect|}{|\points_j|}\geq \frac{|\rect\cap \epsample_j|}{|\epsample_j|}-\eps-\delta= \frac{|\rec\cap \epsample_j|}{|\epsample_j|}-\eps-\delta\geq \intervalL-2\eps-2\delta$, and $\M_\rect(\points_j)=\frac{|\points_j\cap \rect|}{|\points_j|}\leq \frac{|\rect\cap \epsample_j|}{|\epsample_j|}+\eps+\delta= \frac{|\rec\cap \epsample_j|}{|\epsample_j|}+\eps+\delta\leq \intervalU+2\eps+2\delta$.
\end{proof}

\begin{lemma}
\label{lem:duplG}
    The query procedure does not report duplicates.
\end{lemma}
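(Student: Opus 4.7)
The plan is to argue directly from the deletion step in the query procedure. First, I would observe that the indices added to $J$ are exactly the values $q_{(\rec,\hat{\rec})}.\mathsf{IndexDataset}$ of points $q_{(\rec,\hat{\rec})}$ returned by successive calls to $\rangetree.\mathsf{ReportFirst}(\rect',I')$. Since $\mathsf{ReportFirst}$ can only return a point that is currently stored in $\rangetree$, it suffices to show that once some index $j$ is reported, no point with $\mathsf{IndexDataset}=j$ remains in $\rangetree$ at any later iteration.

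Next, I would invoke the construction phase (Algorithm~\ref{alg:Gconstruction}) to note that the set $Q_j = \{q_{(\rec,\hat{\rec})} \in Q \mid q_{(\rec,\hat{\rec})}.\mathsf{IndexDataset}=j\}$ is precisely the set of all points in the range tree whose dataset index is $j$, and that $Q_j$ is stored explicitly and is therefore directly accessible during the query. Immediately after adding $j$ to $J$, the query procedure iterates over all $q \in Q_j$ and deletes them from $\rangetree$, so after that iteration $\rangetree$ contains no point with $\mathsf{IndexDataset}=j$. Because the procedure only ever deletes (never re-inserts) points during the while-loop (re-insertions happen only after the loop terminates), this invariant is preserved in all subsequent iterations.

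Finally, combining the two observations, any point returned by a later call to $\rangetree.\mathsf{ReportFirst}$ has $\mathsf{IndexDataset}=i$ for some $i \neq j$, so $j$ is not appended to $J$ again. Since this holds for every index that is ever reported, no duplicates are produced.

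The argument is essentially bookkeeping, and the only point that requires care is making sure that the re-insertion step (which happens after the while-loop exits) does not interfere: it operates on a snapshot of $J$, not on the ongoing reporting, so it cannot cause the same index to be output twice. I do not anticipate a genuine obstacle here; the proof is a direct consequence of the ``delete-after-reporting'' design of the query procedure.
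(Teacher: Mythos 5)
Your argument is faithful to the literal pseudocode line ``\textbf{foreach} $q\in Q_j$: delete $q$'' in Algorithm~\ref{alg:Gquery}, but it does not match the procedure the paper actually analyzes, and the paper's proof of this lemma turns exactly on that difference. In the prose description of the query phase (and explicitly in the paper's own proof) only the points $\{q_{(\rec,\bar{\rec})}\mid \bar{\rec}\in\mathcal{R}_j\}$ are deleted when $j$ is reported, i.e., only the pairs whose \emph{first} component equals the rectangle $\rec$ of the point just returned, not all of $Q_j$; this is also what the running-time analysis assumes, since the proof of Lemma~\ref{lem:space2} charges only $O(\eps^{-4d}\log^{2d}(\setsize\prob^{-1}))$ deletions per reported index, whereas $|Q_j|$ can be of order $\eps^{-8d}\log^{4d}(\setsize\prob^{-1})$. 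Under the intended procedure your central invariant --- ``after $j$ is reported, no point with $\mathsf{IndexDataset}=j$ remains in $\rangetree$'' --- is simply false: every pair $q_{(\rec'',\hat{\rec}'')}\in Q_j$ with $\rec''\neq\rec$ is still stored in the tree.

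What closes the argument, and what is missing from your proof, is Lemma~\ref{lem:techlem2} together with the uniqueness of the maximal precomputed rectangle: any point $q_{(\rec'',\hat{\rec}'')}\in Q_j$ that $\rangetree.\mathsf{ReportFirst}(\rect',I')$ could return has $\rec''$ maximal among the rectangles of $\mathcal{R}_j$ contained in $\rect$, and this maximal rectangle coincides with the $\rec$ already found (in each dimension its sides are the extreme coordinates of $\epsample_j\cup\bar{\epsample}_j$ lying inside $\rect$), so deleting just the family $\{q_{(\rec,\bar{\rec})}\mid\bar{\rec}\in\mathcal{R}_j\}$ already removes every point of $Q_j$ that this query could ever retrieve. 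That maximality argument is the actual content of the paper's proof. So either you modify the algorithm to delete all of $Q_j$ --- in which case your bookkeeping argument is fine but the per-index deletion cost, and hence the stated query-time bound, degrades --- or you keep the algorithm as analyzed, in which case your proof has a genuine gap at the claimed invariant and needs Lemma~\ref{lem:techlem2} to repair it.
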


\vspace{-0.8em}
\paragraph{Analysis}
\new{For simplicity, we assume that $\eps$ is a an arbitrarily small constant and $\prob=\frac{1}{\setsize}$. In Appendix~\ref{appndx:approxPercRangePred} we show the analysis for arbitrary values of $\eps$ and $\prob$.}
The proof of the next lemma can be found in Appendix~\ref{appndx:approxPercRangePred}.
\vspace{-0.5em}
\begin{lemma}
\label{lem:space2}
\new{
The data structure has $O(\setsize\cdot\log^{8d}(\setsize))$ space and it can be constructed in $O(\setsize\cdot\log^{8d}(\setsize))$ time.
The query procedure runs in $O(\log^{4d+1}(\setsize)+\out\cdot\log^{6d+1}(\setsize))$ time.}
\end{lemma}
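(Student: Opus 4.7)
The plan is to mirror the structure of the proof of Lemma~\ref{lem:space1} in Appendix~\ref{appndx:approxpercopenright}, but now tracking (i) the jump from $\Re^{2d}$ to $\Re^{4d}$ because each stored point encodes both $\rec$ and $\hat{\rec}$, and (ii) the combinatorial blowup caused by enumerating admissible pairs in $\hat{\mathcal{R}}_i$ rather than single rectangles.

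First I would pin down the input sizes. With $\eps=\Theta(1)$ and $\prob=1/\setsize$, Lemma~\ref{lem:helper1} gives $|\epsample_i|=\Theta(\log\setsize)$, and adding the $2d$ projections per sample point onto the faces of $\mathcal{B}$ keeps $|\epsample_i\cup\bar\epsample_i|=O(\log\setsize)$. A combinatorially distinct axis-parallel rectangle in $\Re^d$ is fixed by choosing $2d$ coordinate values from this pool, so $|\mathcal{R}_i|=O(\log^{2d}\setsize)$ and the naive count of admissible pairs is $|\hat{\mathcal{R}}_i|\le |\mathcal{R}_i|^2 = O(\log^{4d}\setsize)$. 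Hence $|Q|=\sum_i|Q_i|=O(\setsize\log^{4d}\setsize)$ and $\log|Q|=O(\log\setsize)$.

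Next I would plug these into the dynamic range tree bounds from Section~\ref{sec:prelim} applied in dimension $4d$ on $|Q|$ weighted points. This immediately yields space and preprocessing $O(|Q|\log^{4d}|Q|)=O(\setsize\log^{4d}\setsize\cdot\log^{4d}\setsize)=O(\setsize\log^{8d}\setsize)$, matching the lemma. The construction of $Q$ itself (enumerating $\mathcal{R}_i$, checking admissibility, assigning weights, and handing off to $\mathsf{DRangeTreeConstruct}$) is dominated by the tree build. For the query, each call to $\rangetree.\mathsf{ReportFirst}(\rect',I')$ costs $O(\log^{4d+1}|Q|)=O(\log^{4d+1}\setsize)$, and each insertion or deletion in $\rangetree$ costs the same. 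By Lemma~\ref{lem:duplG} we pay for each index at most once, so $\mathsf{ReportFirst}$ is invoked $\out+1$ times, and for each reported $j$ we perform $|Q_j|$ deletions now plus $|Q_j|$ re-insertions at the end. This gives total query time
\[
O(\log^{4d+1}\setsize)\;+\;\out\cdot\bigl(1+|Q_j|\bigr)\cdot O(\log^{4d+1}\setsize).
\]

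The main obstacle is that, to recover the advertised $\out\cdot\log^{6d+1}\setsize$ term, I need the sharper per-dataset bound $|Q_j|=O(\log^{2d}\setsize)$, rather than the global-count bound $O(\log^{4d}\setsize)$ used for space. This is where the admissibility clause $\nexists\,\rec'\in\mathcal{R}_j:\rec\subsetneq\rec'\subset\!\!\subset\hat\rec$ does real work: if $\hat\rec$ strictly extends $\rec$ along a face, it must land on the very next available sample coordinate in that dimension (otherwise an intermediate $\rec'$ can be constructed by moving $\rec$'s face one step while leaving all other faces unchanged, which lies strictly inside $\hat\rec$ on every face by the strict extension hypothesis). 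Thus $\hat\rec$ is determined by $\rec$ together with a subset of faces marking which ones are enlarged, collapsing an entire $\log^{2d}$ factor. Plugging $|Q_j|=O(\log^{2d}\setsize)$ into the per-index charge yields the claimed $O(\log^{4d+1}\setsize+\out\cdot\log^{6d+1}\setsize)$ query time. The rest is routine bookkeeping inherited from the analysis in Lemma~\ref{lem:space1}.
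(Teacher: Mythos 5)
Your space and preprocessing accounting is fine and matches the paper: with $\eps$ constant and $\prob=1/\setsize$ you get $|\epsample_i\cup\bar{\epsample}_i|=O(\log\setsize)$, $|\mathcal{R}_i|=O(\log^{2d}\setsize)$, $|\hat{\mathcal{R}}_i|=O(\log^{4d}\setsize)$, hence $|Q|=O(\setsize\log^{4d}\setsize)$, and a $4d$-dimensional dynamic range tree on $Q$ gives $O(\setsize\log^{8d}\setsize)$ space and build time. The gap is in the query bound: your key claim $|Q_j|=O(\log^{2d}\setsize)$ is false. The ``next available coordinate'' argument only works when \emph{every} face of $\hat{\rec}$ strictly extends the corresponding face of $\rec$; if some face of $\hat{\rec}$ coincides with a face of $\rec$, the intermediate rectangle $\rec'$ you construct (push one face of $\rec$ out by one step, keep the others) touches the boundary of $\hat{\rec}$ on the coinciding face, so $\rec'\not\subset\!\!\subset\hat{\rec}$ and it witnesses nothing. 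In fact, whenever $\hat{\rec}\supseteq\rec$ shares at least one face with $\rec$, the pair $(\rec,\hat{\rec})$ is automatically admissible no matter how far the remaining faces extend, since any $\rec'\supsetneq\rec$ must meet that shared boundary. Concretely, for $d=1$ with sample coordinates $x_1<\cdots<x_m$ and $\rec=[x_2,x_2]$, every pair $([x_2,x_2],[x_2,x_s])$ with $s\geq 2$ lies in $\hat{\mathcal{R}}_j$, so a single $\rec$ has $\Theta(m)$ partners and $|Q_j|=\Theta(m^3)=\Theta(\log^{3}\setsize)$, not $O(\log^{2}\setsize)$ (in general $|Q_j|$ can be as large as $\Theta(\log^{4d-1}\setsize)$). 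So deleting all of $Q_j$ per reported index, as you do, costs $\out\cdot O(\log^{4d}\setsize\cdot\log^{4d+1}\setsize)$ and does not yield the stated $\out\cdot\log^{6d+1}\setsize$ term.

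The paper's route is different at exactly this point: when index $j$ is reported via the point $q_{(\rec,\hat{\rec})}$, only the points $q_{(\rec,\bar{\rec})}$ whose \emph{first} component equals the found $\rec$ are deleted (the pseudocode of Algorithm~\ref{alg:Gquery} is looser than the text here; the proof of Lemma~\ref{lem:duplG} explicitly says not all of $Q_j$ is removed). This restricted deletion still prevents duplicates because, by Lemma~\ref{lem:techlem2}, any later hit for dataset $j$ under the same query $\rect$ would again have first component equal to the unique maximal precomputed rectangle of $\epsample_j$ inside $\rect$, namely this same $\rec$. The number of deleted points is then bounded by the number of possible second components, i.e.\ at most $|\mathcal{R}_j|=O(\log^{2d}\setsize)$, so each reported index incurs $O(\log^{2d}\setsize)$ updates at $O(\log^{4d+1}\setsize)$ each (re-insertions at the end are of the same order), which is precisely where $O(\log^{4d+1}\setsize+\out\cdot\log^{6d+1}\setsize)$ comes from. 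To repair your proof, replace the false bound on $|Q_j|$ by this restricted deletion step and the associated duplicate-freeness argument; the rest of your analysis then goes through as in Lemma~\ref{lem:space1}.
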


\begin{theorem}
\label{thm:res2}
Let $\{\Synop_{\points_1},\ldots,\Synop_{\points_\setsize}\}$ be the input to the $\problemDI$ problem, such that $\Synop_{\points_i}$ is a synopsis of a dataset $\points_i\subset \Re^d$, where $d\geq 1$ is a constant, with $\err_{\Synop_{\points_i}}(\F^d_{\square})\leq \delta$ for $\delta\in[0,1)$, for every $i\in[\setsize]$. Let $\eps\in(0,1)$ be an arbitrarily small constant parameter.
A data structure of size $O(\setsize\cdot\log^{8d}(\setsize))$ can be constructed in $O(\setsize\cdot\log^{8d}(\setsize))$ time, such that
given a query $\Pi=\pred_{\M_\rect,\interval}$,
it returns a set of indexes $J$ such that $\query_\Pi(\sets)\subseteq J$ and for every $j\in J$, $\intervalL-\eps-2\delta\leq \M_\rect(\points_j)\leq \intervalU+\eps+2\delta$, with probability at least $1-\frac{1}{\setsize}$.
The query time is $O(\log^{4d+1}(\setsize)+\out\cdot\log^{6d+1}(\setsize))$.
\end{theorem}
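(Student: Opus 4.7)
The plan is to obtain Theorem~\ref{thm:res2} by combining the four lemmas already established in this subsection (Lemmas~\ref{lem:techlem2}--\ref{lem:space2}) together with a single parameter rescaling. Specifically, I would run the construction in Algorithm~\ref{alg:Gconstruction} with the \emph{scaled} error parameter $\eps' := \eps/2$ and with failure probability parameter $\prob := 1/\setsize$. Since $\eps$ is an arbitrarily small constant, so is $\eps'$, and the asymptotic bounds in Lemma~\ref{lem:space2} (stated under precisely these simplifying assumptions) directly yield $O(\setsize\cdot\log^{8d}(\setsize))$ space and preprocessing and $O(\log^{4d+1}(\setsize)+\out\cdot\log^{6d+1}(\setsize))$ query time. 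This accounts for the complexity part of the statement.

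For correctness I would argue as follows. The samples $\epsample_i$ are drawn so that Lemma~\ref{lem:helper1} implies each $\epsample_i$ is an $(\eps'+\delta)$-sample for $\points_i$ with probability $\geq 1-1/\setsize$ individually; by a union bound over the $\setsize$ synopses (which is why we pick $\prob = 1/\setsize$ after absorbing constants inside the sampling size $\Theta(\eps'^{-2}\log(\setsize\prob^{-1}))$), all $\epsample_i$ are simultaneously good with probability $\geq 1-1/\setsize$. Conditioning on this event, Lemma~\ref{lem:tech1} gives $\query_\Pi(\sets)\subseteq J$, and Lemma~\ref{lem:tech2} gives for every $j\in J$
\[
\intervalL-2\eps'-2\delta \;\leq\; \M_\rect(\points_j)\;\leq\; \intervalU+2\eps'+2\delta.
\]
Substituting $\eps'=\eps/2$ exactly yields the stated error bound $\intervalL-\eps-2\delta\leq \M_\rect(\points_j)\leq \intervalU+\eps+2\delta$. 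Finally, Lemma~\ref{lem:duplG} guarantees that the output set $J$ contains no repeats, so the query time analysis of Lemma~\ref{lem:space2}, which charges $O(\log^{6d+1}\setsize)$ per reported index (one $\mathsf{ReportFirst}$ plus deletion of the $O(\polylog \setsize)$ points in $Q_j$), is tight and $\out = |J|$.

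The only subtle point (and the ``hard part'' of the argument, though it has essentially been done) is ensuring that the pair-structure $\hat{\mathcal{R}}_i$ correctly captures \emph{maximality}: the query must not report an index via some small non-maximal precomputed rectangle inside $\rect$ whose weight happens to fall inside $[\intervalL-\eps'-\delta,\intervalU+\eps'+\delta]$ while $\points_j$'s true mass in $\rect$ is well above $\intervalU$. This is exactly what Lemma~\ref{lem:techlem2} rules out — whenever $\rangetree.\mathsf{ReportFirst}$ returns $q_{(\rec,\hat{\rec})}$ for index $j$, the inner rectangle $\rec$ is forced to be the largest rectangle of $\mathcal{R}_j$ contained in $\rect$, so $|\rec\cap \epsample_j|=|\rect\cap \epsample_j|$ and the $(\eps'+\delta)$-sample guarantee transfers from $\rec$ to $\rect$ as used in Lemma~\ref{lem:tech2}. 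Assembling these pieces gives the full theorem; no new geometric or probabilistic argument beyond those already in place is required.
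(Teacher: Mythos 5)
Your proposal is correct and follows essentially the same route as the paper: Theorem~\ref{thm:res2} is obtained there exactly by assembling Lemmas~\ref{lem:techlem2}--\ref{lem:duplG} for correctness (with the implicit union bound over the $\setsize$ synopses baked into the $\Theta(\eps^{-2}\log(\setsize\prob^{-1}))$ sample size), invoking Lemma~\ref{lem:space2} with $\prob=1/\setsize$ and constant $\eps$ for the complexity bounds, and rescaling $\eps\leftarrow\eps/2$ to turn the $2\eps+2\delta$ error of Lemma~\ref{lem:tech2} into the stated $\eps+2\delta$. No gap to report.
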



\paragraph{Remark 1} The data structure from Theorem~\ref{thm:res2} can be made dynamic under insertion or deletion of synopses. Using the update procedure of the dynamic range tree, if a synopsis $\Synop_\points$ is inserted/deleted
we can update our data structure in $O(\log^{8d+1}(\setsize))$ time

\paragraph{Remark 2} If we do not know an upper bound $\delta$ on $\err_{\Synop_{\points_i}}(\F^d_{\square})$, then by slightly modifying the data structure above we can get the following result. Let $\err_{\Synop_{\points_i}}(\F^d_{\square})\leq \delta_i$ for unknown parameters $\delta_i\in[0,1)$.
Having the same complexities as in
Theorem~\ref{thm:res2}, we report a set of indexes $J$ such that, if $ \intervalL+\eps+\delta_i\leq \M_\rect(\points_i)\leq \intervalU-\eps-\delta_i$, then $i\in J$, and if $j\in J$, then $\intervalL-\eps-\delta_j\leq \M_\rect(\points_j)\leq \intervalU+\eps+\delta_j$.

\paragraph{Remark 3}
We note that the data structure we design satisfy delay guarantees as we report the indexes $J$, because range tree satisfies delay guarantees and we do not report duplicates. It is only in the end of the query procedure that we need to re-insert all the deleted points from $\rangetree$.
Using lazy updates~\cite{overmars1981worst, overmars1983design} similarly to~\cite{agarwal2021dynamic}, our data structure has bounded $O(\log^{6d+1}(\setsize))$ delay.

\vspace{-0.5em}
\section{Data structures for the \problemDIk Problem}
\label{sec:topK}
\vspace{-0.3em}
In this section, we design data structures for the 
\problemDIk problem. 
Recall that the \problemDIk problem is defined on threshold-predicates over an interval $\interval$ of the form $[\intervalL,1]$.
 Next, we show an approximate data structure for the \problemDI problem assuming only one predicate. In Appendix~\ref{subsec:topkpredm} we extend to an approximate data structure for the \problemDIk problem considering a logical expression over $\conj=O(1)$ predicates.
 Similarly to the previous section, we assume that for every $i\in[\setsize]$, $\err_{\Synop_{\points_i}}(\vec{\F}^d_{k}) \leq\delta_i\leq \delta$, for a known parameter $\delta\in [0,1)$. In the end, we give the guarantees assuming that $\delta_i$'s, are unknown.
 For simplicity, we assume that all points in $\bigcup_{i\in[\setsize]}\points_i$ lie in the unit ball.
 \new{Recall that each synopsis $\Synop_{\points}$ has a mechanism to return $\M_{\vector,k}(\Synop_\points)$, i.e., an estimation of the $k$-th largest score in $\points$ with respect to unit vector $\vector$.
 Let $\Synop_{\points}.\mathsf{Score}(\vector, k)$ be the synopsis' procedure that returns $\M_{\vector,k}(\Synop_\points)$.}
 

\paragraph{\new{Overview}}
\new{
The main idea follows from the data structure we designed for the \problemDI problem. If we knew the query vector $\vec{u}$, then the \problemDIk would be straightforward. 
Recall that for the \problemDI problem, we did not know the query rectangle $\rect$ upfront, so the first step was to create all possible combinatorially different rectangles generated from a set of samples. For the \problemDIk, we do not have samples, however, we construct a coreset using the notion of $\eps$-net. While we do not know the query vector $\vec{u}$, by constructing an $\eps$-net we get a set of vectors (of small size) such that for any query unit vector $\vec{u}$ there is always a unit vector in the $\eps$-net that lies very close to $\vec{u}$. While there are infinite query vectors, the intuition is that we can approximate the score of a point $p$ on a query vector $\vec{u}$ by computing the score of $p$ with respect to the vector of the $\eps$-net which is closest to $\vec{u}$. Hence, in the preprocessing phase, we use the synopsis of each dataset to estimate the score of the $k$-th highest score on every vector in the $\eps$-net. For each vector in the $\eps$-net, we store all the estimated costs (one from each dataset) into a geometric data structure. The computed estimated scores over all the vectors in the $\eps$-net consists of a coreset for our problem.
Given a query vector $\vec{u}$ and a one-sided interval $\interval=[\intervalL,1]$, we find the closest vector in the $\eps$-net and using the corresponding data structure we identify all datasets with scores in the range $[\intervalL-\eps-\delta,1]$. Because of the coreset construction, this technique returns a solution with theoretical guarantees for the \problemDIk problem.}

\begin{figure}
   \begin{minipage}{0.56\linewidth}
\begin{algorithm}[H]
  \caption{\new{$\mathsf{ConstructTop}(\Synop_{\points_1}, \ldots, \Synop_{\points_\setsize},\eps)$}}
  \label{alg:Topkconstruction}
   \small
 $\net\leftarrow$ $\eps$-net in $\Re^d$\;
 \ForEach{$\vector\in \net$}{
    $\Gamma_{\vector}=\emptyset$\;
    \ForEach{$i\in[\setsize]$}{
        $\gamma_{\vector}^{(i)}=\Synop_{\points_i}.\mathsf{Score}(\vector,k)$, \hspace{0.2em}
        $\Gamma_{\vector}=\Gamma_{\vector}\cup\{\gamma_{\vector}^{(i)}\}$\;
    }
    $\rangetree_{\vector}=\mathsf{RangeTreeConstruct}(\Gamma_{\vector})$\;
 }
 $\rangetree=\bigcup_{\vector\in\net}\rangetree_{\vector}$\;
 \Return \;
\end{algorithm}
\end{minipage}
\ \ \
\begin{minipage}{0.4\linewidth}
\vspace{-3.75em}
\begin{algorithm}[H]
  \caption{\new{$\mathsf{QueryTop}(\rangetree, \vec{u}, \interval, \eps,\delta)$}}
  \label{alg:Topkquery}
     \small
     $\vector=\argmin_{\vec{h}\in\net}||\vec{u}-\vec{h}||$\;
    $I'=[\intervalL-\eps-\delta,\infty)$\;
    $\Gamma=\rangetree_{\vector}.\mathsf{Report}(I')$\;
    \lForEach{$\gamma_{\vector}^{(j)}\in \Gamma$}{Report $j$}

  \Return\;
\end{algorithm}
\end{minipage}
\end{figure}


\paragraph{Data structure}
\new{In Algorithm~\ref{alg:Topkconstruction}, we show the pseudocode of the construction phase.}
We construct an $\eps$-net $\net$ on $\Re^d$.
For every $\vector \in\net$ and every $i\in[\setsize]$, we get the value $\gamma_{\vector}^{(i)}=\Synop_{\points_i}.\mathsf{Score}(\vector, k)$ as an approximation to $\score_k(\points_i,\vector)$.
Let $\Gamma_{\vector}=\{\gamma_{\vector}^{(i)}\mid i\in[\setsize]\}$. For every $\vector\in\net$, we construct a binary search tree ($1$-dimensional static range tree) $\tree_{\vector}$ on $\Gamma_{\vector}$.\footnote{Since $\gamma_{\vector}^{(i)}\in \Re$, we call it either a point in $\Re^1$ or a score/value.}


\paragraph{Query procedure}
\new{In Algorithm~\ref{alg:Topkquery}, we show the pseudocode of the query phase.}
Let $\vec{u}$ be the query vector and and $\interval$ be the one-sided query interval. We go through each vector in $\net$ and we compute $\vector=\argmin_{\vec{h}\in\net}||\vec{u}-\vec{h}||$. Using the tree $\tree_{\vector}$
we report all indexes $j\in[\setsize]$ such that $\gamma_{\vector}^{(j)}\geq \intervalL-\eps-\delta$. Let $J$ be the set of all indexes reported.

\paragraph{Correctness}
Let $\Pi=\pred_{\M_{\vec{u},k},\interval}$ be the logical expression with one threshold-predicate. Recall that $\query_\Pi(\sets)$ denotes the correct set of indexes we should report.

\begin{lemma}[\cite{agarwal2017efficient}]
\vspace{-0.5em}
\label{lem:helper00}
    For any point $p$ in the unit ball ($||p||\leq 1$) and any pair of unit vectors $\vector_1, \vector_2$ such that $||\vector_1-\vector_2||\leq \eps$, $|\score(p,\vector_1)-\score(p,\vector_2)|\leq \eps$.
\end{lemma}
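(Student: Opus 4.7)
The proof will follow from a direct application of the linearity of the inner product together with the Cauchy--Schwarz inequality. The statement is essentially a Lipschitz-continuity property of the score function $\score(p, \cdot) = \langle p, \cdot\rangle$ in its second argument, with Lipschitz constant bounded by $\|p\|$.

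The plan is as follows. First, I would rewrite the difference of scores using bilinearity of the inner product:
\[
\score(p, \vector_1) - \score(p, \vector_2) = \langle p, \vector_1\rangle - \langle p, \vector_2\rangle = \langle p, \vector_1 - \vector_2\rangle.
\]
Next, I would apply the Cauchy--Schwarz inequality to bound the magnitude:
\[
|\langle p, \vector_1 - \vector_2\rangle| \leq \|p\| \cdot \|\vector_1 - \vector_2\|.
\]
Finally, I would invoke the two hypotheses of the lemma: $\|p\| \leq 1$ because $p$ lies in the unit ball, and $\|\vector_1 - \vector_2\| \leq \eps$ by assumption. Chaining these bounds gives $|\score(p, \vector_1) - \score(p, \vector_2)| \leq 1 \cdot \eps = \eps$, which is exactly what we want to show.

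There is no real obstacle here; the lemma is a one-line consequence of Cauchy--Schwarz. The only mild subtlety worth noting is that the hypothesis does not require $\vector_1, \vector_2$ to be unit vectors in a nontrivial way for the argument — only the norm of their difference matters for the bound — so the ``unit vector'' qualifier in the statement is essentially for consistency with how the lemma will be used elsewhere (i.e., together with the $\eps$-net $\net \subseteq \sphere^{d-1}$ introduced earlier in Section~\ref{sec:topK}).
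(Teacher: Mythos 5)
Your proof is correct: the paper itself does not reprove this lemma (it is cited from prior work), and your one-line argument via bilinearity and Cauchy--Schwarz, $|\langle p,\vector_1-\vector_2\rangle|\leq \|p\|\,\|\vector_1-\vector_2\|\leq \eps$, is exactly the standard justification. Your side remark that only $\|p\|\leq 1$ and $\|\vector_1-\vector_2\|\leq\eps$ are actually needed is also accurate.
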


\begin{lemma}
\label{lem:tech3}
   $\query_\Pi(\sets)\subseteq J$ and for every $j\in J$, $\score_k(\points_j,\vec{u})\geq \intervalL-2\eps-2\delta$.
\end{lemma}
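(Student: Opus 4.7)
The plan is to bound both inclusions by combining two separate approximation losses: the error $\eps$ from replacing the query direction $\vec{u}$ by its closest neighbor in the $\eps$-net, and the error $\delta$ from replacing $\score_k(\points_i,\vector)$ by the synopsis estimate $\gamma^{(i)}_{\vector}$. Chaining these two gives an additive gap of at most $\eps+\delta$ on either side, which after applying it once for the completeness direction and once for the soundness direction yields the claimed $2\eps+2\delta$ slack.

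The one non-routine ingredient is a direction-perturbation statement for the $k$-th order statistic: for any set $X$ contained in the unit ball and any two unit vectors $\vector_1,\vector_2$ with $\|\vector_1-\vector_2\|\leq \eps$, we have $|\score_k(X,\vector_1)-\score_k(X,\vector_2)|\leq \eps$. Lemma~\ref{lem:helper00} only gives this for individual points, so I would supply a short argument: let $p_1,\ldots,p_k$ be the top-$k$ points of $X$ under $\vector_1$, so $\score(p_i,\vector_1)\geq \score_k(X,\vector_1)$ for $i\leq k$. By Lemma~\ref{lem:helper00}, $\score(p_i,\vector_2)\geq \score(p_i,\vector_1)-\eps\geq \score_k(X,\vector_1)-\eps$, so there are at least $k$ points of $X$ with $\vector_2$-score $\geq \score_k(X,\vector_1)-\eps$, hence $\score_k(X,\vector_2)\geq \score_k(X,\vector_1)-\eps$; the reverse inequality follows by symmetry. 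Next, the $\eps$-net property (after absorbing a constant into $\eps$) ensures that the $\vector\in\net$ chosen by the query satisfies $\|\vec{u}-\vector\|\leq \eps$, and by assumption $|\gamma^{(i)}_{\vector}-\score_k(\points_i,\vector)|\leq \delta$ for every $i$.

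For the completeness direction $\query_\Pi(\sets)\subseteq J$, pick $i\in\query_\Pi(\sets)$, so $\score_k(\points_i,\vec{u})\geq \intervalL$. Then
\[
\gamma^{(i)}_{\vector}\;\geq\;\score_k(\points_i,\vector)-\delta\;\geq\;\score_k(\points_i,\vec{u})-\eps-\delta\;\geq\;\intervalL-\eps-\delta,
\]
so $\gamma^{(i)}_{\vector}\in I'$ and the range-tree query on $\rangetree_{\vector}$ reports $i$. For the soundness direction, take any $j\in J$; by construction $\gamma^{(j)}_{\vector}\geq \intervalL-\eps-\delta$, and applying the two approximations in the opposite order gives
\[
\score_k(\points_j,\vec{u})\;\geq\;\score_k(\points_j,\vector)-\eps\;\geq\;\gamma^{(j)}_{\vector}-\delta-\eps\;\geq\;\intervalL-2\eps-2\delta.
\]

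The main (mild) obstacle is the direction-perturbation lemma above, since Lemma~\ref{lem:helper00} is stated only for a fixed point's inner product; once that step is in hand, the rest is a two-line triangle-inequality calculation in each direction.
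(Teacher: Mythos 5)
Your proof is correct and follows essentially the same route as the paper's: both directions are handled by chaining the synopsis error $\delta$ (via $|\gamma^{(i)}_{\vector}-\score_k(\points_i,\vector)|\leq\delta$) with the net error $\eps$ (via Lemma~\ref{lem:helper00}), giving $\intervalL-\eps-\delta$ for completeness and $\intervalL-2\eps-2\delta$ for soundness. The only difference is that you explicitly prove the perturbation bound for the $k$-th order statistic $\score_k$ from the single-point statement of Lemma~\ref{lem:helper00} (via the top-$k$ points under $\vector_1$), a step the paper applies implicitly; your patch is valid and makes the argument slightly more complete.
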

\begin{proof}
Since $\Synop_{\points_i}$ is a synopsis for $\points_i$, by definition we have that $\score_k(\points_i,\vector)-\delta\leq \gamma^{(i)}_{\vector}\leq \score_k(\points_i,\vector)+\delta$ for every $i\in[\setsize]$.
First, we assume that an index $i\in \query_\Pi(\sets)$, so $\score_k(\points_i,\vec{u})\geq \intervalL$. We show that $i\in J$. By Lemma~\ref{lem:helper00}, we have $\score_k(\points_i,\vector)\geq \score_k(\points_i,\vec{u})-\eps\geq \intervalL-\eps$. By the definition of $\gamma^{(i)}_{\vector}$, we have $\gamma^{(i)}_{\vector}\geq \score_k(\points_i,\vector)-\delta\geq \intervalL-\eps-\delta$. Hence $i$ will be reported by the query procedure.
Next, we show that for every index $j\in J$, it holds $\score_k(\points_j,\vec{u})\geq \intervalL-3\eps-\delta$. The index $j$ is reported because $\gamma^{(j)}_{\vector}\geq \intervalL-\eps-\delta$. By the definition of $\gamma^{(i)}_{\vector}$, we have 
$\gamma^{(j)}_{\vector}\leq \score_k(\points_j,\vector)+\delta$. 
Furthermore, from Lemma~\ref{lem:helper00}, $\score_k(\points_j,\vec{u})\geq \score_k(\points_j,\vector)-\eps$.
Hence, 
$\score_k(\points_j,\vec{u})\geq \score_k(\points_j,\vector)-\eps\geq \gamma^{(i)}_{\vector}-\eps-\delta\geq \intervalL-2\eps-2\delta$.
\end{proof}
\vspace{-0.6em}
\paragraph{Analysis}
\new{For simplicity, we assume that $\eps$ is a an arbitrarily small constant. In Appendix~\ref{appndx:approxPercRangePred} we show the analysis for arbitrary values of $\eps$.}
Let $\timeSynop_{\Synop_{\points_i}}$ be the running time of $\Synop_{\points_i}.\mathsf{Score}(\vector, k)$.
The proof of the next lemma can be found in Appendix~\ref{appndx:topk}. We set $\eps\leftarrow \eps/2$ and we get the final result.
\vspace{-0.5em}
\begin{lemma}
\label{lem:space4}
\new{    The data structure has $O(\setsize)$ space and can be constructed in $O( \setsize\log(\setsize)+\sum_{i\in[\setsize]}\timeSynop_{\Synop_{\points_i}})$ time. The query time is $O(\log (\setsize) +\out)$.}
\end{lemma}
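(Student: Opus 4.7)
My plan is to decompose the three complexities (space, preprocessing, query) following the structure of Algorithms~\ref{alg:Topkconstruction} and \ref{alg:Topkquery}, and to exploit the fact that once $\eps$ and $d$ are treated as constants, the $\eps$-net $\net$ has constant cardinality. Specifically, I will invoke the construction from the preliminaries that gives $|\net|=O(\eps^{-d+1})$; under the stated assumption that $\eps$ is an arbitrarily small constant and the global assumption $d=O(1)$, this reduces to $|\net|=O(1)$. This single fact is what collapses several sums over $\net$ into the stated bounds, so I would state it explicitly as the first step.

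For the space bound, I would argue that the data structure consists of exactly $|\net|$ one-dimensional static range trees, one per $\vector\in\net$, each built on the set $\Gamma_\vector$ of size $\setsize$. A static $1$-dimensional range tree on $\setsize$ values uses $O(\setsize)$ space, so the total is $O(|\net|\cdot \setsize)=O(\setsize)$. For the preprocessing time, the cost per $\vector\in\net$ splits into (i) $\setsize$ calls $\Synop_{\points_i}.\mathsf{Score}(\vector,k)$, costing $\sum_{i\in[\setsize]}\timeSynop_{\Synop_{\points_i}}$ in total, and (ii) building one static $1$-dimensional range tree on $\setsize$ values, costing $O(\setsize\log\setsize)$ by the bounds recalled in Section~\ref{sec:prelim}. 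Summing over $\net$ and again using $|\net|=O(1)$ yields
\[
O\!\left(|\net|\cdot\Bigl(\sum_{i\in[\setsize]}\timeSynop_{\Synop_{\points_i}}+\setsize\log\setsize\Bigr)\right)=O\!\left(\sum_{i\in[\setsize]}\timeSynop_{\Synop_{\points_i}}+\setsize\log\setsize\right).
\]

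For the query time, step one of Algorithm~\ref{alg:Topkquery} scans $\net$ to find $\vector=\argmin_{\vec h\in\net}\|\vec u-\vec h\|$ in $O(|\net|)=O(1)$ time. Step two calls $\rangetree_\vector.\mathsf{Report}(I')$ on a one-dimensional static range tree, which by the standard bound from Section~\ref{sec:prelim} runs in $O(\log\setsize+|\Gamma|)$ time; since every value in $\Gamma$ corresponds to a distinct index $j$ that gets reported, $|\Gamma|\leq \out$, so the total is $O(\log\setsize+\out)$.

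I do not expect any serious obstacle here: the only thing that could mislead the bookkeeping is the hidden dependence on $\eps$ and $d$ inside $|\net|$, and the argument relies crucially on $\eps$ being a constant so that $|\net|$ does not enter the final complexities. The more detailed bounds for general $\eps$ (without collapsing $|\net|$) are deferred to Appendix~\ref{appndx:topk}, exactly as the surrounding text indicates, and the same four-line accounting above extends verbatim by simply keeping the $|\net|=O(\eps^{-d+1})$ factor explicit.
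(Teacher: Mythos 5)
Your proof is correct and follows essentially the same route as the paper's: bound $|\net|=O(\eps^{-d+1})=O(1)$, charge the synopsis calls and the construction of one $1$-dimensional search tree ($O(\setsize)$ space, $O(\setsize\log\setsize)$ time) per net vector, and account for the query as an $O(|\net|)$ scan for the nearest net vector plus an $O(\log\setsize+\out)$ reporting query on $\rangetree_{\vector}$. The only difference is cosmetic: the paper keeps the $\eps^{-d+1}$ factor explicit in the appendix and absorbs it into the constants in the lemma statement, exactly as you do.
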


\vspace{-0.5em}
\begin{theorem}
\label{thm:resTopkAdd}
Let $\{\Synop_{\points_1},\ldots,\Synop_{\points_\setsize}\}$ be the input to the $\problemDIk$ problem, such that $\Synop_{\points_i}$ is a synopsis of a dataset $\points_i\subset \Re^d$, where $d\geq 1$ is a constant, with $\err_{\Synop_{\points_i}}(\vec{\F}^d_{k})\leq \delta$ for $\delta\in[0,1)$ and $k\geq 1$, for every $i\in[\setsize]$. Let $\eps\in(0,1)$ be an arbitrarily small constant parameter.
A data structure of size $O(\setsize)$ can be constructed in $O( \setsize\log(\setsize)+\sum_{i\in[\setsize]}\timeSynop_{\Synop_{\points_i}})$ time, such that
given a query $\Pi=\pred_{\M_{\vec{u},k},\interval}$ for a one-sided interval $\interval=[\intervalL,1]$,
it returns a set of indexes $J$ such that $\query_\Pi(\sets)\subseteq J$ and for every $j\in J$, $\M_{\vec{u},k}(\points_j)\geq \intervalL-\eps-2\delta$.
The query time is $O(\log (\setsize) +\out)$.

\end{theorem}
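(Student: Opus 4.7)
\smallskip\noindent\textbf{Proof proposal.} My plan is to combine the construction described in Algorithm~\ref{alg:Topkconstruction}, the correctness guarantee of Lemma~\ref{lem:tech3}, and the complexity bound of Lemma~\ref{lem:space4}, with a simple rescaling of the accuracy parameter. Specifically, I would run the construction of Algorithm~\ref{alg:Topkconstruction} with parameter $\eps' = \eps/2$ in place of $\eps$. This produces an $\eps'$-net $\net \subset \sphere^{d-1}$, and for every vector $\vector \in \net$ a binary search tree $\tree_\vector$ over the estimated scores $\Gamma_\vector = \{\gamma_\vector^{(i)} \mid i \in [\setsize]\}$, where $\gamma_\vector^{(i)} = \Synop_{\points_i}.\mathsf{Score}(\vector, k)$. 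The query algorithm (Algorithm~\ref{alg:Topkquery}) snaps the query unit vector $\vec{u}$ to its nearest net vector $\vector$ and reports every index $j$ whose stored score $\gamma_\vector^{(j)}$ lies in $[\intervalL - \eps' - \delta, \infty)$.

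For correctness, I would directly invoke Lemma~\ref{lem:tech3} with $\eps' = \eps/2$. That lemma shows $\query_\Pi(\sets) \subseteq J$ and that every reported $j \in J$ satisfies $\score_k(\points_j, \vec{u}) \geq \intervalL - 2\eps' - 2\delta = \intervalL - \eps - 2\delta$, which is exactly the error guarantee claimed in the theorem statement. No additional probabilistic argument is needed because, unlike the percentile setting, the synopsis here is assumed to satisfy the deterministic error bound $\err_{\Synop_{\points_i}}(\vec{\F}^d_k) \leq \delta$ uniformly over the class.

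For the complexity analysis I would apply Lemma~\ref{lem:space4}, again with $\eps' = \eps/2$. Since $\eps$ is an arbitrarily small but fixed constant and $d = O(1)$, the $\eps'$-net has size $|\net| = O(\eps'^{-d+1}) = O(1)$. Building each of the $|\net|$ binary search trees requires sorting $\setsize$ scores, giving $O(|\net| \cdot \setsize \log \setsize) = O(\setsize \log \setsize)$ time; on top of this we pay $O(|\net|) = O(1)$ invocations of $\Synop_{\points_i}.\mathsf{Score}(\cdot,k)$ per synopsis, totaling $O(\sum_{i \in [\setsize]} \timeSynop_{\Synop_{\points_i}})$. The overall preprocessing time is therefore $O(\setsize \log \setsize + \sum_{i \in [\setsize]} \timeSynop_{\Synop_{\points_i}})$, and the total space is $O(|\net| \cdot \setsize) = O(\setsize)$. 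For a query, identifying the nearest net vector to $\vec{u}$ costs $O(|\net|) = O(1)$, and then a one-sided range reporting query on the selected $\tree_\vector$ takes $O(\log \setsize + \out)$ time.

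The only subtle point, which I would address explicitly, is that all ``hidden constants'' depend on $\eps$ and $d$; since both are constants by assumption, these dependencies disappear in the $O(\cdot)$ notation and the stated bounds follow. Aside from this rescaling, no new machinery is required: the theorem is essentially a packaging of Lemmas~\ref{lem:tech3} and~\ref{lem:space4}, so I would not expect any significant technical obstacle.
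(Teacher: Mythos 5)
Your proposal matches the paper's own argument: Theorem~\ref{thm:resTopkAdd} is obtained exactly by running Algorithm~\ref{alg:Topkconstruction} with the accuracy parameter halved, invoking Lemma~\ref{lem:tech3} for the containment and the $\intervalL-\eps-2\delta$ guarantee, and Lemma~\ref{lem:space4} for space, preprocessing, and query time, with the $\eps^{-d+1}$ net-size factors absorbed into constants since $\eps$ and $d$ are constants. No gaps; this is essentially the paper's proof.
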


\vspace{-1em}
\section{Future work}
\label{sec:concl}

\new{
There are multiple interesting open problems arising from our data structures and the proposed framework in general.


\newcommand{\dist}{\mathsf{dist}}
$\bullet$ Our new framework can be used to define multiple types of distribution-aware queries such as nearest neighbor or diversity queries
both in the centralized and federated setting. We are given a repository $\points_1,...,\points_N$.
For nearest neighbor queries: given a query point $q$ and a threshold $\tau$, return all datasets $\points_j$ such that $\dist(q,\points_j)\leq\tau$, where $\dist(q,\points_j)$ is the closest distance between q and a point in $\points_j$. For diversity queries: given a query rectangle $\rect$ and a threshold $\tau$, return all datasets $\points_j$ such that $\mathsf{div}(\points_j\cap \rect)\geq \tau$, where $\mathsf{div}()$ is a diversity function~\cite{indyk2014composable}.
While the goal of the paper is not to provide a data structure with theoretical guarantees for every problem arising from our proposed framework, our techniques are useful to handle additional queries such as nearest neighbor and diversity queries. 
The main obstacle for these queries is that, to the best of our knowledge, there are no high-quality and small-size coresets we can use for the construction of our data structures. 
Recall that for the \problemDI problem we used the notion of $\eps$-sample, while for the \problemDIk problem we used the notion of $\eps$-net to construct coresets that are sufficient to answer any percentile or preference query, respectively.
On the other hand, for the approximate nearest neighbor problem (similar arguments can be made for other types of queries), given a set of points $\points$, there is no method to derive a subset $S$, where $|S|\ll|\points|$, such that for any query point $q\in \Re^d$, $\dist(q,S)\leq(1+\eps)\cdot\dist(q,\points)$. There are some recent results constructing coresets for approximate nearest neighbor queries with additive approximation~\cite{gao2024rabitq}. 
While our current data structures only handle percentile and preference queries (percentile measure and top-k preference measure functions), our work is a first step toward efficiently solving general problems arising from our proposed framework.

$\bullet$ For the \problemCDI problem in $\Re$ we proposed an exact data structure of near linear space and polylogarithmic query, assuming that $\interval$ is fixed. A natural question to ask is whether this data structure can be extended to allow $\interval$ to be part of the query.
Our current data structure (in Appendix~\ref{appndx:lb1}) heavily uses the fact that $\interval=[\intervalL, \intervalU]$ is known in the preprocessing phase.
One idea to overcome this obstacle is to construct the data structure for different dyadic ranges in $[0,1]$. Such an idea has been used in~\cite{cormode2011synopses} for range queries. Given $\rect$ and $\interval$, we could combine the different results from the precomputed dyadic ranges to have a valid solution for our problem. Currently, it is not clear how the precomputed dyadic ranges can be combined to get an exact solution.
}

\bibliographystyle{abbrv}
\bibliography{ref}
\newpage
\appendix

\section{Missing details from Section~\ref{sec:prelim}}
\label{appndx:prelim}
\begin{proof}[Proof of Lemma~\ref{lem:helper1}]
   \new{ Since $\err_{\Synop_X}(\F_{\square}^d)=\delta$, we have that for every rectangle $\rect\in\Re^d$, $|\M_{\rect}(X) - \M_{\rect}(\Synop_{X})|\leq \delta\Leftrightarrow -\delta+\M_{\rect}(\Synop_{X})\leq \M_{\rect}(X)\leq \delta+\M_{\rect}(\Synop_{X})\Leftrightarrow -\delta+\frac{|\rect\cap \Synop_{X}|}{|\Synop_X|}\leq \frac{|\rect\cap X|}{|X|}\leq \delta +\frac{|\rect\cap \Synop_{X}|}{|\Synop_X|}$. Since $\epsample$ is a set of $O(\eps^{-2}\log \prob^{-1})$ random samples from $\Synop_X$, from~\cite{chazelle2000discrepancy}, we have that $\epsample$ is an $\eps$-sample for $\Synop_X$ with probability at least $1-\prob^{-1}$. Hence, by definition, $-\eps+\frac{|\rect\cap \epsample|}{|\epsample|}\leq \frac{|\rect\cap \Synop_X|}{|\Synop_X|}\leq \eps+\frac{|\rect\cap \epsample|}{|\epsample|}$. By combining the derived inequalities, $-\delta-\eps+\frac{|\rect\cap \epsample|}{|\epsample|}\leq \frac{|\rect\cap X|}{|X|}\leq \delta+\eps+\frac{|\rect\cap \epsample|}{|\epsample|}\Leftrightarrow \left|\frac{|\rect\cap X|}{|X|}-\frac{|\rect\cap \epsample|}{|\epsample|}\right|\leq \delta+\eps$. The result follows.
   }
\end{proof}

\section{Missing details from Section~\ref{sec:lb}}
\label{appndx:lb}
\begin{proof}[Proof of Lemma~\ref{lem:rest}]
Throughout this proof, we use the term \textit{total size} of a collection of sets to denote the sum of the sizes of the sets in the collection. 

    We show that if a data structure $\D$ exists for the uniform set intersection problem with space near-linear on total size and query time near-linear on output size, we can break the strong set intersection conjecture. Intuitively, we show that if $\D$ exists, we can use it as a black box to efficiently answer general set intersection queries. 

    Let $S = \{S_1, \ldots, S_g \}$, be an arbitrary collection of sets without any restrictions over a universe $\U$, with total size $W = \sum_{S_i \in S}|S_i|$. We build a data structure $\D'$ to answer intersection queries over $S$ as follows. For any element $u \in \U$, let $c_u = |S_i \in S : u \in S_i|$ be the number of different sets that $u$ belongs to. First, we partition $\U$ into three subsets, the \textit{light} elements $\U^l = \{u \in \U : c_u < W^{\frac{5}{12}}\}$, the \textit{medium} elements $\U^m = \{u \in \U : W^{\frac{5}{12}} \leq c_u < W^{\frac{5}{6}}\}$, and the \textit{heavy} elements $\U^h = \{u \in \U : W^{\frac{5}{6}} \leq c_u\}$.
    Next, for every pair $i,j\in[g]$, such that $S_i \cap S_j \cap \U^l \neq \emptyset$ we store the list $L_{i,j}=S_i \cap S_j \cap \U^l$. We place all computed lists in a simple lookup table with respect to the indexes $i, j$. For each $S_i \in S$, we define $T_i = S_i \cap \U^m$. Let $T$ be an initially empty collection of sets. We add $T_i$ to $T$ for every $S_i \in S$. Our goal is to make $T$ uniform and to do so, we add dummy sets to $T$. More formally, for any element $u \in \U^m$, let $c'_u = |\{T_i \in T : u \in T_i\}|$ be the number of different sets in $T$ that $u$ belongs to. We add $W^{\frac{5}{6}} - c'_u$ identical sets $\{u\}$ to $T$ for all $u \in \U^m$ to make $T$ uniform. Let $M$ be the total size of $T$ after these operations. We build the data structure $\D$ over $T$. Since $T$ is uniform, by the assumption, $\D$ has size $\O(M)$ and answers set intersection queries over $T$ in $\O(1 + \outt)$ time. At this stage, we are done with the preprocessing of $\D'$ and we are ready to answer set intersection queries over $S$. 

    When an intersection query for a pair of sets $S_i, S_j$ comes, we first look for the pair $i, j$ in the lookup table to see whether $S_i$ and $S_j$ intersect on a light value. If the answer is positive we report the elements in $L_{i,j}$. Next, 
    we run the intersection query for $T_i$ and $T_J$ on $\D$
     and report all the medium elements in $S_i \cap S_j$ and continue. Finally, we go through each heavy element $u \in \U^h$ one by one and check whether both $S_i$ and $S_j$ contain $u$ and report it if the answer is positive. 
    
    It is easy to verify that by the described procedure, we correctly answer the queries since we report the light values in $S_i\cap S_j$ using the preprocessed lookup table, the medium values in $S_i\cap S_j$ invoking $\D$, and the heavy values in $S_i\cap S_j$ considering every heavy value one by one.

    Now, we to analyze the space and query time of $\D'$ as described above.
    For the query phase, we check the lookup table in $\O(1)$ time and report the light elements in $S_i\cap S_j$ in $\O(\outt)$ time. We invoke $\D$ once and spend $\O(\outt)$ time to report the heavy elements in $S_i\cap S_j$. We check all heavy elements in $\O(|\U^h|)$ time. We have $\sum_{u \in \U}c_u = \sum _{S_i \in S}|S_i| = W$, and for each heavy value $u \in \U^h$, we know $c_u \geq W^{\frac{5}{6}}$ by definition. Thus, we have $|\U^h| \leq \frac{W}{W^{\frac{5}{6}}} = W^{\frac{1}{6}}$. Therefore, $\D'$ answers the intersection queries in $\O(W^{\frac{1}{6}} + \outt)$ time.

    Next, we bound the space of $\D'$.
    For each medium element $u \in \U^m$, we have $c_u \geq W^{\frac{5}{12}}$ by definition and hence we have $|\U^m| \leq \frac{W}{W^{\frac{5}{12}}} = W^{\frac{7}{12}}$. By the construction of $T$, we know that each $u \in \U^m$ belongs to exactly $W^{\frac{5}{6}}$ different sets in $T$, so the total size of $T$ is equal to $|\U^m|W^{\frac{5}{6}} \leq W^{\frac{7}{12}}W^{\frac{5}{6}} = W^{\frac{17}{12}}$. So, by the assumption, $\D$ uses $\O(W^\frac{17}{12})$ space.
    To bound the size of the lookup table, we need to bound the number of the pairs of sets that intersect on a light element. We have $\sum_{S_i \in S}|S_i| = W$ and hence there are at most $W$ pairs of set-elements like $(S_i, u)$ such that $S_i \in S$, $u \in S_i$, and $u \in \U^l$. By definition, we know that for each light element $u$, it holds that $c_u < W^{\frac{5}{12}}$ and hence each of these set-element pairs can be matched to at most $W^{\frac{5}{12}}$ other set-element pair on the same element to create a pair of sets that intersect on a light value. Thus, there are at most $W^{\frac{5}{12}}W = W^{\frac{17}{12}}$ different set-element, set-element pairs where the element is a single light value, so $\sum_{\ell_1,\ell_2\in[g]}|L_{\ell_1,\ell_2}|=\O(W^\frac{17}{12})$ and the size of 
    the lookup table is $\O(W^\frac{17}{12})$.
    
    Overall, $\D'$ uses $\O(W + W^\frac{17}{12} + W^\frac{17}{12}) = \O(W^\frac{17}{12})$ space to answer queries in $\O(W^\frac{1}{6} + \outt)$ time over an input $S$ with total size of $W$.
    According to Conjecture~\ref{conj:inter}, any data structure that answers intersection queries in $\O(W^{\frac{1}{6}}+ \outt)$, needs space of $\tilde{\Omega}\left(\frac{W^2}{W^{\frac{2}{6}}}\right) = \tilde{\Omega}(W^{\frac{5}{3}})$. This implies that $\D$ does not exist since $W^{\frac{5}{3}} > W^{\frac{17}{12}}$, and the result follows.
\end{proof}

\subsection{Reduction from Set Intersection to \problemCDI}
\label{appndx:lb1}

We show that having an efficient data structure for answering the \problemCDI query contradicts Lemma~\ref{lem:rest} and hence the existence of efficient data structures for the \problemCDI problem is unlikely. 

Let $S_1,\ldots, S_g$ be a uniform collection of sets. Let $M = \sum_{i \in [g]}|S_i|$ and let $s_{i,k}$ be the value of the $k$-th item in $S_i$ (we consider any arbitrary order of the items in each $S_i$). Without loss of generality, we assume that $\U = \cup_{i \in [g]} S_i = \{1, \ldots, q\}$, for some integer $q \leq M$. 

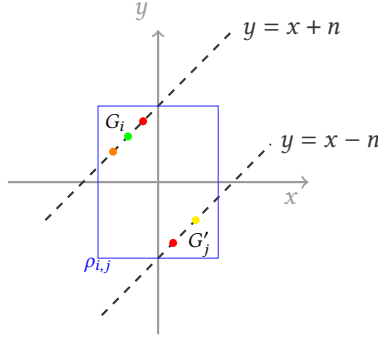
\begin{figure}[t]
\centering
\begin{tikzpicture}[scale=1]

  \draw[->, thick, gray!80] (-2, 0) -- (2, 0) node[below left] {$x$};
  \draw[->, thick, gray!80] (0, -2) -- (0, 2) node[above left] {$y$};

  \draw[black!80, thick, dashed] (-1.5, -0.5) -- (1, 2) node[anchor=west] {$y = x + n$};
  \draw[black!80, thick, dashed] (-0.75, -1.75) -- (1.5, 0.5) node[anchor=west] {$y = x - n$};

  \draw[blue!80] (-0.8, 1.0) rectangle (0.8, -1.0);

  \fill[red] (-0.2, 0.8) circle (1.5pt) node[anchor=east] {};
  \fill[green] (-0.4, 0.6) circle (1.5pt);
  \fill[orange] (-0.6, 0.4) circle (1.5pt);
  \node[anchor=north west, scale = 0.8, black] at (-0.8, 1.00) {$G_i$};

  \fill[yellow] (0.5, -0.5) circle (1.5pt);
  \fill[red] (0.2, -0.8) circle (1.5pt);
  \node[anchor=south east, scale = 0.8] at (0.8, -1.08) {$G'_j$};

  \node[anchor=south, scale = 0.8, blue] at (-0.8, -1.3) {$\rho_{i,j}$};

\end{tikzpicture}
\caption{An example of the lower bound construction. The rectangle $\rho_{i, j}$ only contains the points from $G_i$ and $G'_j$. The points from same the same point sets are shown with the same color. Since red appears twice in $\rho_{i, j}$, the sets $S_i$ and $S_j$ intersect.}
    \label{fig:lower}
\end{figure}

Using the idea from \cite{rahul2012algorithms}, we design an instance $\sets=\{\points_1,\ldots, \points_{q}\}$ of the \problemCDI problem in $\Re^2$ based on these sets. All points lie on two parallel lines $L=x+M$, and $L'=x-M$. Let $m_0=0$ and $m_i=m_{i-1}+|S_i|$ for $i=1,\ldots, g$. For each $s_{i,k}$ we create two points,
$p_{i,k}=(-(k+m_{i-1}), -(k+m_{i-1})+M)$ on $L$, and $p_{i,k}'=((k+m_{i-1}), k+m_{i-1}-M)$ on $L'$ and we add both points $p_{i,k}, p_{i,k}'$ to the point set $\points_{s_{i,k}}$. Let $G_i$ be the set of points corresponding to $S_i$ that lie on $L$, and $G_i'$ be the set of points corresponding to $S_i$ that lie on $L'$. Let $H = \cup_{i \in [q]}\points_i$ denote the set of all the points we created.

Given any pair of indices $i,j \in [g]$, in the set intersection problem, we want to report $S_i \cap S_j$. Geometrically, it is known~\cite{rahul2012algorithms} that we can find a rectangle $\rho_{i,j}$ in $O(1)$ time such that $\rho_{i,j}\cap H = G_i \cup G'_j$ (see Figure~\ref{fig:lower}). Hence, it is easy to check that for any element $u \in \U$, we have $u \in S_i \cap S_j$, if and only if $|\points_u \cap \rho_{i,j}| = 2$.
Moreover, since we started from a uniform collection of sets, all the point sets contain the same number of points, i.e., for any pair of point sets $\points_i, \points_j \in \sets$, we have $|\points_i| = |\points_j| = t$, for some number $t$. To answer the intersection queries, we build a data structure $\D$ for answering \problemCDI query over $\sets$. For an arbitrary pair of sets $S_i, S_j$, we first find $\rho_{i,j}$ and then query $\D$ with rectangle $\rho_{i, j}$ and interval $[\frac{1.5}{t}, 1]$. Note that $\D$ reports an index $r$, if and only if $|\points_r \cap \rho_{i,j}| > 1$ (so $|\points_r \cap \rho_{i,j}| = 2$), and hence, $\D$ reports $S_i \cap S_j$. 

For an input of size $x$, let $\mathcal{S}(x)$ be the space used by a data structure for the \problemCDI and $\mathcal{Q}(x)+\outt$ be its query time, where $\outt$ is the output size.
By the above construction, we can answer uniform set intersection queries in time $O(1)+\mathcal{Q}(M)+\outt$ and space $\mathcal{S}(M) + M$, using $\D$. 
Along with Lemma~\ref{lem:rest}, we conclude with Theorem~\ref{lowerbound:centralized-percentile}.

\subsection{Reduction from Halfspace reporting to \problemCDIk}
\label{appndx:lb2}
We show a reduction from the halfspace reporting problem to \problemCDIk problem.
Let $U$ be a set of $n$ points in $\Re^d$.
Without loss of generality, we can assume that they all lie in the first orthant (otherwise we can apply a transformation to move all the points to the first orthant without changing the definition of the problem). Furthermore, without loss of generality, we can assume that $U$ lies in the unit ball.
If not, then we can apply a scaling transformation to map $U$ to the set $\bar{U}$ in the unit ball. It is known that for any halfspace $H$ we can construct a halfspace $\bar{H}$ in $O(1)$ time, such that $u\in U\cap H$ if and only if $\bar{u}\in \bar{U}\cap \bar{H}$.
Let $\sets=\{\points_1,\ldots, \points_n\}$ be a repository such that for every $i\in[n]$, the dataset $\points_i$ contains only the point $u_i\in U$.
Let $\D$ be a data structure for the \problemCDIk problem over $\sets$. We also apply the following affine transformation to $U$. Let $x\in \Re^d$ be a point such that all coordinates of the points in $U$ are smaller than $x$'s coordinates. We apply a rotation transformation to $U$ creating $U'$ such that $x$ becomes the origin and $U'$ lies in the first orthant. Such a rotation is computed in $O(n)$ time.
Let $\sets'=\{\points_1',\ldots, \points_n'\}$ be a repository such that for every $i\in[n]$, the dataset $\points_i'$ contains only the point $u_i'\in U$.
Let $\D'$ be a data structure for the \problemCDIk problem over $\sets'$.

Now assume that a query halfspace $H$ is given for the halfspace reporting problem. 
We check in $O(1)$ time whether $H$ intersects the first orthant. If not, we return $\emptyset$. Otherwise, we proceed as follows.
We check in $O(1)$ time whether the origin belongs in $H$. If not, then: Let $h$ be the halfplane that defines $H$. We compute in $O(1)$ time a unit vector $\vector$ perpendicular to $h$. We also compute in $O(1)$ time the line $L$ that supports $\vector$ and passes through the origin. Let $\intervalL$ be the distance from the origin to the intersection point of $L$ and $h$.
Let $\interval=[\intervalL,1]$ and $k=1$.
We query $\D$ on predicate $\pred_{\M_{\vector,k},\interval}$.
On the other hand, if the origin belongs in $H$, then we apply the same rotation we had from $U$ to $U'$, to the halfspace $H$ to get $H'$, in $O(1)$ time.
We repeat the same procedure as before to compute $\vector'$ and $\intervalL'$. Let $\interval'=[\intervalL',1]$ and $k'=1$.
We query $\D'$ on predicate $\pred_{\M_{\vector',k'},\interval'}$.

When $H$ does not contain the origin, it is easy to see that $u_i\in H$ if and only if $\pred_{\M_{\vector,k},\interval}(\points_i)=\textsf{True}$. Similarly, if $H$ contains the origin, $u_i\in H$ if and only if $u_i'\in H'$. And $u_i'\in H'$ if and only if $\pred_{\M_{\vector',k'},\interval'}(\points_i')=\textsf{True}$.

Let $\mathcal{S}(n)$ be the space of the data structure for the \problemCDIk problem on $n$ datasets and $\mathcal{Q}(n)+\outt$ be its query time, where $\outt$ is the output size. Using $\D, \D'$, we can solve the halfspace reporting problem with a data structure of $2\mathcal{S}(n)$ space and $O(1)+\mathcal{Q}(n)+\outt$ query time.

\section{Missing details from Section~\ref{sec:Perc}}
\label{appndx:secPerc}

\subsection{Exact data structure for \problemCDI in $\Re^1$}
\label{subsec:percExact}
In the \problemCDI problem, we have full access to $\points_i$ for every $i\in[\setsize]$.
We assume that we know the range-predicate $\interval$ upfront, i.e., $\interval=[\intervalL,\intervalU]$ is not part of the query. This is only a requirement in this subsection. In all the other general approximate data structure $\interval$ is part of the query.
The data structure works for points in $\Re^1$.

\paragraph{Main idea}
The main issue to answer a \problemCDI query is that many points from the same dataset might lie inside the query rectangle (interval in $\Re^1$) $\rect$. Instead, for every dataset $\points_i$ we map all its points to a new dataset in $\Re^4$, such that given $\rect$, there exists an orthant $\rect'$ in $\Re^4$ satisfying the following strong property: For every dataset there will be at most one point in $\rect'$, and for every dataset $\points_i$ that has a point in $\rect'$ we guarantee that $i$ must be reported to answer the \problemCDI query correctly.



\paragraph{Data structure}
Let $\points_i=\{p_1^{(i)}, \ldots, p_{\pointsize_i}^{(i)}\}$ for every $\points_i\in \sets$. Without loss of generality, we also assume that $p^{(i)}_{h_1}\neq p^{(i)}_{h_2}$ for every $i\in [\setsize]$ and $h_1\neq h_2\in [\pointsize_i]$. For every $i\in [\setsize]$, we sort the points in $\points_i$ in ascending order. Without loss of generality assume $p_1^{(i)}< \ldots < p_{\pointsize_i}^{(i)}$.
For every point $p^{(i)}_{j}\in \points_i$, let $q_j$ be the point in $\points_i$ such that $|[q_j,p^{(i)}_j]\cap \points_i|=\intervalU\cdot \pointsize_i$, let $r_j$ be the point in $\points_i$ such that $|[r_j,p^{(i)}_j]\cap \points_i|=\intervalL\cdot \pointsize_i$, and let $s_j=p^{(i)}_{j+1}$.
We straightforwardly handle the corner cases, for example if $s_j$ is not defined because $p^{(i)}_{j+1}$ is the rightmost point in $\points_i$, we can assume that $s_j\rightarrow\infty$.
Let $Q_i=\{(q_j,r_j,p^{(i)}_j,s_j)\mid p^{(i)}_j\}$ be the set of $\pointsize_i$ four dimensional points.
Let $Q=\bigcup_{i\in [\setsize]}Q_i$.
We construct a (static) range tree $\mathcal{T}$ for range reporting queries on $Q$.

\paragraph{Query procedure}
Let $\rect=[\rect^-,\rect^+]$ be the query interval in $\Re^1$.
We define the orthant $\rect'=(-\infty,\rect^-]\times [\rect^-,\infty)\times (-\infty,\rect^+]\times [\rect^+,\infty)$ in $\Re^4$ and we run a reporting query on $\rangetree$. Each time the query procedure finds a point that belongs in $Q_i$, we report the index $i$.

\paragraph{Correctness}
\begin{lemma}
\label{lem:dupl0}
    The query procedure does not report duplicates
\end{lemma}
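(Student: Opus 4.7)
The plan is to establish the stronger claim that for every dataset index $i\in[\setsize]$, at most one point of $Q_i$ lies inside the query orthant $\rect'$. Since the query reports $i$ exactly once for each point of $Q_i$ it locates in $\rect'$, this stronger statement immediately rules out duplicate outputs. Fixing $i$ and unpacking the four factors of $\rect'=(-\infty,\rect^-]\times[\rect^-,\infty)\times(-\infty,\rect^+]\times[\rect^+,\infty)$, a point $(q_j,r_j,p_j^{(i)},s_j)\in Q_i$ belongs to $\rect'$ if and only if
\[
q_j\le\rect^-\le r_j \quad\text{and}\quad p_j^{(i)}\le\rect^+\le s_j=p_{j+1}^{(i)}.
\]

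The key step is to observe that the second pair of inequalities alone already pins $j$ down uniquely. Because the construction sorts $\points_i$ strictly as $p_1^{(i)}<p_2^{(i)}<\cdots<p_{n_i}^{(i)}$, the slabs $[p_j^{(i)},p_{j+1}^{(i)})$ (together with $[p_{n_i}^{(i)},\infty)$, coming from the convention $s_{n_i}=\infty$ for the rightmost point) partition the real line. Hence the scalar $\rect^+$ falls into exactly one such slab, meaning there is at most one index $j$ with $p_j^{(i)}\le\rect^+<p_{j+1}^{(i)}$. I would invoke the standard general-position assumption that the query endpoint $\rect^+$ does not coincide with any dataset value (or, equivalently, apply symbolic perturbation, or replace the fourth factor of $\rect'$ by the strict half-line $(\rect^+,\infty)$, requiring $s_j>\rect^+$). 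With either fix, at most one $j$ can satisfy the third and fourth coordinate constraints, and therefore at most one point of $Q_i$ lies in $\rect'$.

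Applying this uniqueness to every $i\in[\setsize]$ concludes that the query reports each index at most once, proving the lemma. The only subtlety I foresee is the boundary case $\rect^+=p_{j+1}^{(i)}$, where without the small fix noted above both the $j$-th and $(j{+}1)$-st points of $Q_i$ would technically qualify; I would flag this explicitly, but the resolution is cosmetic and does not affect the range-tree's space or query-time bounds. Notably, the first pair of coordinate constraints ($q_j\le\rect^-\le r_j$) plays no role in establishing uniqueness—it only enforces that the unique candidate $j$ is \emph{reported} precisely when the fraction $|[\rect^-,\rect^+]\cap\points_i|/n_i$ lies in $[\intervalL,\intervalU]$—so it is safe to discard it when reasoning about duplicates.
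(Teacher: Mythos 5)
Your proof is correct and follows essentially the same argument as the paper: uniqueness comes from the last two coordinates, since $\rect^+$ can lie in at most one slab $[p_j^{(i)}, p_{j+1}^{(i)})$, i.e., only the largest point of $\points_i$ inside $\rect$ can satisfy those constraints. Your explicit treatment of the degenerate case $\rect^+ = p_{j+1}^{(i)}$ (fixed by general position, symbolic perturbation, or a strict fourth factor) is in fact slightly more careful than the paper, whose proof tacitly assumes $s_j > \rect^+$ even though the orthant uses the closed half-line $[\rect^+,\infty)$.
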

\begin{proof}
    For every $i\in [\setsize]$, we show that there is at most one point $p_j^{(i)}$ such that $(q_j,r_j,p^{(i)}_j,s_j)\in \rect'\cap Q_i$. We observe that only the point in $\points_i$ with the largest coordinate in $\rect$ satisfies the last two linear constraints of the orthant. If $p_j^{(i)}$ is the point with the largest coordinate in $\rect$ among the points in $\points_i$ then $\points^{(i)}_j\leq \rect^+$ and $s_j>\rect^+$. For any $p^{(i)}_h$ with $h<j$ it does not hold that $s_j\geq \rect^+$, while for any $p^{(i)}_h$ with $h>j$ it does not hold that $p^{(i)}_h\leq \rect^+$.
\end{proof}

Let $\Pi=\pred_{\M_{\rect},\interval}$ be the logical expression with the range-predicate $\interval$. Recall that $\query_\Pi(\sets)$ denotes the correct set of indexes we should report.


\begin{lemma}
    The query procedure returns $\query_\Pi(\sets)$. 
\end{lemma}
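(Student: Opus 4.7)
The plan is to show that membership of the tuple $(q_j,r_j,p^{(i)}_j,s_j)\in Q_i$ in the query orthant $\rect'$ is logically equivalent to the predicate $\intervalL\,\pointsize_i \le |\rect\cap\points_i| \le \intervalU\,\pointsize_i$, and then to combine this with Lemma~\ref{lem:dupl0} which already tells us that at most one tuple per dataset can witness inclusion. The first step will be to extract, from Lemma~\ref{lem:dupl0}, the combinatorial role of the witness: the last two coordinate constraints $p^{(i)}_j\le \rect^+$ and $s_j=p^{(i)}_{j+1}\ge\rect^+$ pin $p^{(i)}_j$ down to be the rightmost point of $\points_i$ lying in $\rect$, so that $|\rect\cap\points_i|=j-m+1$ where $p^{(i)}_m$ is the leftmost such point.

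For the inclusion $\query_\Pi(\sets)\subseteq J$, I will take any $i$ with $\intervalL\,\pointsize_i\le k\le\intervalU\,\pointsize_i$, where $k=|\rect\cap\points_i|$. Setting $p^{(i)}_j$ to the rightmost point of $\points_i$ in $\rect$ and $p^{(i)}_m$ to the leftmost (so $k=j-m+1$), the constraints $p^{(i)}_j\le\rect^+$ and $s_j\ge\rect^+$ are immediate. For the other two coordinates, I will observe that $r_j=p^{(i)}_{j-\intervalL\pointsize_i+1}$ has index at least $m$ (since $k\ge\intervalL\,\pointsize_i$), so $r_j$ sits inside $\rect$ and in particular $r_j\ge\rect^-$; symmetrically, $q_j=p^{(i)}_{j-\intervalU\pointsize_i+1}$ has index at most $m-1$ (since $k\le\intervalU\,\pointsize_i$), so $q_j$ sits strictly to the left of $p^{(i)}_m$, giving $q_j<\rect^-$. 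Hence the tuple lies in $\rect'$ and $i$ is reported.

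For the reverse inclusion $J\subseteq\query_\Pi(\sets)$, I will start from a reported tuple $(q_j,r_j,p^{(i)}_j,s_j)\in\rect'\cap Q_i$, use the argument of Lemma~\ref{lem:dupl0} to conclude that $p^{(i)}_j$ is the rightmost element of $\points_i\cap\rect$ (the condition $r_j\ge\rect^-$ together with $r_j\le p^{(i)}_j\le\rect^+$ ensures $\points_i\cap\rect$ is nonempty). Then $r_j\ge\rect^-$ forces the leftmost point in $\rect$ to have index at most $j-\intervalL\,\pointsize_i+1$, hence $k\ge\intervalL\,\pointsize_i$; and $q_j\le\rect^-$ forces it to have index at least $j-\intervalU\,\pointsize_i+2$, hence $k\le\intervalU\,\pointsize_i$. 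Combined with Lemma~\ref{lem:dupl0}, this shows exactly $\query_\Pi(\sets)=J$.

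The main obstacle will be careful bookkeeping near the boundaries: fractional values of $\intervalL\,\pointsize_i$ and $\intervalU\,\pointsize_i$ (which I will handle by silently replacing them by their floors/ceilings in the definitions of $q_j$ and $r_j$), out-of-range indices when $j-\intervalU\,\pointsize_i+1\le 0$ or when $p^{(i)}_j$ is the last point of $\points_i$ (handled by the sentinel conventions $q_j\to-\infty$ and $s_j\to+\infty$ already fixed in the construction), and degenerate coincidences where a data coordinate equals $\rect^-$ or $\rect^+$ (handled by a standard general-position perturbation, or directly by checking that closed orthant constraints $q_j\le\rect^-$ and $r_j\ge\rect^-$ match the closed rectangle $\rect=[\rect^-,\rect^+]$). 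None of these edge cases changes the above argument conceptually; they only shift indices by one.
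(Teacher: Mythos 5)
Your overall route is the same as the paper's: translate membership of $(q_j,r_j,p^{(i)}_j,s_j)$ in the orthant $\rect'$ into the counting condition $\intervalL\,\pointsize_i\leq|\points_i\cap\rect|\leq\intervalU\,\pointsize_i$ at the rightmost witness $p^{(i)}_j$, and invoke the no-duplicates lemma; you just carry out the index bookkeeping that the paper compresses into one chain of equivalences. However, there is a concrete slip in your forward inclusion. From $k\leq\intervalU\,\pointsize_i$ you conclude that $q_j=p^{(i)}_{j-\intervalU\pointsize_i+1}$ has index at most $m-1$; in fact the inequality only gives $j-\intervalU\,\pointsize_i+1\leq j-k+1=m$, i.e.\ index at most $m$. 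At the boundary $k=\intervalU\,\pointsize_i$ you get $q_j=p^{(i)}_m\geq\rect^-$ (strictly greater in general position), so the first constraint $q_j\leq\rect^-$ of $\rect'$ fails and the index $i$ is \emph{not} reported, even though it satisfies the closed predicate. This is not covered by the edge cases you list (fractional thresholds, sentinel values, coordinate coincidences), since it arises for integral $\intervalU\,\pointsize_i$ and points in general position.

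To be fair, this blind spot is inherited from the paper: its one-line equivalence $\intervalL\pointsize_i\leq|\points_i\cap\rect|\leq\intervalU\pointsize_i\Leftrightarrow q_j\leq\rect^-\text{ and }r_j\geq\rect^-$ has the same issue under the literal definition of $q_j$. The clean repair is to shift the definition of $q_j$ by one, e.g.\ require $|[q_j,p^{(i)}_j]\cap\points_i|=\intervalU\,\pointsize_i+1$ (equivalently, $\intervalU\,\pointsize_i$ points in the half-open interval strictly to the right of $q_j$); with that convention your computation gives exactly ``index of $q_j$ at most $m-1$ iff $k\leq\intervalU\,\pointsize_i$'' and both of your directions go through. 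The rest of your argument — the constraints on $r_j$, $p^{(i)}_j$, $s_j$, the nonemptiness check, and the reverse inclusion $J\subseteq\query_\Pi(\sets)$ — is correct and matches the paper's proof.
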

\begin{proof}
    Let $J$ be the set of indexes reported by the query procedure. Let $i\in \query_\Pi(\sets)$. Let $p^{(i)}_j$ be the points in $\points_i\cap \rect$ with the largest coordinate. From Lemma~\ref{lem:dupl0}, the last two linear constraints of $\rect'$ are satisfied. Since $i\in\query_\Pi(\sets)$ it must be the case that $\intervalL\leq \frac{|\points_i\cap \rect|}{|\points_i|}\leq \intervalU\Leftrightarrow \intervalL\cdot\pointsize_i\leq |\points_i\cap \rect|\leq \intervalU\cdot\pointsize_i\Leftrightarrow q_j\leq \rect^- \text{ and } r_j\geq \rect^-$. Hence, $(q_j,r_j, p^{(i)}_j,s_j)\in \rect'$ and the index $i$ will be reported by the query procedure.
    Equivalently, if $i\notin \query_\Pi(\sets)$, then $(q_j,r_j, p^{(i)}_j,s_j)\notin \rect'$ and the index $i$ will not be reported by the query procedure. 
\end{proof}

\paragraph{Analysis}
\begin{lemma}
    The data structure has $O(\totalsize\log^3( \totalsize))$ space and it can be constructed in $O(\totalsize\log^3(\totalsize))$ time.
\end{lemma}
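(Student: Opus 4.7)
The argument is essentially bookkeeping plus a direct appeal to the standard range-tree bounds recalled in Section~\ref{sec:prelim}. I would proceed in three steps.

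First, I would count the size of $Q$. By construction, for every $i \in [\setsize]$ and every point $p_j^{(i)} \in \points_i$, the procedure appends exactly one $4$-dimensional tuple $(q_j, r_j, p_j^{(i)}, s_j)$ to $Q_i$. Therefore $|Q_i| = \pointsize_i$, and summing over $i$ gives $|Q| = \sum_{i \in [\setsize]} \pointsize_i = \totalsize$.

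Second, I would bound the time to build $Q$ from the datasets. For each $\points_i$, sorting its points in ascending order takes $O(\pointsize_i \log \pointsize_i)$ time. Once $\points_i$ is sorted into an array, the values $q_j$, $r_j$, and $s_j$ corresponding to any $p_j^{(i)}$ are located at fixed offsets from position $j$ (namely $j - \lfloor \intervalU \pointsize_i \rfloor + 1$, $j - \lfloor \intervalL \pointsize_i \rfloor + 1$, and $j+1$, respectively), and can therefore be retrieved in $O(1)$ time via array indexing. Boundary indices that fall outside $[1, \pointsize_i]$ are handled by the corner-case conventions already noted in the construction (e.g.\ $s_j = +\infty$ when $j = \pointsize_i$). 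Summing over $i$, the total time to form $Q$ is $O(\totalsize \log \totalsize)$.

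Third, I would invoke the static range-tree bounds from Section~\ref{sec:prelim}. Since $Q$ consists of $\totalsize$ weighted points in $\Re^4$, the structure $\rangetree = \mathsf{RangeTreeConstruct}(Q, \cdot)$ uses $O(\totalsize \log^3 \totalsize)$ space and can be built in $O(\totalsize \log^3 \totalsize)$ time. This dominates the $O(\totalsize \log \totalsize)$ cost of constructing $Q$, so both of the claimed bounds follow.

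\textbf{Main obstacle.} There is no substantive obstacle: the lemma reduces to correctly counting $|Q| = \totalsize$ and plugging $d=4$ into the standard range-tree bounds. The only mild care needed is a consistent convention for non-integer values of $\intervalL \pointsize_i$ and $\intervalU \pointsize_i$ (floor/ceiling) and for boundary indices where $q_j, r_j, s_j$ would otherwise be undefined, but these choices do not affect the asymptotic complexity.
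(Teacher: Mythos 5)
Your proposal is correct and follows essentially the same route as the paper: observe that each point of each $\points_i$ contributes exactly one $4$-dimensional point to $Q$, so $|Q|=\totalsize$, and then apply the static range-tree bounds for $\totalsize$ points in $\Re^4$, which give the claimed $O(\totalsize\log^3\totalsize)$ space and construction time. The extra detail you supply about building $Q$ in $O(\totalsize\log\totalsize)$ time via sorting and fixed index offsets is a harmless elaboration that the paper leaves implicit.
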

\begin{proof}
    For every point $p_j^{(i)}$ in $\points_i$ we add one point in $Q_i$. The result follows from the complexity of constructing and storing a static range tree on $\totalsize$ $4$-dimensional points.
\end{proof}

\begin{lemma}
    The query time is $O(\log^3(\setsize\cdot\pointsize)+\out)$.
\end{lemma}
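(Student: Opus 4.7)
The plan is to observe that the query region $\rect'$ is a $4$-dimensional orthant (the product of four one-sided unbounded intervals determined by the two scalar values $\rect^-$ and $\rect^+$), and then invoke known bounds on orthogonal range reporting. First I would rewrite the query as a pair of interval-stabbing conditions: $(q_j, r_j, p_j^{(i)}, s_j) \in \rect'$ if and only if $\rect^- \in [q_j, r_j]$ and $\rect^+ \in [p_j^{(i)}, s_j]$, which confirms that this is a genuine $4$-dimensional dominance-style query rather than an arbitrary rectangle query.

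Next I would appeal to the layered range tree construction (i.e., the static range tree of the preliminaries augmented with fractional cascading on the innermost level) to answer $d$-dimensional orthogonal range reporting in $O(\log^{d-1} n + \outt)$ time, as opposed to the $O(\log^d n + \outt)$ bound of the plain range tree quoted in the preliminaries. With $d = 4$ and $n = |Q| = \totalsize$, this yields query time $O(\log^3 \totalsize + \outt)$. Using $\totalsize = \sum_{i \in [\setsize]} \pointsize_i \leq \setsize \cdot \pointsize$ gives the desired $O(\log^3(\setsize \cdot \pointsize) + \outt)$ bound.

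Finally I would close the loop with the no-duplicates lemma (Lemma~\ref{lem:dupl0}): since every reported dataset contributes exactly one point of $Q$ to $\rect'$, the number $\outt$ of points found by the reporting procedure equals the number $\out$ of distinct indexes output by our query algorithm, so the additive $\outt$ in the range-tree bound is precisely the output size of our procedure.

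The main subtle step is justifying the drop from $\log^d$ to $\log^{d-1}$; this would either invoke fractional cascading on the last coordinate, or alternatively exploit the dominance structure directly to peel off the outermost dimension by a binary search, but in either case this is a classical technique and poses no real obstacle.
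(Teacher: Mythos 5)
Your proposal is correct and follows essentially the same route as the paper: the paper's proof likewise consists of invoking the query procedure of a $4$-dimensional range tree on $Q$ together with Lemma~\ref{lem:dupl0} to equate the number of reported points with the number of reported indexes. Your explicit appeal to the layered/fractional-cascading variant to get $O(\log^{3}(\cdot)+\outt)$ rather than the $O(\log^{4}(\cdot)+\outt)$ bound quoted in the paper's preliminaries is a welcome extra justification that the paper leaves implicit.
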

\begin{proof}
    It follows form the proof of Lemma~\ref{lem:dupl0} and from the query procedure of a $4$-dimensional range tree.
\end{proof}

\begin{theorem}
\label{thm:res0}
Let $\sets=\{\points_1,\ldots,\points_\setsize\}$ be the input to the $\problemCDI$ problem, such that $\totalsize=\sum_{i\in[\setsize]}|\points_i|$, and $\points_i\subset \Re$ for every $i\in[\setsize]$. 
Let $\interval$ be an interval in $\Re$.
A data structure of size $O(\totalsize\log^3(\totalsize))$ can be constructed in $O(\totalsize\log^3(\totalsize))$ time, such that
given a query $\Pi=\pred_{\M_\rect,\interval}$, it returns the set of indexes $\query_\Pi(\sets)$ in $O(\log^3(\totalsize)+|\query_\Pi(\sets)|)$ query time. 

\end{theorem}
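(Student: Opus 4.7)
The plan is to assemble Theorem~\ref{thm:res0} from the four structural facts already established in this subsection: Lemma~\ref{lem:dupl0} (no duplicate reports), the correctness lemma (the query returns $\query_\Pi(\sets)$), the space/preprocessing lemma, and the query-time lemma. The non-trivial content lives in the correctness of the $4$-dimensional encoding, so I would spend most of the argument re-examining why that encoding faithfully captures the percentile predicate, and then close by reading off the complexities.

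For correctness, I would fix $i\in[\setsize]$ and argue that the four coordinates $(q_j,r_j,p_j^{(i)},s_j)$ attached to a point $p_j^{(i)}\in\points_i$ are designed so that membership in the orthant $\rect'=(-\infty,\rect^-]\times[\rect^-,\infty)\times(-\infty,\rect^+]\times[\rect^+,\infty)$ simultaneously enforces two independent properties. The last two coordinates, together with the sortedness of $\points_i$, single out the \emph{unique} index $j^\star$ for which $p_{j^\star}^{(i)}$ is the rightmost point of $\points_i\cap\rect$; this is exactly the content of Lemma~\ref{lem:dupl0} and guarantees that each dataset contributes at most one witness, so the total number of reported quadruples equals $|\query_\Pi(\sets)|$. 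The first two coordinates then test whether $|\points_i\cap\rect|\in[\intervalL\cdot\pointsize_i,\intervalU\cdot\pointsize_i]$: because $q_j$ (resp.\ $r_j$) is defined in the preprocessing as the unique point of $\points_i$ lying $\intervalU\cdot\pointsize_i$ (resp.\ $\intervalL\cdot\pointsize_i$) positions to the left of $p_j^{(i)}$ in sorted order, the inequalities $q_j\le\rect^-$ and $r_j\ge\rect^-$ at the witness index $j^\star$ are logically equivalent to $\intervalL\le \M_\rect(\points_i)\le\intervalU$. This is where it is essential that $\interval$ is fixed at preprocessing time; otherwise the quadruples could not be precomputed.

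For the complexity bounds, I would note $|Q|=\sum_i\pointsize_i=\totalsize$, and that building the quadruples costs $O(\totalsize\log\totalsize)$ time (sort each $\points_i$ and read off $q_j,r_j,s_j$ by index arithmetic in one linear scan; corner cases where the offset exceeds $\pointsize_i$ are handled by substituting $\pm\infty$). Building a static $4$-dimensional range tree on $Q$ dominates, giving $O(\totalsize\log^3\totalsize)$ preprocessing time and space by the standard bound recalled in Section~\ref{sec:prelim}. For a query, we map $\rect$ to $\rect'$ in $O(1)$ time and issue a range-reporting query on the $4$-dimensional range tree, which costs $O(\log^3\totalsize+K)$ where $K$ is the number of reported quadruples; by the uniqueness argument above $K=|\query_\Pi(\sets)|$, yielding the stated $O(\log^3\totalsize+|\query_\Pi(\sets)|)$ query bound. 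The main obstacle in the argument, as already suggested, is not the tree bookkeeping but the verification that the orthant conditions encode the percentile predicate without false positives or negatives at the boundary — in particular, handling quantities like $\intervalL\cdot\pointsize_i$ that need not be integral and indices for which $q_j$ or $r_j$ would fall outside $\points_i$; these are exactly the corner cases the construction sets to $\pm\infty$ to preserve the equivalence.
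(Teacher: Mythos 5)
Your proposal is correct and follows essentially the same route as the paper: the same quadruple encoding $(q_j,r_j,p_j^{(i)},s_j)$, the same orthant $\rect'$, the uniqueness of the rightmost witness point per dataset (the paper's no-duplicates lemma), the equivalence $q_j\le\rect^-\wedge r_j\ge\rect^-\Leftrightarrow\intervalL\cdot\pointsize_i\le|\points_i\cap\rect|\le\intervalU\cdot\pointsize_i$, and the standard $4$-dimensional static range-tree bounds for the stated space, preprocessing, and $O(\log^3\totalsize+|\query_\Pi(\sets)|)$ query time. Your explicit attention to the boundary/corner cases (non-integral $\intervalL\cdot\pointsize_i$, out-of-range offsets set to $\pm\infty$) is a minor refinement of the paper's brief treatment, not a different argument.
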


\paragraph{Remark 1} The data structure in Theorem~\ref{thm:res0} can be made dynamic under insertion or deletion of datasets or points in the datasets. Instead of using a static range tree, we would use a dynamic range tree as discussed in Section~\ref{sec:prelim}. If a new dataset $\points$ with $n$ points is inserted (resp. deleted) in (resp. from) $\sets$, then the data structure is updated in $O(n\log^4(\totalsize+n)$ time. Furthermore, if a point is inserted/deleted from a current dataset, then the data structure can be updated in $O(\log^4\totalsize)$ time.

\paragraph{Remark 2} The range tree supports delay guarantees. Since we do not report duplicates, we can report $\query_\Pi(\sets)$ with delay guarantee $O(\log^3\totalsize)$.

\subsection{Missing proofs from Subsection~\ref{subsec:approxpercopenright}}
\label{appndx:approxpercopenright}

\begin{proof}[Proof of Lemma~\ref{lem:space1}]
For simplicity, we assume that we can get one sample from $\Synop_{\points_i}$ in $O(1)$ time, for every $i\in[\setsize]$. 
We have $|\epsample_i|=O(\eps^{-2}\log(\setsize\prob^{-1}))$, so there are $O(\eps^{-4d}\log^{2d}(\setsize\prob^{-1}))$ combinatorially different hyper-rectangles defined by $\epsample$, i.e., $|\mathcal{R}_i|=O(\eps^{-4d}\log^{2d}(\setsize\prob^{-1}))$. Hence, $|Q|=O(\setsize\cdot\eps^{-4d}\log^{2d}(\setsize\prob^{-1}))$ and it is constructed in $O(\setsize\cdot\eps^{-4d}\log^{2d}(\setsize\prob^{-1}))$ time. Using an additional static range tree on $\epsample_i$ for counting queries, we also compute all the values in $W$ in $O(\setsize\cdot\eps^{-4d}\log^{2d-1}(\setsize\prob^{-1})\log^{d}(\eps^{-1}\log(\setsize\prob^{-1})))$ time. After computing $Q$ and $W$, the range tree $\mathcal{T}$ is constructed in $O(\setsize\cdot\eps^{-4d}\log^{2d}(\setsize\prob^{-1}) \log^{2d}(\setsize\cdot\eps^{-1}\log(\setsize\prob^{-1})))$ time and it has 
$O(\setsize\cdot\eps^{-4d}\log^{2d}(\setsize\prob^{-1}) \log^{2d}(\setsize\cdot\eps^{-1}\log(\setsize\prob^{-1})))$ space. Overall, our data structure is constructed in $O(\setsize\cdot\eps^{-4d}\log^{2d}(\setsize\prob^{-1}) \log^{2d}(\setsize\cdot\eps^{-1}\log(\setsize\prob^{-1})))$ time and it has $O(\setsize\cdot\eps^{-4d}\log^{2d}(\setsize\prob^{-1}) \log^{2d}(\setsize\cdot\eps^{-1}\log(\setsize\prob^{-1})))$ space.

Next, we focus on the query time.
Every query in $\rangetree$ takes $O(\log^{2d+1}(\setsize\cdot\eps^{-1}\log(\setsize\prob^{-1})))$ time. Every update operation takes $O(\log^{2d+1}(\setsize\cdot\eps^{-1}\log(\setsize\prob^{-1})))$ time. Each time we report an index $j$, we delete all points in $Q_j$. From Lemma~\ref{lem:space1}, we have $|Q_j|=O(\eps^{-4d}\log^{2d}(\setsize\prob^{-1}))$. The overall query time is $O(\log^{2d+1}(\setsize\cdot\eps^{-1}\log(\setsize\prob^{-1}))+\out\cdot \eps^{-4d}\log^{2d}(\setsize\prob^{-1})\log^{2d+1}(\setsize\cdot\eps^{-1}\log(\setsize\prob^{-1})))$.
\end{proof}

\subsection{Missing proofs from Subsection~\ref{subsec:approxPercRangePred}}
\label{appndx:approxPercRangePred}


\begin{proof}[Proof of Lemma~\ref{lem:duplG}]
Notice that every time we report an index $j$, we do not remove all points in $Q_j$. Instead, we only remove the points $\{q_{(\rec,\bar{\rec})}\mid\rec\in \mathcal{R}_j\}$. From Lemma~\ref{lem:techlem2}, any time that our query procedure finds $q_{(\rec,\hat{\rec})}$, $\rec$ must be the largest hyper-rectangle inside $\rect$. Hence, by removing all points $\{q_{(\rec,\bar{\rec})}\mid\rec\in \mathcal{R}_j\}$, the index $j$ will not be reported again.
\end{proof}

\begin{proof}[Proof of Lemma~\ref{lem:space2}]
We assume that we get one sample from $\Synop_{\points_i}$ in  $O(1)$ time, for every $i\in[\setsize]$. 
We have $|\epsample_i|\leq |\bar{\epsample}_i|=O(\eps^{-2}\log(\setsize\prob^{-1}))$.
So
$$|\mathcal{R}_i|=O(\eps^{-4d}\log^{2d}(\setsize\prob^{-1})), \quad\text{and }\quad |\hat{\mathcal{R}}_i|=O(\eps^{-8d}\log^{4d}(\setsize\prob^{-1})).$$
Hence, $|Q|=O(\setsize\cdot\eps^{-8d}\log^{4d}(\setsize\prob^{-1}))$ and it is constructed in $$O(\setsize\cdot\eps^{-8d}\log^{4d}(\setsize\prob^{-1})\log^{2d}(\eps^{-1}\log(\setsize\prob^{-1})))$$ time, using a range tree to check whether there exists $\rec'$ such that $\rec\subseteq \rec'\subseteq \hat{\rec}$.
Using an additional static range tree on $\epsample_i$ for counting queries, we also compute $W$ in $$O(\setsize\cdot\eps^{-8d}\log^{4d}(\setsize\prob^{-1})\log^d(\eps^{-1}\log(\setsize\prob^{-1})))$$ time.
After computing $Q$ and $W$, the range tree $\mathcal{T}$ is constructed in $O(\setsize\cdot\eps^{-8d}\log^{4d}(\setsize\prob^{-1}) \log^{4d}(\setsize\cdot\eps^{-1}\log(\setsize\prob^{-1})))$ time and it has 
$O(\setsize\cdot\eps^{-8d}\log^{4d}(\setsize\prob^{-1}) \log^{4d}(\setsize\cdot\eps^{-1}\log(\setsize\prob^{-1})))$ space. Overall, our data structure is constructed in $O(\setsize\cdot\eps^{-8d}\log^{4d}(\setsize\prob^{-1}) \log^{4d}(\setsize\cdot\eps^{-1}\log(\setsize\prob^{-1})))$ time and it has $O(\setsize\cdot\eps^{-8d}\log^{4d}(\setsize\prob^{-1}) \log^{4d}(\setsize\cdot\eps^{-1}\log(\setsize\prob^{-1})))$ space.

Next, we focus on the query time.
Every query in $\rangetree$ takes $O(\log^{4d+1}(\setsize\cdot\eps^{-1}\log(\setsize\prob^{-1})))$ time. Every update operation takes $O(\log^{4d+1}(\setsize\cdot\eps^{-1}\log(\setsize\prob^{-1})))$ time. Each time we report an index $j$, we delete $O(\eps^{-4d}\log^{2d}(\setsize\prob^{-1}))$ points from $Q_j$. The overall query time is $O(\log^{4d+1}(\setsize\cdot\eps^{-1}\log(\setsize\prob^{-1}))+\out\cdot \eps^{-4d}\log^{2d}(\setsize\prob^{-1})\log^{4d+1}(\setsize\cdot\eps^{-1}\log(\setsize\prob^{-1})))$.
\end{proof}

\subsection{Approximate data structure for the \problemDI problem with a logical expression}
\label{subsec:percConj}
So far, we assumed that the (query) logical expression $\Pi$ contains only one predicate, $\pred_{\M_\rect,\interval}$. In this subsection, we extend our data structure to work for any constant number $\conj$ of predicates. Handling disjunction of predicates is straightforward using the designed data structure. For example, if the query consists of rectangles $\rect^{(1)},\ldots, \rect^{(\conj)}$ with intervals $\interval^{(1)},\ldots, \interval^{(\conj)}$ then by keeping track of the indexes we report (so that we do not report duplicates), we can handle every rectangle independently. We might encounter an index $i$ at most $O(\conj)=O(1)$ times so all our guarantees from the previous section hold. On the other hand, it is not clear how to handle conjunction of predicates. 

\paragraph{Main idea}
In this section, we describe a data structure that handles $\conj$ conjunctions of predicates. The main difference from the previous section is that we map all possible $\conj$ pairs of rectangles $(\rec,\hat{\rec})$ in $4\conj d$ dimensions instead of $4d$ dimensions that we had in the previous section. Given a query rectangle $\rect$ and $\conj$ predicates we define an orthant $\rect'$ in $4\conj d$ dimensions. The construction of the data structure is similar as in Subsection~\ref{subsec:approxPercRangePred}.

We only describe the data structure for general range-predicates. If we consider threshold-predicates the exponential complexities on $d$ are improved.

\paragraph{Data structure}
The first steps of constructing the data structure are identical with the data structure in the previous subsection.
Without loss of generality assume that all datasets and query rectangles lie in a bounded box $\mathcal{B}$.
For every $i\in[\setsize]$, we get an $(\eps+\delta)$-sample $\epsample_i\subset\Re^d$ by sampling $O(\eps^{-2}\log(\setsize\prob^{-1}))$ points from $\Synop_{\points_i}$ uniformly at random.
Let $\bar{\epsample}_i$ be the projection of $\epsample_i$ onto the $2\cdot d$ boundaries of $\mathcal{R}$.
For every $i\in [\setsize]$, we construct the set $\mathcal{R}_i$ that contains all combinatorially different hyper-rectangles defined by $\epsample_i\cup \bar{\epsample}_i$. Every hyper-rectangle $\rec\in \mathcal{R}_i$ is defined by their two opposite corners $\rec^-,\rec^+\in \Re^d$ such that $\rec^-_h\leq \rec^+_h$, for every $h\in[d]$, where $\rec^-_h, \rec^+_h$ are the $h$-coordinates of $\rec^-$ and $\rec^+$, respectively.
We construct all pairs of rectangles $\hat{\mathcal{R}}_i=\{(\rec, \hat{\rec})\mid \rec\in\mathcal{R}_i, \hat{\rec}\in \mathcal{R}_i,\rec\subseteq \hat{\rec}, \nexists \rec'\in \mathcal{R}_i: \rec\subset \rec'\subset\!\!\subset \hat{\rec}\}$.
Let $\tilde{\mathcal{R}}_i$ contain all subsets of $\conj$ pairs of rectangles in $\hat{\mathcal{R}}_i$. For every $\sigma=\{(\rec^{(1)},\hat{\rec}^{(1)}),\ldots (\rec^{(\conj)},\hat{\rec}^{(\conj)})\}\in \tilde{\mathcal{R}}_i$, let $$q_{\sigma}=(\rec_1^{(1)-},\ldots, \rec^{(1)-}_d,\hat{\rec}^{(1)-}_1,\ldots, \hat{\rec}^{(1)-}_d, \rec^{(1)+}_1,\ldots, \rec^{(1)+}_d, \hat{\rec}^{(1)+}_1,\ldots, \hat{\rec}^{(1)+}_d,\ldots, \rec_1^{(\conj)-},\ldots, \rec^{(\conj)-}_d,$$
$$\hat{\rec}^{(\conj)-}_1,\ldots, \hat{\rec}^{(\conj)-}_d, \rec^{(\conj)+}_1,\ldots, \rec^{(\conj)+}_d, \hat{\rec}^{(\conj)+}_1,\ldots, \hat{\rec}^{(\conj)+}_d)$$
be the point in $\Re^{4\conj d}$ defined by merging the opposite corners of $(\rec^{(r)}, \hat{\rec}^{(r)})\in \sigma$, for $r\in[\conj]$.
For every point $q_{\sigma}$, we assign $\conj$ weights, $w_{q_\sigma}=(\M_{\rec^{(1)}}(\epsample_i),\ldots, \M_{\rec^{(\conj)}}(\epsample_i))=
(\frac{|\rec^{(1)}\cap \epsample_i|}{|\epsample_i|},\ldots, \frac{|\rec^{(\conj)}\cap \epsample_i|}{|\epsample_i|})$.
Let $Q_i=\{q_{\sigma}\mid \sigma\in \tilde{\mathcal{R}}_i\}$ and $W_i=\{w_{q_{\sigma}}\mid q_{\sigma}\in Q_i\}$. Finally, we define $Q=\bigcup_{i\in [\setsize]}Q_i$ and $W=\bigcup_{i\in [\setsize]}W_i$. We construct a dynamic range tree $\rangetree$ over the (weighted) points $Q$ with their corresponding $\conj$-dimensional weights in $W$.

\paragraph{Query procedure}
We are given $\conj$ query rectangles $\{\rect^{(1)},\ldots, \rect^{(\conj)}\}$ in $\Re^d$ and $\conj$ intervals $\interval^{(1)}=[a_{\interval^{(1)}},b_{\interval^{(1)}}],\ldots \interval^{(\conj)}=[a_{\interval^{(\conj)}},b_{\interval^{(\conj)}}]$. Let $\rect^-, \rect^+$ be the two opposite corners of a rectangle $\rect$. We define the orthant $\rect'=[\rect^{(1)-}_1,\infty)\times \ldots\times [\rect^{(1)-}_d,\infty)
\times (-\infty,\rect^{(1)-}_1)\times\ldots\times (-\infty,\rect^{(1)-}_d)\times (-\infty,\rect^{(1)+}_1]\times\ldots\times(-\infty,\rect^{(1)+}_d]\times (\rect^{(1)+}_1,\infty)\times\ldots\times (\rect^{(1)+}_d,\infty)\times\ldots\times 
\rect^{(\conj)-}_1,\infty)\times \ldots\times [\rect^{(\conj)-}_d,\infty)
\times (-\infty,\rect^{(\conj)-}_1)\times\ldots\times (-\infty,\rect^{(\conj)-}_d)\times (-\infty,\rect^{(\conj)+}_1)\times\ldots\times(-\infty,\rect^{(\conj)+}_d]\times (\rect^{(\conj)+}_1,\infty)\times\ldots\times (\rect^{(\conj)+}_d,\infty)
$
in $\Re^{4\conj d}$ and the weight query rectangle $I'=[a_{\interval^{(1)}}-\eps-\delta,b_{\interval^{(1)}}+\eps+\delta]\times\ldots\times [a_{\interval^{(\conj)}}-\eps-\delta,b_{\interval^{(\conj)}}+\eps+\delta]$.
We repeat the following until no index is returned:
Using, $\rect'\times I'$ we query the range tree $\rangetree$. Let $u$ be the first (non-empty) canonical node it returns and let $q_{\sigma}$ be any arbitrary point stored in this node. Let $j$ be the index of the point set $\points_j$ that $q_{\sigma}$ was constructed for. We report $j$, and we delete from $\rangetree$ all points $\{q_{\sigma}\mid\rec^{(1)}\in \mathcal{R}_j\}$. We continue with the next iteration.
If there is no such canonical node or we do not find any $q_{\sigma}$ we stop the execution.
Let $J$ be the indexes we reported.
In the end of the process, we re-insert in $\rangetree$ all points we removed from $Q_j$ for every index $j\in J$.

\paragraph{Correctness}
The correctness of the query procedure follows from simple modifications of Lemmas~\ref{lem:techlem2},~\ref{lem:HH2},~ \ref{lem:tech1}, and~\ref{lem:tech2}.
Let $\Pi=\pred_{\M_{\rect^{(1)}},\interval^{(1)}}\land\ldots\land \pred_{\M_{\rect^{(\conj)}},\interval^{(\conj)}}$ be the logical expression consisting of the conjunction of $\conj$ range-predicates. It holds that $\query_\Pi(\sets)\subseteq J$, with probability at least $1-\prob$. Indeed, by extending the proof of Lemma~\ref{lem:tech1}, we have that: For an index $i\in \query_\Pi(\sets)$, let $\rec^{(i)}$ be the largest rectangle in $\mathcal{R}_i$ such that $\rec^{(i)}\subseteq \rect^{(i)}$. Let $q_{\sigma}$ be the point in $Q_i$ constructed by $\rec^{(1)}, \ldots, \rec^{(\conj)}$. From Lemma~\ref{lem:tech1}, it is straightforward to see that $q_{\sigma}\in \rect'\cap Q$ and $w_{q_{\sigma}}\in [a_{\interval^{(1)}}-\eps-\delta,b_{\interval^{(1)}}+\eps+\delta]\times\ldots\times [a_{\interval^{(\conj)}}-\eps-\delta,b_{\interval^{(\conj)}}+\eps+\delta]$ and the result follows. For every $j\in J$, let $q_\sigma\in Q_j$ be the point found in $\rect'$ at the moment we reported $j$. From Lemma~\ref{lem:techlem2}, we know that $\rec^{(\ell)}$ is the largest rectangle inside $\rect^{(\ell)}$ so the result follows by the definition of $\eps$-sample following the proof of Lemma~\ref{lem:tech2}.
Finally, by removing $\{q_{\sigma}\mid\rec^{(1)}\in \mathcal{R}_j\}$ we make sure that we will not report the index $j$ again, because any time that our query procedure finds $q_{\sigma}$, the rectangle $\rec^{(\ell)}$ must be the largest rectangle inside $\rect^{(\ell)}$, for every $\ell\in[\conj]$.

\paragraph{Analysis}
The next two lemmas follow straightforwardly by extending the proof of Lemmas~\ref{lem:space2}.
\begin{lemma}
\label{lem:spaceConj}
The constructed data structure has $$O(\setsize\cdot\eps^{-8\conj d}\log^{4\conj d}(\setsize\prob^{-1}) \log^{4\conj d-1+\conj}(\setsize\cdot\eps^{-1}\log(\setsize\prob^{-1})))$$ space and it can be constructed in $$O(\setsize\cdot\eps^{-8\conj d}\log^{4\conj d}(\setsize\prob^{-1}) \log^{4\conj d-1+\conj}(\setsize\cdot\eps^{-1}\log(\setsize\prob^{-1})))$$ time.
\end{lemma}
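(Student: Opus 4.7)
}

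The plan is to extend the counting argument of Lemma~\ref{lem:space2} one layer further, from pairs $(\rec,\hat{\rec})$ to $\conj$-tuples of such pairs, and then plug the resulting point set into a range tree of the appropriate dimension. First I would bound the pieces of the construction in sequence. Each sample set satisfies $|\epsample_i| = O(\eps^{-2}\log(\setsize\prob^{-1}))$, and because $\bar{\epsample}_i$ consists of the projections of $\epsample_i$ onto the $2d$ facets of $\mathcal{B}$, we also have $|\bar{\epsample}_i| = O(\eps^{-2}\log(\setsize\prob^{-1}))$. A combinatorially distinct axis-aligned rectangle in $\Re^d$ is determined by two opposite corners, each of which picks a coordinate from $\epsample_i \cup \bar{\epsample}_i$ in every axis, so
\[
|\mathcal{R}_i| = O\bigl(\eps^{-4d}\log^{2d}(\setsize\prob^{-1})\bigr),\qquad
|\hat{\mathcal{R}}_i| \leq |\mathcal{R}_i|^2 = O\bigl(\eps^{-8d}\log^{4d}(\setsize\prob^{-1})\bigr).
\]
Since $\tilde{\mathcal{R}}_i$ enumerates unordered $\conj$-subsets of $\hat{\mathcal{R}}_i$, with $\conj = O(1)$ we obtain $|\tilde{\mathcal{R}}_i| = O(|\hat{\mathcal{R}}_i|^{\conj}) = O(\eps^{-8\conj d}\log^{4\conj d}(\setsize\prob^{-1}))$, and therefore $|Q| = \sum_{i\in[\setsize]} |\tilde{\mathcal{R}}_i| = O(\setsize \cdot \eps^{-8\conj d}\log^{4\conj d}(\setsize\prob^{-1}))$.

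Next I would describe how $Q$ and $W$ are actually assembled within this budget. For each $i$, the set $\mathcal{R}_i$ is enumerated directly; the filtering condition defining $\hat{\mathcal{R}}_i$ (``no intermediate $\rec'$ lies strictly between $\rec$ and $\hat{\rec}$'') is tested by storing the corners of $\mathcal{R}_i$ in an auxiliary $d$-dimensional range tree and running one emptiness query per candidate pair, at polylogarithmic cost in $|\mathcal{R}_i|$. The $\conj$ weights associated to every $q_\sigma$ are the counts $|\rec^{(r)} \cap \epsample_i|/|\epsample_i|$, each of which is produced by a counting range tree on $\epsample_i$ in time $O(\log^d |\epsample_i|)$. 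Summed over $i$ and over all $\sigma \in \tilde{\mathcal{R}}_i$, these auxiliary steps add only polylogarithmic overhead and stay within the claimed bound.

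Finally I would build the dynamic range tree $\rangetree$ over $Q$. The points live in $\Re^{4\conj d}$ and each carries a $\conj$-dimensional weight vector against which queries use a rectangle $I'$, so the tree must support both a $(4\conj d)$-dimensional rectangle query and a $\conj$-dimensional weight-rectangle filter. Using the same convention adopted in Section~\ref{sec:prelim} and in Lemma~\ref{lem:space2} (where a $d$-dimensional geometric range tree with a one-dimensional weight interval costs $\log^d$ factors in space), each additional weight dimension contributes one more logarithmic factor, yielding space and preprocessing time $O\bigl(|Q|\cdot\log^{4\conj d - 1 + \conj}|Q|\bigr)$. Substituting the bound on $|Q|$ and simplifying $\log|Q| = O(\log(\setsize\cdot\eps^{-1}\log(\setsize\prob^{-1})))$ gives the stated complexity. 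The main technical care lies in verifying the $\log^{4\conj d - 1 + \conj}$ exponent for the combined geometric/weight range tree, which is the only place the bound is sensitive to $\conj$ beyond the purely combinatorial blow-up of $\tilde{\mathcal{R}}_i$.
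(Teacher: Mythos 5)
Your proposal is correct and follows essentially the same route as the paper, which simply notes that the bound follows by extending the counting in the proof of Lemma~\ref{lem:space2}: you bound $|\mathcal{R}_i|$, $|\hat{\mathcal{R}}_i|$, then $|\tilde{\mathcal{R}}_i|=O(|\hat{\mathcal{R}}_i|^{\conj})$, and charge the final factor to a range tree over $\Re^{4\conj d}$ points with $\conj$-dimensional weights, exactly matching the $\log^{4\conj d-1+\conj}$ exponent. The only nit is that the emptiness test for the condition $\nexists\,\rec'$ with $\rec\subset\rec'\subset\!\!\subset\hat{\rec}$ should be phrased over the $2d$-dimensional point representation of the rectangles in $\mathcal{R}_i$ rather than a $d$-dimensional tree on corners, but this changes only polylogarithmic overhead and not the stated bound.
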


\begin{lemma}
The query procedure runs in $$O(\log^{4\conj d+\conj}(\setsize\cdot\eps^{-1}\log(\setsize\prob^{-1}))+\out\cdot \eps^{-4d}\log^{2d}(\setsize\prob^{-1})\log^{4\conj d+\conj}(\setsize\cdot\eps^{-1}\log(\setsize\prob^{-1})))$$ time.  
\end{lemma}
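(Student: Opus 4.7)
The approach mirrors the proof of Lemma~\ref{lem:space2}, adapted to the conjunctive setting. The plan is threefold: (i) bound the per-operation cost of the dynamic range tree $\rangetree$, (ii) count the number of operations per reported index, and (iii) aggregate over the $\out$ reported indices plus the terminating empty query.

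For (i), $\rangetree$ is built on $|Q| = O(\setsize\cdot\eps^{-8\conj d}\log^{4\conj d}(\setsize\prob^{-1}))$ weighted points in $\Re^{4\conj d}$ whose weights live in $\Re^{\conj}$, so the effective combinatorial dimension of the query is $4\conj d+\conj$. By the standard dynamic-range-tree bounds cited in Section~\ref{sec:prelim}, each $\mathsf{ReportFirst}$ call and each insertion/deletion costs $O(\log^{4\conj d+\conj}(|Q|))$. Since $\log|Q| = O(\log(\setsize\cdot\eps^{-1}\log(\setsize\prob^{-1})))$, every such operation runs in $O(\log^{4\conj d+\conj}(\setsize\cdot\eps^{-1}\log(\setsize\prob^{-1})))$ time.

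For (ii), each iteration of the while loop invokes exactly one $\mathsf{ReportFirst}$; after $\out+1$ iterations the call returns $\mathsf{NULL}$ and the loop terminates. When an index $j$ is reported, the procedure deletes the set $\{q_{\sigma}\mid \rec^{(1)}\in \mathcal{R}_j\}$, whose cardinality is $O(|\mathcal{R}_j|)=O(\eps^{-4d}\log^{2d}(\setsize\prob^{-1}))$ by the standard count of combinatorially distinct rectangles spanned by an $(\eps+\delta)$-sample of size $O(\eps^{-2}\log(\setsize\prob^{-1}))$. The conjunctive extension of Lemma~\ref{lem:techlem2} ensures that excising exactly this set is sufficient to prevent $j$ from being reported in any later iteration of the same query: the first-coordinate rectangle $\rec^{(1)}$ of any winning $q_\sigma \in Q_j$ must be the maximal rectangle of $\mathcal{R}_j$ contained in $\rect^{(1)}$, and once every $\sigma$ indexed by such a $\rec^{(1)}$ is gone no further index-$j$ hit can occur.

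For (iii), summing up, the query performs $\out+1$ $\mathsf{ReportFirst}$ calls plus $O(\out\cdot\eps^{-4d}\log^{2d}(\setsize\prob^{-1}))$ deletions, each at cost $O(\log^{4\conj d+\conj}(\setsize\cdot\eps^{-1}\log(\setsize\prob^{-1})))$. The terminal re-insertion loop performs no more insertions than there were prior deletions, so its cost is absorbed into the output-dependent term. Multiplying and adding yields exactly the claimed bound. The main obstacle I expect is making precise the per-index deletion count: one must verify that the coordinatewise generalization of Lemma~\ref{lem:techlem2} really does let us bound the relevant deletion set by $O(|\mathcal{R}_j|)$ rather than by the much larger $O(|\tilde{\mathcal{R}}_j|)=O(\eps^{-8\conj d}\log^{4\conj d}(\setsize\prob^{-1}))$; this reduction uses the maximality of $\rec^{(1)}$ inside $\rect^{(1)}$ together with the orthant structure of $\rect'$, and is the only step that is not a routine extension of the single-predicate analysis.
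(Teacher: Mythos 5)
Your proposal matches the paper's argument: the paper proves this lemma only by remarking that it ``follows straightforwardly by extending the proof of Lemma~\ref{lem:space2},'' and your accounting---$O(\log^{4\conj d+\conj}(\setsize\cdot\eps^{-1}\log(\setsize\prob^{-1})))$ per $\mathsf{ReportFirst}$ and per update on the dynamic range tree over $|Q|=O(\setsize\cdot\eps^{-8\conj d}\log^{4\conj d}(\setsize\prob^{-1}))$ points in $\Re^{4\conj d}$ with $\conj$-dimensional weights, $\out+1$ calls to $\mathsf{ReportFirst}$, $O(\eps^{-4d}\log^{2d}(\setsize\prob^{-1}))$ deletions per reported index, and the terminal re-insertions absorbed into the output-dependent term---is exactly that extension. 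The delicate point you flag, namely justifying that only $O(|\mathcal{R}_j|)$ points (rather than up to $O(|\tilde{\mathcal{R}}_j|)$) need be deleted per reported index via the maximality of the first-coordinate rectangle, is left implicit in the paper as well, so your treatment is, if anything, more explicit than the original.
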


We set $\eps\leftarrow \eps/2$ and we get the final result.
\begin{theorem}
\label{thm:resConj}
Let $\{\Synop_{\points_1},\ldots,\Synop_{\points_\setsize}\}$ be the input to the $\problemDI$ problem, such that $\Synop_{\points_i}$ is a synopsis of a dataset $\points_i\subset \Re^d$, where $d\geq 1$ is a constant, with $\err_{\Synop_{\points_i}}(\F^d_{\square})\leq \delta$ for $\delta\in[0,1)$, for every $i\in[\setsize]$. Let $\eps,\prob\in(0,1)$ and $\conj> 1$ be three parameters.
A data structure of size $O(\setsize\cdot\eps^{-8\conj d}\log^{4\conj d}(\setsize\prob^{-1}) \log^{4\conj d-1+\conj}(\setsize\cdot\eps^{-1}\log(\setsize\prob^{-1})))$ can be constructed in $O(\setsize\cdot\eps^{-8\conj d}\log^{4\conj d}(\setsize\prob^{-1}) \cdot\log^{4\conj d-1+\conj}(\setsize\cdot\eps^{-1}\log(\setsize\prob^{-1})))$ time, such that
given $\conj$ query rectangles $\rect^{(1)},\ldots, \rect^{(\conj)}$ and $\conj$ query intervals $\interval^{(1)},\ldots, \interval^{(\conj)}$,
it returns a set of indexes $J$ such that $\query_\Pi(\sets)\subseteq J$, where $\Pi=\pred_{\M_{\rect^{(1)}},\interval^{(1)}}\land\ldots\land \pred_{\M_{\rect^{(\conj)}},\interval^{(\conj)}}$, and for every $j\in J$, $a_{\interval^{(\ell)}}-\eps-2\delta\leq \M_{\rect^{(\ell)}}(\points_j)\leq b_{\interval^{(\ell)}}+\eps+2\delta$, for $\ell\in[\conj]$, with probability at least $1-\prob$.
The query time is $O(\log^{4\conj d+\conj}(\setsize\cdot\eps^{-1}\log(\setsize\prob^{-1}))+\out\cdot \eps^{-4d}\log^{2d}(\setsize\prob^{-1})\log^{4\conj d+\conj}(\setsize\cdot\eps^{-1}\log(\setsize\prob^{-1})))$.

\end{theorem}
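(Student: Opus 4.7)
The plan is to lift the single-predicate data structure of Section~\ref{subsec:approxPercRangePred} into $4\conj d$ dimensions by bundling together $\conj$ ``pair-of-nested-rectangles'' certificates into a single point, one coordinate block per conjunct. First, for every $i\in[\setsize]$, I will draw an $(\eps+\delta)$-sample $\epsample_i$ of size $\Theta(\eps^{-2}\log(\setsize\prob^{-1}))$ from $\Synop_{\points_i}$ via Lemma~\ref{lem:helper1}, augment it with the projections $\bar\epsample_i$ onto the facets of a bounding box $\mathcal{B}$, form $\mathcal{R}_i$ (all combinatorially distinct rectangles generated by $\epsample_i\cup\bar\epsample_i$), and then the set of maximal-nested pairs $\hat{\mathcal{R}}_i = \{(\rec,\hat\rec) : \rec\subseteq\hat\rec,\ \nexists \rec'\in\mathcal{R}_i\ \text{with}\ \rec\subset\rec'\subset\!\!\subset\hat\rec\}$, exactly as in Subsection~\ref{subsec:approxPercRangePred}. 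Finally I will form $\tilde{\mathcal{R}}_i$, the collection of ordered $\conj$-tuples $\sigma=((\rec^{(1)},\hat\rec^{(1)}),\ldots,(\rec^{(\conj)},\hat\rec^{(\conj)}))$ from $\hat{\mathcal{R}}_i$. Each $\sigma$ becomes a point $q_\sigma\in\Re^{4\conj d}$ (the $\ell$-th block of $4d$ coordinates stores the two corners of $\rec^{(\ell)}$ and of $\hat\rec^{(\ell)}$) with $\conj$-dimensional weight vector $w_{q_\sigma}=(\M_{\rec^{(1)}}(\epsample_i),\ldots,\M_{\rec^{(\conj)}}(\epsample_i))$. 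These points are stored across all $i$ in a dynamic range tree $\rangetree$ in $\Re^{4\conj d}\times\Re^{\conj}$.

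For querying, given $(\rect^{(1)},\interval^{(1)}),\ldots,(\rect^{(\conj)},\interval^{(\conj)})$, I define the orthant $\rect'\subseteq\Re^{4\conj d}$ as the Cartesian product over $\ell\in[\conj]$ of the $4d$ half-space constraints of Subsection~\ref{subsec:approxPercRangePred} applied to $\rect^{(\ell)}$, and the weight-box $I'=\prod_{\ell=1}^{\conj}[a_{\interval^{(\ell)}}-\eps-\delta,\ b_{\interval^{(\ell)}}+\eps-\delta]$. I then repeatedly call $\rangetree.\mathsf{ReportFirst}(\rect',I')$; each time a point $q_\sigma\in Q_j$ is returned, I report $j$ and delete all points of $Q_j$ whose first-block rectangle $\rec^{(1)}$ matches one of the rectangles of $\mathcal{R}_j$ appearing in $\sigma$'s first slot, and continue until $\rangetree.\mathsf{ReportFirst}$ returns $\mathsf{NULL}$. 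At the end I re-insert every deleted point.

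For correctness, the key lemmas from Subsection~\ref{subsec:approxPercRangePred} lift componentwise. Specifically, generalizing Lemma~\ref{lem:HH2} for each conjunct $\ell$ separately: if $i\in\query_\Pi(\sets)$ and $\rec^{(\ell)}_\star$ is the maximal rectangle of $\mathcal{R}_i$ contained in $\rect^{(\ell)}$, then there exists $\hat\rec^{(\ell)}_\star\in\mathcal{R}_i$ with $\rect^{(\ell)}\subset\!\!\subset\hat\rec^{(\ell)}_\star$ and $(\rec^{(\ell)}_\star,\hat\rec^{(\ell)}_\star)\in\hat{\mathcal{R}}_i$. The resulting $\conj$-tuple yields a point $q_\sigma\in Q_i\cap\rect'$ whose $\ell$-th weight lies in $[a_{\interval^{(\ell)}}-\eps-\delta,\ b_{\interval^{(\ell)}}+\eps+\delta]$ by the $(\eps+\delta)$-sample property on each $\rect^{(\ell)}$, proving $\query_\Pi(\sets)\subseteq J$ with probability at least $1-\prob$. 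Conversely, for every reported $j\in J$, Lemma~\ref{lem:techlem2} applied per block shows that the first-block rectangle $\rec^{(\ell)}$ of the witness satisfies $\rec^{(\ell)}\cap\epsample_j=\rect^{(\ell)}\cap\epsample_j$ for all $\ell$, and the $(\eps+\delta)$-sample guarantee then yields the additive-$(\eps+2\delta)$ bound in both directions for each conjunct. Duplicates are avoided because once $j$ is reported, we remove all $\conj$-tuples of $Q_j$ in which the first block is a rectangle of $\mathcal{R}_j$ that could ever witness maximality inside any query rectangle, mirroring Lemma~\ref{lem:duplG}.

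For the complexity, $|\hat{\mathcal{R}}_i|=O(\eps^{-8d}\log^{4d}(\setsize\prob^{-1}))$ as in Lemma~\ref{lem:space2}, so $|\tilde{\mathcal{R}}_i|=O(\eps^{-8\conj d}\log^{4\conj d}(\setsize\prob^{-1}))$ and $|Q|=O(\setsize\cdot\eps^{-8\conj d}\log^{4\conj d}(\setsize\prob^{-1}))$. A range tree on points in $\Re^{4\conj d}$ with a $\conj$-dimensional weight filter uses one extra tree layer per weight dimension, giving the $\log^{4\conj d-1+\conj}$ factor in the space/preprocessing bound (the weights $W_i$ are computed in bulk via an auxiliary counting range tree over $\epsample_i$). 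Each $\mathsf{ReportFirst}$ and each deletion costs $O(\log^{4\conj d+\conj}(\setsize\eps^{-1}\log(\setsize\prob^{-1})))$; each reported index triggers at most $|Q_j\text{ restricted by first block}|=O(\eps^{-4d}\log^{2d}(\setsize\prob^{-1}))$ deletions, yielding the stated output-sensitive query time. Replacing $\eps$ by $\eps/2$ in the final step absorbs the factor-of-two in the error, finishing the proof. The main obstacle I anticipate is the careful per-block maximality bookkeeping that certifies non-duplication: we must be sure that deleting only the first-block matches is both sufficient (so $j$ cannot re-appear) and necessary (so we do not erase witnesses for other datasets), which follows because the first-block maximal rectangle of $\mathcal{R}_j$ inside $\rect^{(1)}$ is unique for a fixed $\rect^{(1)}$.
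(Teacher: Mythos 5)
Your construction is essentially the same as the paper's proof in Appendix~\ref{subsec:percConj}: the identical $\conj$-fold lifting of the nested-pair certificates $(\rec,\hat{\rec})$ into weighted points in $\Re^{4\conj d}$ with a $\conj$-dimensional weight filter, the same product orthant and weight-box query on a dynamic range tree, the same per-block invocation of Lemmas~\ref{lem:techlem2} and~\ref{lem:HH2} for soundness and completeness, and the same first-block-keyed deletion rule (with re-insertion) to avoid duplicates, followed by the $\eps\leftarrow\eps/2$ rescaling. The only blemish is a typo in your definition of $I'$: the upper endpoints should be $b_{\interval^{(\ell)}}+\eps+\delta$ (as you correctly use later in the correctness argument), not $b_{\interval^{(\ell)}}+\eps-\delta$.
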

While in Theorem~\ref{thm:resConj} we only mention conjunction of predicates, the same result holds if we have any disjunction/conjunction of $\conj$ predicates.
If we assume that $\eps$ is an arbitrarily small constant, the success probability is high (at least $1-\frac{1}{\setsize}$, and $\conj$ is a constant, then the data structure has size $O(\setsize\cdot\log^{8\conj d-1+\conj}\setsize)=\O(\setsize)$, can be constructed in $O(\setsize\cdot\log^{8\conj d-1+\conj}\setsize)=\O(\setsize)$ time, and it has $O( \log^{4\conj d+\conj}\setsize+\out\cdot\log^{4\conj d+2d+\conj}\setsize)=\O(1+\out)$ query time.

\paragraph{Remark 1} The data structure in Theorem~\ref{thm:resConj} can be made dynamic under insertion or deletion of synopses. Using the update procedure of the dynamic range tree, if a synopsis $\Synop_\points$ is inserted/deleted
we can update our data structure in $O(\eps^{-8\conj  d}\log^{4\conj d}(\setsize\prob^{-1})\log^{4\conj d+\conj}(\setsize\eps^{-1}\log(\setsize\prob^{-1})))$ time.

\paragraph{Remark 2} If we do not know an upper bound $\delta$ on $\err_{\Synop_{\points_i}}(\F^d_{\square})$, then by slightly modifying the data structure above we can get the following result. Let $\err_{\Synop_{\points_i}}(\F^d_{\square})\leq \delta_i$ for unknown parameters $\delta_i\in[0,1)$.
Having the same complexities as in
Theorem~\ref{thm:resConj}, we report a set of indexes $J$ such that, if $a_{\interval^{(\ell)}}+\eps+\delta_i\leq \M_{\rect^{(\ell)}}(\points_i)\leq b_{\interval^{(\ell)}}-\eps-\delta_i$ for every $\ell\in[\conj]$, then $i\in J$, and if $j\in J$, then $a_{\interval^{(\ell)}}-\eps-\delta_j\leq \M_{\rect^{(\ell)}}(\points_j)\leq b_{\interval^{(\ell)}}+\eps+\delta_j$, for every $\ell\in[\conj]$. 

\section{Missing details from Section~\ref{sec:topK}}
\label{appndx:topk}
\begin{proof}[Proof of Lemma~\ref{lem:space4}]
    We have $|\net|=O(\eps^{-d+1})$ and each value $\gamma^{(i)}_{\vector}$ is computed in $\timeSynop_{\Synop_{\points_i}}$ time.
    Hence all points $\bigcup_{\vector\in \net}\Gamma_{\vector}$ are computed in $O(\eps^{-d+1}\cdot \sum_{i\in[\setsize]}\timeSynop_{\Synop_{\points_i}})$ time. Furthermore, each tree $\tree_{\vector}$ is computed in $O(\setsize\log \setsize)$ time and it uses $O(\setsize)$ space. Hence, the data structure has $O(\eps^{-d+1}\setsize)$ space and can be constructed in $O(\eps^{-d+1}\cdot \sum_{i\in[\setsize]}\timeSynop_{\Synop_{\points_i}}+\eps^{-d+1}\setsize\log\setsize)$ time.
\end{proof}

\paragraph{Remark 1} The data structure in Theorem~\ref{thm:resTopkAdd} can be made dynamic under insertion or deletion of synopses. If a synopsis $\Synop_\points$ is inserted/deleted
we can update our data structure in
$O(\timeSynop_{\Synop_{\points}}+\log\setsize)$.

\paragraph{Remark 2} If we do not know an upper bound $\delta$ on $\err_{\Synop_{\points_i}}(\vec{\F}^d_{k})$, then by slightly modifying the data structure from Theorem~\ref{thm:resTopkAdd} we can get the following result. Let $\err_{\Synop_{\points_i}}(\vec{\F}^d_{k})\leq \delta_i$ for unknown parameters $\delta_i\in[0,1)$.
Having the same complexities as in
Theorem~\ref{thm:resTopkAdd}, we report a set of indexes $J$ such that, if $\M_{\vec{u},k}(\points_i)\geq \intervalL+\eps+\delta_i$, then $i\in J$, and if $j\in J$, then $\M_\rect(\points_j)\geq \intervalL-\eps-\delta_j$. In other words, we might miss some datasets that satisfy the predicate $\pred_{\M_{\vec{u},k},\interval}$ close to the boundary $\intervalL$. More specifically, we might miss an index $i$ if $\M_{\vec{u},k}(\points_i)\in[\intervalL,\intervalL+\eps+\delta_i)$ and we always report an index $j\in J$ if $\M_{\vec{u},k}(\points_j)\geq \intervalL-\eps-\delta_j$.

\paragraph{Remark 3}
Our data structure in Theorem~\ref{thm:resTopkAdd} straightforwardly satisfies $O(\log\setsize)$ delay.

\subsection{Approximate data structure for the \problemDIk problem with a logical expression}
\label{subsec:topkpredm}
So far, we assumed that the (query) logical expression $\Pi$ contains only one predicate, $\pred_{\M_{\vec{u},k},\interval}$. In this subsection, we extend our data structure to work for any constant number $\conj$ of predicates. Handling disjunction of predicates is straightforward using the designed data structure, as we had for the \problemDI problem. On the other hand, it is not clear how to handle conjunction of predicates. In this section, we describe a data structure that handles $\conj$ conjunctions of predicates. 

\paragraph{Main idea}
The main difference from the previous subsection is that we construct a data structure for each subset of $\conj$ vectors in the $\eps$-net to compute an approximate solution for the \problemDIk problem.

We construct an $\eps$-net and for every vector $\vector$ in the $\eps$-net we use $\Synop_{\points_i}$ to get a value that approximates $\score_k(\points_i,\vector)$, for every $i\in[\setsize]$.
For every subset of $\conj$ vectors in $\net$, and every index $i\in[\setsize]$ we define a point in $\Re^\conj$ whose coordinates are the computed values on the subset of vectors.
We store these points to a range tree.
When we get query with $\conj$ query vectors, we find the closest vectors from the $\eps$-net and we use the corresponding range tree to identify the datasets whose scores lie in $\interval^{(1)},\ldots, \interval^{(\conj)}$.

\paragraph{Data structure}
We construct an $\eps$-net $\net$ on $\Re^d$.
For every $\vector \in\net$ and every $i\in[\setsize]$, we get a value $\gamma_{\vector}^{(i)}$ from $\Synop_{\points_i}$ as an approximation to $\score_k(\points_i,\vector)$.
Let $\Gamma_{\vector}=\{\gamma_{\vector}^{(i)}\mid i\in[\setsize]\}$.
For every subset $V=\{\vector_1,\ldots, \vector_\conj\}$ of $\conj$ vectors in $\net$, let $B_V=\{(\gamma_{\vector_1}^{(i)},\ldots,\gamma_{\vector_\conj}^{(i)})\mid i\in[\setsize]\}$ be a set of $\setsize$ points in $\conj$ dimensions.
For every subset $V$, we construct the (static) $\conj$-dimensional range tree $\rangetree_V$ for reporting queries over $B_V$.

\paragraph{Query procedure}
Let $\vec{u}^{(1)},\ldots, \vec{u}^{(\conj)}$ be the query vectors and $a_{\interval^{(1)}},\ldots, a_{\interval^{(\conj)}}$ be the thresholds. For every $\ell\in[\conj]$, we go through each vector in $\net$ and we compute $\vector_\ell=\argmin_{\vec{h}\in\net}||\vec{u}^{(\ell)}-\vec{h}||$. Let $V=\{\vector_1,\ldots, \vector_\conj\}$ and let $R=[a_{\interval^{(1)}}-\eps,1]\times\ldots \times [a_{\interval^{(\conj)}}-\eps,1]$ be a rectangle in $\Re^\conj$.
We run a reporting query on $\rangetree_{V}$ using the query rectangle $R$. Let $J$ be the set of all indexes reported.

\paragraph{Correctness}
Let $\Pi=\pred_{\M_{\vec{u}^{(1)},k},\interval^{(1)}}\land\ldots\land \pred_{\M_{\vec{u}^{(\conj)},k},\interval^{(\conj)}}$ be the logical expression consisting of the conjunction of $\conj$ range-predicates. Recall that $\query_\Pi(\sets)$ contains denotes the correct set of indexes we should report.

\begin{lemma}
    $\query_\Pi(\sets)\subseteq J$ and for every $j\in J$, $\score_k(\points_j,\vec{u}^{(\ell)})\geq a_{\interval^{(\ell)}}-2\eps-2\delta$, for every $\ell\in[\conj]$.
\end{lemma}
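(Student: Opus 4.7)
The plan is to reduce the conjunctive case coordinate-by-coordinate to the single-predicate argument of Lemma~\ref{lem:tech3}. The key observation is that the range tree $\rangetree_V$ built for $V=\{\vector_1,\ldots,\vector_\conj\}$ reports exactly those indexes $i$ whose point $(\gamma_{\vector_1}^{(i)},\ldots,\gamma_{\vector_\conj}^{(i)})$ lies inside the query rectangle $R=[a_{\interval^{(1)}}-\eps,1]\times\cdots\times [a_{\interval^{(\conj)}}-\eps,1]$. So both directions of the lemma amount to translating between $\score_k(\points_i,\vec{u}^{(\ell)})$ and $\gamma_{\vector_\ell}^{(i)}$ coordinatewise, using (i) the synopsis guarantee $|\score_k(\points_i,\vector_\ell)-\gamma_{\vector_\ell}^{(i)}|\leq \delta$ and (ii) Lemma~\ref{lem:helper00}, which gives $|\score_k(\points_i,\vec{u}^{(\ell)})-\score_k(\points_i,\vector_\ell)|\leq \eps$, since $\vector_\ell$ is the closest $\eps$-net vector to $\vec{u}^{(\ell)}$.

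First I would prove $\query_\Pi(\sets)\subseteq J$. Fix $i\in \query_\Pi(\sets)$; then $\score_k(\points_i,\vec{u}^{(\ell)})\geq a_{\interval^{(\ell)}}$ for every $\ell$. Chaining the two inequalities above gives $\gamma_{\vector_\ell}^{(i)}\geq \score_k(\points_i,\vector_\ell)-\delta\geq \score_k(\points_i,\vec{u}^{(\ell)})-\eps-\delta\geq a_{\interval^{(\ell)}}-\eps-\delta$, for every $\ell\in[\conj]$. Hence the $\conj$-dimensional point corresponding to $i$ lies in $R$, so $\rangetree_V$ reports $i$, i.e., $i\in J$.

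Next I would handle the soundness direction. Take $j\in J$; then by definition of the reporting query, $\gamma_{\vector_\ell}^{(j)}\geq a_{\interval^{(\ell)}}-\eps-\delta$ for every $\ell\in[\conj]$. Applying the synopsis bound and then Lemma~\ref{lem:helper00} in the opposite direction yields
\[
\score_k(\points_j,\vec{u}^{(\ell)})\geq \score_k(\points_j,\vector_\ell)-\eps\geq \gamma_{\vector_\ell}^{(j)}-\eps-\delta\geq a_{\interval^{(\ell)}}-2\eps-2\delta,
\]
which is exactly the claimed bound for every $\ell$.

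No step in this argument is a real obstacle: everything reduces to invoking Lemma~\ref{lem:helper00} and the synopsis error bound on each coordinate independently. The only subtlety worth flagging is a correctness check that the subset $V=\{\vector_1,\ldots,\vector_\conj\}$ actually has a precomputed range tree $\rangetree_V$: since the construction builds $\rangetree_V$ for \emph{every} size-$\conj$ subset of $\net$, the tree is guaranteed to exist regardless of which nearest-net vectors the query selects, so the argument above applies unconditionally.
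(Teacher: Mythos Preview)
Your approach is exactly what the paper intends: its own proof is the single sentence ``the correctness follows by straightforwardly extending the proof of Lemma~\ref{lem:tech3}'', and you have carried out that extension coordinate by coordinate, chaining the synopsis error bound with Lemma~\ref{lem:helper00}.

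There is, however, a slip in the completeness direction. You take the query rectangle to be $R=[a_{\interval^{(1)}}-\eps,1]\times\cdots\times[a_{\interval^{(\conj)}}-\eps,1]$ (as the paper's appendix states), but you only derive $\gamma_{\vector_\ell}^{(i)}\geq a_{\interval^{(\ell)}}-\eps-\delta$; this does \emph{not} place the point in $R$, so you cannot conclude $i\in J$. You then silently switch to the $-\eps-\delta$ threshold in the soundness direction as well. The fix is to use threshold $a_{\interval^{(\ell)}}-\eps-\delta$ in each coordinate of $R$, exactly as the single-predicate procedure does (Algorithm~\ref{alg:Topkquery} sets $I'=[\intervalL-\eps-\delta,\infty)$); with that correction both directions go through and the $-2\eps-2\delta$ bound matches the lemma. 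The discrepancy appears to be a typo in the paper's description of $R$ rather than an error in your reasoning, but as written your completeness step has a gap.
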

The correctness follows by straightforwardly extending the proof of Lemma~\ref{lem:tech3}.

\paragraph{Analysis}
\begin{lemma}
    The data structure has $O(\eps^{-\conj d+\conj}\setsize\log^{\conj-1}\setsize)$ space and can be constructed in\\ $O(\eps^{-d+1}\sum_{i\in[\setsize]}\timeSynop_{\Synop_{\points_i}} +\eps^{-\conj d+\conj}\setsize\log^{\conj-1}\setsize)$ time.
\end{lemma}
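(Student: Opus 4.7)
\smallskip\noindent\textbf{Proof plan.} The plan is to bound separately the number of $\conj$-subsets of the $\eps$-net, the cost of filling in the scores $\gamma^{(i)}_{\vector}$, and the cost of each $\conj$-dimensional range tree, and then add them up.

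First, I would recall from the preliminaries that $|\net|=O(\eps^{-d+1})$, so the number of unordered $\conj$-subsets $V\subseteq \net$ satisfies
\[
\binom{|\net|}{\conj}\;=\;O\!\bigl(\eps^{-\conj(d-1)}\bigr)\;=\;O\!\bigl(\eps^{-\conj d+\conj}\bigr),
\]
since $\conj$ is treated as a (constant) parameter. Next, I would argue that the scalar values $\gamma^{(i)}_{\vector}$ are computed once per pair $(\vector,i)\in\net\times[\setsize]$: invoking $\Synop_{\points_i}.\mathsf{Score}(\vector,k)$ takes $\timeSynop_{\Synop_{\points_i}}$ time, so over all $|\net|=O(\eps^{-d+1})$ vectors the total preprocessing cost of the scores is $O\!\bigl(\eps^{-d+1}\sum_{i\in[\setsize]}\timeSynop_{\Synop_{\points_i}}\bigr)$, and we store them in a simple two-dimensional table so that any $\gamma^{(i)}_{\vector}$ can be looked up in $O(1)$ when assembling the sets $B_V$.

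Then, for each of the $O(\eps^{-\conj d+\conj})$ subsets $V$, the point set $B_V$ consists of $\setsize$ points in $\Re^{\conj}$ and is constructed in $O(\conj\cdot\setsize)=O(\setsize)$ time by looking up the precomputed scores. Using the standard bound recalled in Section~\ref{sec:prelim}, a static $\conj$-dimensional range tree on $\setsize$ points occupies $O(\setsize\log^{\conj-1}\setsize)$ space and is built in $O(\setsize\log^{\conj-1}\setsize)$ time. Summing over all $V$ gives
\[
O\!\bigl(\eps^{-\conj d+\conj}\cdot\setsize\log^{\conj-1}\setsize\bigr)
\]
for both the total space and the total tree-construction time. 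Adding the cost of computing the scores yields the stated construction time.

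The only mildly non-routine step is verifying that counting the $\conj$-subsets gives exactly the factor $\eps^{-\conj d+\conj}$ rather than $\eps^{-\conj(d-1)}\cdot(\text{polylog})$ terms; but since $\conj=O(1)$, the binomial coefficient is $O(|\net|^\conj)=O(\eps^{-\conj(d-1)})=O(\eps^{-\conj d+\conj})$, absorbing all constants that depend only on $\conj$. No other step requires more than a direct appeal to the range-tree and $\eps$-net complexities already stated in the paper.
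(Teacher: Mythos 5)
Your proposal is correct and follows essentially the same route as the paper's proof: bound $|\net|=O(\eps^{-d+1})$, charge $\timeSynop_{\Synop_{\points_i}}$ per pair $(\vector,i)$ to get the score-computation term, count the $O(\eps^{-\conj d+\conj})$ subsets of $\conj$ net vectors, and multiply by the $O(\setsize\log^{\conj-1}\setsize)$ space/build cost of each static $\conj$-dimensional range tree. The only addition is your explicit remark about tabulating the scores for $O(1)$ lookup when assembling each $B_V$, which the paper leaves implicit and which is anyway dominated by the tree-construction cost.
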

\begin{proof}
    We have $|\net|=O(\eps^{-d+1})$ and each value $\gamma^{(i)}_{\vector}$ is computed in $\timeSynop_{\Synop_{\points_i}}$ time.
    Hence all points $\bigcup_{\vector\in \net}\Gamma_{\vector}$ are computed in $O(\eps^{-d+1}\cdot \sum_{i\in[\setsize]}\timeSynop_{\Synop_{\points_i}})$ time. Furthermore, each tree $\tree_{V}$ is computed in $O(\setsize\log^{\conj-1} \setsize)$ time and it uses $O(\setsize\log^{\conj-1}\setsize)$ space. There are $O(\eps^{-\conj d +\conj})$ subsets of $\conj$ vectors in $\net$.

\end{proof}

\begin{lemma}
    For $\conj>1$, the query time is $O(\conj\eps^{-d+1}+\log^{\conj-1} (\setsize) +\out)$.
\end{lemma}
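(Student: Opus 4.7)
The plan is to bound the query time by decomposing it into the two distinct stages of the query procedure and summing their costs. Stage one consists of locating, for each of the $\conj$ query unit vectors $\vec{u}^{(\ell)}$, the closest vector $\vector_\ell$ in the precomputed $\eps$-net $\net$. Stage two is a single $\conj$-dimensional orthogonal range reporting query on the static range tree $\rangetree_V$ associated with the subset $V=\{\vector_1,\ldots,\vector_\conj\}$ identified in stage one.

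For stage one, I would use that $|\net|=O(\eps^{-d+1})$ (as noted in the preliminaries), so a simple linear scan through $\net$ finds the nearest vector to a single $\vec{u}^{(\ell)}$ in $O(\eps^{-d+1})$ time. Repeating this for each $\ell\in[\conj]$ yields a total cost of $O(\conj \cdot \eps^{-d+1})$. Note that no more sophisticated nearest-neighbor structure is needed since this bound already matches the first term of the claim.

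For stage two, once $V$ is determined, the query becomes a standard orthogonal range reporting query on the set $B_V$ of $\setsize$ points in $\Re^\conj$, with query rectangle $R=[a_{\interval^{(1)}}-\eps,1]\times\cdots\times[a_{\interval^{(\conj)}}-\eps,1]$. Using the standard $\conj$-dimensional range tree with fractional cascading at the innermost level (as discussed in the preliminaries on range trees), this query runs in $O(\log^{\conj-1}(\setsize)+\out)$ time, where $\out$ is the output size.

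Adding the two bounds produces $O(\conj\eps^{-d+1}+\log^{\conj-1}(\setsize)+\out)$, as claimed. The analysis is essentially routine; the only subtlety worth emphasizing is invoking the fractional cascading variant of the range tree so that the polylogarithmic factor is $\log^{\conj-1}\setsize$ rather than the naive $\log^{\conj}\setsize$. Since $\rangetree_V$ is built statically (no updates occur between construction and query), this optimization is available.
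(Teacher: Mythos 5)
Your proof is correct and follows essentially the same two-stage decomposition as the paper's own argument: $O(\conj\cdot\eps^{-d+1})$ time to find the nearest net vectors $V$, plus one reporting query on $\rangetree_V$ costing $O(\log^{\conj-1}(\setsize)+\out)$. Your explicit mention of fractional cascading to justify the $\log^{\conj-1}\setsize$ factor is a welcome (and standard) clarification that the paper leaves implicit.
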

\begin{proof}
The query procedure spends $O(\conj\cdot\eps^{-d+1})$ time to compute $V=\{\vector_1,\ldots,\vector_\conj\}$. Then, it runs a reporting query on $\rangetree_{V}$ that takes $O(\log^{\conj-1} (\setsize)+\out)$ time. The result follows.
\end{proof}
We set $\eps\leftarrow \eps/2$ and we get the final result.
\begin{theorem}
\label{thm:resTopkAddConj}
Let $\{\Synop_{\points_1},\ldots,\Synop_{\points_\setsize}\}$ be the input to the $\problemDIk$ problem, such that $\Synop_{\points_i}$ is a synopsis of a dataset $\points_i\subset \Re^d$, where $d\geq 1$ is a constant, with $\err_{\Synop_{\points_i}}(\vec{\F}^d_{k})\leq \delta$ for $\delta\in[0,1)$ and $k\geq 1$, for every $i\in[\setsize]$. Let $\eps\in(0,1)$ and $\conj>1$ be two parameters.
A data structure of size $O(\eps^{-\conj d+\conj}\setsize\log^{\conj-1}\setsize)$ can be constructed in $O(\eps^{-d+1}\sum_{i\in[\setsize]}\timeSynop_{\Synop_{\points_i}} +\eps^{-\conj d+\conj}\setsize\log^{\conj-1}\setsize)$ time, such that, 
given $\conj$ query unit vectors $\vec{u}^{(1)},\ldots, \vec{u}^{(\conj)}\in\Re^d$ and $\conj$ thresholds $a_{\interval^{(1)}},\ldots, a_{\interval^{(\conj)}}\geq 0$,
it returns a set of indexes $J$ such that $\query_\Pi(\sets)\subseteq J$, where $\Pi=\pred_{\M_{\vec{u}^{(1)},k},\interval^{(1)}}\land\ldots\land \pred_{\M_{\vec{u}^{(\conj)},k},\interval^{(\conj)}}$,
and for every $j\in J$,
$\M_{\vec{u}^{(\ell)},k}(\points_j)\geq a_{\interval^{(\ell)}}-\eps-\delta$ for every $\ell\in[\conj]$.
The query time is $O(\conj\eps^{-d+1}+\log^{\conj-1} (\setsize) +\out)$.
\end{theorem}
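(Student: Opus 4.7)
The plan is to lift the single-predicate data structure of Theorem~\ref{thm:resTopkAdd} to the conjunctive setting by building one $\conj$-dimensional range tree for every ordered $\conj$-tuple of vectors from the $\eps$-net, so that a conjunctive query reduces to a single rectangular range reporting query on the tuple of nearest net vectors.

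First I would construct the $\eps$-net $\net \subset \sphere^{d-1}$ of size $O(\eps^{-d+1})$ and, for every $\vector \in \net$ and every $i \in [\setsize]$, call $\gamma^{(i)}_{\vector} = \Synop_{\points_i}.\mathsf{Score}(\vector,k)$; this costs $O(\eps^{-d+1}\sum_{i}\timeSynop_{\Synop_{\points_i}})$ time and exactly matches the preprocessing term of the claim. Then, for every ordered tuple $V=(\vector_1,\ldots,\vector_\conj) \in \net^\conj$, I form the set $B_V=\{(\gamma^{(i)}_{\vector_1},\ldots,\gamma^{(i)}_{\vector_\conj}) \mid i \in [\setsize]\} \subset \Re^{\conj}$ with the index $i$ attached to each point, and build a static $\conj$-dimensional range tree $\rangetree_V$ via $\mathsf{RangeTreeConstruct}(B_V,\emptyset)$. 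There are $O(\eps^{(-d+1)\conj}) = O(\eps^{-\conj d+\conj})$ such tuples and each tree uses $O(\setsize\log^{\conj-1}\setsize)$ space and preprocessing, giving the stated $O(\eps^{-\conj d+\conj}\setsize\log^{\conj-1}\setsize)$ bounds.

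For a query $\Pi = \bigwedge_{\ell=1}^{\conj}\pred_{\M_{\vec{u}^{(\ell)},k},\interval^{(\ell)}}$, I would in $O(\conj\eps^{-d+1})$ time find, for each $\ell$, the net vector $\vector_\ell = \argmin_{\vec h \in \net}\|\vec u^{(\ell)}-\vec h\|$, set $V=(\vector_1,\ldots,\vector_\conj)$, and issue one reporting query on $\rangetree_V$ with the axis-aligned box $R = [a_{\interval^{(1)}}-\eps-\delta,\infty)\times\cdots\times[a_{\interval^{(\conj)}}-\eps-\delta,\infty)$, reporting the attached index of every returned point. Correctness uses the single-predicate argument coordinatewise: by Lemma~\ref{lem:helper00} and the definition of a synopsis, $|\gamma^{(i)}_{\vector_\ell} - \score_k(\points_i,\vec u^{(\ell)})| \leq \eps+\delta$ for every $i$ and $\ell$, so if every $\pred_{\M_{\vec u^{(\ell)},k},\interval^{(\ell)}}(\points_i)$ is \textsf{True}, then every coordinate of $(\gamma^{(i)}_{\vector_1},\ldots,\gamma^{(i)}_{\vector_\conj})$ lies in $R$ and $i$ is reported; conversely, if $i$ is reported, then $\M_{\vec u^{(\ell)},k}(\points_i) \geq a_{\interval^{(\ell)}}-2\eps-2\delta$ for every $\ell$, and replacing $\eps$ by $\eps/2$ in the construction yields the slack stated in the theorem. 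Since each index $i$ maps to a single point in $B_V$, no duplicates arise.

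The main obstacle I anticipate is \emph{bounding the per-coordinate error without blowup across the $\conj$ predicates}: a naive argument that combines all $\conj$ vector approximations at once could inflate the slack by a factor of $\conj$. The key observation that avoids this is that the $\ell$-th coordinate of a point in $B_V$ depends only on the single net vector $\vector_\ell$ and only on $\Synop_{\points_i}$'s estimate at $\vector_\ell$, so the error analysis of Theorem~\ref{thm:resTopkAdd} applies independently to each coordinate. The remaining routine parts are: (i) the tuple-indexing over $\net^\conj$ so that the query can retrieve $\rangetree_V$ in $O(1)$ (e.g., via a perfect hash on $\conj$-tuples of net IDs, which is constructible within the stated preprocessing bound); and (ii) verifying that the $\conj$-dimensional static range tree query time of $O(\log^{\conj-1}\setsize + \out)$ combines additively with the $O(\conj\eps^{-d+1})$ nearest-vector search to match the claimed query time exactly.
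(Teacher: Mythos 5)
Your proposal matches the paper's proof essentially verbatim: the same $\eps$-net, the same precomputed scores $\gamma^{(i)}_{\vector}$, one $\conj$-dimensional range tree per combination of $\conj$ net vectors, and a single orthogonal range-reporting query after snapping each $\vec{u}^{(\ell)}$ to its nearest net vector, with correctness argued coordinatewise via Lemma~\ref{lem:helper00} exactly as the paper extends Lemma~\ref{lem:tech3}. The only immaterial deviations are your use of ordered tuples rather than subsets of net vectors (same asymptotic count) and a final slack of $a_{\interval^{(\ell)}}-\eps-2\delta$ instead of the stated $-\eps-\delta$, a constant-factor bookkeeping point consistent with the paper's own single-predicate Theorem~\ref{thm:resTopkAdd}.
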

While in Theorem~\ref{thm:resTopkAddConj} we only mention conjunction of predicates, the same result holds if we have any disjunction/conjunction of $\conj$ predicates.
If we assume that $\eps$ is an arbitrarily small constant and $\conj$ is a constant then the data structure has size $O(\setsize\log^{\conj-1}\setsize)=\O(N)$, can be constructed in 
$O(\sum_{i\in[\setsize]}\timeSynop_{\Synop_{\points_i}} +\setsize\log^{\conj-1}\setsize)=\O(\sum_{i\in[\setsize]}\timeSynop_{\Synop_{\points_i}}+\setsize)$ time, and it has $O(\log^{\conj-1}(\setsize)+\out)=\O(1+\out)$ query time.

\paragraph{Remark 1} The data structure can be made dynamic under insertion or deletion of synopses. If a synopsis $\Synop_\points$ is inserted/deleted
we can update our data structure in $O(\eps^{-d+1}\timeSynop_{\Synop_{\points}}+\eps^{-\conj d+\conj}\log^{\conj}\setsize)$ time. 

\paragraph{Remark 2} If we do not know an upper bound $\delta$ on $\err_{\Synop_{\points_i}}(\vec{\F}^d_{k})$, then by slightly modifying the data structure above we can get the following result. Let $\err_{\Synop_{\points_i}}(\vec{\F}^d_{k})\leq \delta_i$ for unknown parameters $\delta_i\in[0,1)$.
Having the same complexities as in
Theorem~\ref{thm:resConj}, we report a set of indexes $J$ such that, if $\M_{\vec{u}^{(\ell)},k}(\points_i)\geq a_{\interval^{(\ell)}}+\eps+\delta_i$ for every $\ell\in[\conj]$, then $i\in J$, and if $j\in J$, then $\M_{\vec{u}^{(\ell)},k}(\points_j)\geq a_{\interval^{(\ell)}}-\eps-\delta_j$, for every $\ell\in[\conj]$.

\end{document}